\newcommand*{\ol}{\overline}
\newcommand*{\bbQ}{\mathbb Q}
\newcommand*{\bbR}{\mathbb R}
\newcommand*{\bbZ}{\mathbb Z}
\newcommand*{\cF}{\mathcal{F}}
\newcommand*{\cN}{\mathcal{N}}
\newcommand*{\cS}{\mathcal{S}}
\newcommand*{\N}{\mathbb{N}}
\newcommand*{\Z}{\mathbb{Z}}
\newcommand*{\R}{\mathbb{R}}
\DeclareMathOperator*{\essinf}{ess\,inf}
\newcommand*{\Linfm}{L^{\infty-}} % = \bigcap_{p\in [1,\infty)} L^p
\newcommand*{\Ltwop}{L^{2+}} %  = \bigcup_{\varepsilon>0} L^{2+\varepsilon}
\newcommand*{\lbeta}{\bar{\beta}}
\newcommand*{\leta}{\bar{\eta}}
\newcommand*{\lalpha}{\bar{\alpha}}
\newcommand*{\lYtwo}{\bar{y}}
\newcommand*{\lYo}{\bar{Y}^{(o)}} % limit of odd subsequence
\newcommand*{\lYe}{\bar{Y}^{(e)}} % limit of even subsequence
\newcommand{\be}{\begin{eqnarray*}}
\newcommand{\ee}{\end{eqnarray*}}
\newcommand{\ben}{\begin{eqnarray}}
\newcommand{\een}{\end{eqnarray}}
\newcommand{\bi}{\begin{itemize}}
\newcommand{\ei}{\end{itemize}}
\newtheorem{theo}{Theorem}[section]
\newtheorem{lemma}[theo]{Lemma}
\newtheorem{propo}[theo]{Proposition}
\newtheorem{corollary}[theo]{Corollary}
\theoremstyle{definition}
\newtheorem{ex}[theo]{Example}
\newtheorem{remark}[theo]{Remark}
\title{Optimal trade execution in an order book model with stochastic liquidity parameters}
\author{Julia Ackermann\thanks{Institute of Mathematics, University of Gie{\ss}en, Arndtstr.~2, 35392 Gießen, Germany.
\emph{Email:} julia.ackermann@math.uni-giessen.de, \emph{Phone:} +49 (0)641 9932113.}
\and Thomas Kruse\thanks{Institute of Mathematics, University of Gie{\ss}en, Arndtstr.~2, 35392 Gießen, Germany.
\emph{Email:} thomas.kruse@math.uni-giessen.de, \emph{Phone:} +49 (0)641 9932102.}
\and Mikhail Urusov\thanks{Faculty of Mathematics, University of Duisburg-Essen, Thea-Leymann-Str.~9, 45127 Essen, Germany.
\emph{Email:} mikhail.urusov@uni-due.de, \emph{Phone:} +49 (0)201 1837428.
}}
\begin{document}

\maketitle

\begin{abstract}
We analyze an optimal trade execution problem in a financial market with stochastic liquidity.
To this end we set up a limit order book model in which both order book depth and resilience evolve randomly in time.
Trading is allowed in both directions and at discrete points in time.
We derive an explicit recursion that, under certain structural assumptions, characterizes minimal execution costs.
We also discuss several qualitative aspects of optimal strategies, such as existence of profitable round trips or closing the position in one go,
and compare our findings with the literature.

\smallskip
\emph{Keywords:}
optimal trade execution;
limit order book;
stochastic order book depth;
stochastic resilience;
discrete-time stochastic optimal control;
long time horizon limit;
profitable round trip;
premature closure.

\smallskip
\emph{2020 MSC:}
Primary: 91G10; 93E20.
Secondary: 60G99.
\end{abstract}

\section*{Introduction}

Market liquidity describes the extent to which buying (resp.\ selling) an asset moves the price against the buyer (resp.\ seller). 
In an illiquid financial market large orders have a substantial adverse effect on the realized prices. 
Typically, this effect is not constant over time. Temporal variations of liquidity are partly driven by deterministic trends such as intra-day patterns. In addition, there exist random changes in liquidity such as liquidity shocks that superimpose the deterministic evolution. To benefit from times when trading is cheap, institutional investors continuously monitor the available liquidity and
schedule their order flow accordingly.
The scientific literature on optimal trade execution problems deals with the optimization of trading schedules, when an investor faces the task of closing a position in an illiquid market.
Incorporating random fluctuations of liquidity into models of optimal trade execution constitutes a highly active field of research
(see, e.g.,
\cite{ackermann2020cadlag,
almgren2012optimal,
ankirchner2020optimal,
ankirchner2014bsdes,
ankirchner2015optimal,
bank2018linear,
cheridito2014optimal,
fruth2019optimal,
graewe2017optimal,
graewe2015non,
graewe2018smooth,
horst2016constrained,
horst2019multi,
kruse2016minimal,
popier2019second,
schied2013control}
and references therein; see below for an extended literature discussion).

In this work we analyze a trade execution problem in a financial market model with linear stochastic price impact and stochastic resilience. To be more specific, we consider a block-shaped limit order book, where liquidity is uniformly distributed to the left and to the right of the mid-price. 
To account for stochastic liquidity, the depth of the order book is allowed to vary randomly in time.
At initial time $0$ the investor observes the current order book depth $1/\gamma_0> 0$
but has no precise knowledge about the order book depth at future times (only a probabilistic assessment).
If the investor executes a trade\footnote{We allow for both buy ($\xi\ge 0$) and sell ($\xi\le 0$) orders.} of size $\xi_0\in \R$ at time $0$, 
she incurs costs of size $\frac{\gamma_0}{2} \xi^2_0$. Moreover, the trade of size $\xi_0$ shifts the mid-price of the order book by $\gamma_0 \xi_0$. Observe that this deviation is positive if and only if $\xi_0>0$, i.e., if $\xi_0$ is a buy order. In the period from time $0$ to the next trading time $1$ this deviation changes from 
$\gamma_0 \xi_0$ to $D_{1-}=\beta_1\gamma_0 \xi_0$, where $\beta_1>0$ is a positive stochastic factor (unknown to the investor at time $0$). The factor $\beta_1$ describes the resilience of the order book: if $\beta_1$ is close to $0$ the order book nearly fully recovers from the trade $\xi_0$, whereas if $\beta_1$ is close to $1$ the impact of $\xi_0$ persists. We highlight here that 
we do not exclude the case, where the event $\{\beta_1>1\}$ has positive probability,
which would reflect a possibility of self-exciting behavior of the market impact.
At time $1$ the value of $\beta_1$ is disclosed to the investor. Moreover, she observes the updated order book depth $1/\gamma_1>0$. 
Based on this information the investor executes a trade of size $\xi_1$ which generates costs $(D_{1-}+\frac{\gamma_1}{2}\xi_1)\xi_1$ and moves the deviation to $D_{1-}+\gamma_1\xi_1$. By continuing this sequence of operations to arbitrary trading times $k\in \N$ we thus obtain our financial market model with stochastic price impact (described by a positive process $\gamma=(\gamma_k)_{k\in \N_0}$) and stochastic resilience (described by a positive process $\beta=(\beta_k)_{k\in \N_0}$).

In this financial market we consider an investor who has to close a financial position of size $x\in \R$ up to a given time $N\in \N$. We assume that the investor is risk-neutral and aims at minimizing the overall trading costs. Apart from some technical integrability conditions we do not a priori impose any restrictions on trading strategies of the investor. In particular, even if the task is to sell a certain amount of assets (i.e., $x>0$), we allow for trading strategies where the investor buys assets at some points in time.

\medskip
The above description of the model highlights that our setting is a certain discrete-time formulation within the class of limit order book models,
where the liquidity parameters are stochastic
(i.e., both the price impact and the resilience are positive random processes).
The approach to mathematically model liquidity via order book considerations was initiated in
\cite{alfonsi2008constrained},
\cite{alfonsi2010optimal},
\cite{alfonsi2010boptimal},
\cite{obizhaeva2013optimal} and
\cite{predoiu2011optimal}.
Limit order book models with deterministically time-varying liquidity are studied in
\cite{alfonsi2014optimal},
\cite{bank2014optimal} and
\cite{fruth2014optimal},
while stochastic liquidity is discussed in
\cite{fruth2019optimal}.
We point out the following essential differences between our current setting and the settings in the aforementioned papers.
\begin{enumerate}[(a)]
\item
Both in the present paper and in \cite{fruth2019optimal},
$\beta$ and $\gamma$ are random processes, while they are deterministic functions of time in
\cite{alfonsi2014optimal},
\cite{bank2014optimal} and
\cite{fruth2014optimal}.
\item
In \cite{bank2014optimal},
\cite{fruth2014optimal} and
\cite{fruth2019optimal},
execution strategies are constrained in one direction,
while trading in both directions is allowed
in the present paper and in \cite{alfonsi2014optimal}.
\item
In \cite{alfonsi2014optimal},
\cite{bank2014optimal},
\cite{fruth2014optimal} and
\cite{fruth2019optimal},
the resilience process (or function)
$\beta$ is assumed to be $(0,1)$-valued,
while we only require it to be positive in the present paper.
\end{enumerate}
In our setting we encounter several new qualitative effects,
which are briefly mentioned below and discussed
in more detail in the main body of the paper.
Moreover, for each of these effects, we identify its reason 
by constructing pertinent examples.

We also mention \cite{ackermann2020cadlag}, which is a continuous-time counterpart of our present paper.
Notice that the discrete-time optimization problem considered in the present paper can be embedded into the continuous-time framework of \cite{ackermann2020cadlag},
but then it would become an additionally constrained optimization problem, where the strategies are restricted to trade only at discrete time points.
We stress that \cite{ackermann2020cadlag} does not solve such a problem (it deals with continuous time only), and both papers rather concentrate on studying different questions.
Moreover, \cite{ackermann2020cadlag} does not study the qualitative effects mentioned in the previous paragraph (and discussed below);
instead we need to work with a challenging quadratic backward stochastic differential equation (BSDE) in \cite{ackermann2020cadlag} and extend the continuous-time problem to incorporate execution strategies of infinite variation.
In particular, some of the results of the present paper are required in \cite{ackermann2020cadlag} to derive the appropriate problem formulation and the mentioned quadratic BSDE as continuous-time limits of the corresponding discrete-time objects. 
In this connection it is worth noting that, as explained in Appendix~A of \cite{ackermann2020cadlag} in detail,
when trading frequencies increase, the deviation process and the costs from discrete time converge to
continuous-time counterparts that include additional terms in comparison with the usual formulations.
We also refer to the introduction of \cite{ackermann2020cadlag} for a detailed discussion of the importance of these additional terms.

\medskip
In Theorem~\ref{thm:mainres} we show that the optimal trading strategies and the minimal expected trading costs are characterized by a single stochastic process $Y=(Y_n)_{n\in \Z \cap (-\infty,N]}$ which is defined via a backward recursion. We prove Theorem~\ref{thm:mainres} by means of dynamic programming. To this end we put the trade execution problem into a dynamic framework and allow for arbitrary initial times $n\in \Z \cap (-\infty,N]$, arbitrary initial positions $x\in \R$ and arbitrary initial market deviations $d\in \R$. In this setting we show that the minimal expected overall execution costs amount to
\begin{equation}\label{eq:val_fct_intro}
V_n(x,d)=\frac{Y_n}{\gamma_n}(d-\gamma_n x)^2-\frac{d^2}{2\gamma_n}.
\end{equation}
In particular, for each $n\in \Z \cap (-\infty,N]$ it follows that the random variable $2Y_n$ takes values in $(0,1]$ and describes
to which percentage the costs of closing one unit $x=1$ at time $n$ immediately can be reduced by executing this position optimally over $\{n,\ldots,N\}$ (given no initial market deviation $d=0$). Accordingly, if $Y_n$ is close to $1/2$ it is nearly optimal to close the position immediately in one go, whereas if $Y_n$ is close to $0$ it pays off to split the position and to put only a small fraction in the market at time $n$.

\medskip
In the remainder of the article we discuss several qualitative and quantitative properties of our market model and the trade execution problem. For instance, we analyze whether our financial market admits price manipulation (in the sense of Huberman and Stanzl \cite{huberman2004price},
see also \cite{alfonsi2010boptimal} or \cite{gatheral2010no}).
A financial market is said to admit price manipulation if there exist round trip strategies (i.e., execution strategies that start in the initial position $x=0$) that generate profits in expectation. It follows immediately from \eqref{eq:val_fct_intro} that if there is no initial market deviation (i.e., $d=0$), then the market does not admit price manipulation. However, for general $d\in \R$ we have that
$V_n(0,d)=\frac{d^2}{\gamma_n}(Y_n-\frac{1}{2})$ and thus in the case $d\neq 0$ there exist profitable round trips starting at time $n$ if and only if $Y_n<\frac{1}{2}$. We show that if the investor has a directional view on the resilience process at time~$n$
(i.e., $E[\beta_{n+1}|\mathcal F_n]\neq 1$, where $\cF_n$ represents the information available at time~$n$), then she can exploit the information $d\neq 0$ and construct profitable round trips (see Corollary~\ref{propo:propY} and the subsequent discussion). This is in line with the results in \cite{fruth2014optimal}
and \cite{fruth2019optimal}, where $\beta$ is assumed to take values in $(0,1)$.
Interestingly, in our model profitable round trips with $d\ne0$ can in general exist even on a part of the event $\{E[\beta_{n+1}|\cF_n]=1\}$
and, moreover, even when there is no directional view on the resilience  in all future time points (see Example~\ref{ex:11072019a2}).

A further interesting effect that appears because we do not restrict the process $\beta$ to take values in $(0,1)$ concerns the question under which conditions it is optimal to close the position in one go. 
We notice that in the settings of \cite{fruth2014optimal} and \cite{fruth2019optimal}, where, in particular, $\beta$ is $(0,1)$-valued, it is never optimal to close the position prematurely (see Proposition~A.3 in \cite{fruth2019optimal}).
On the contrary, in our setting, closing the position prematurely can be optimal even with deterministic $\beta$ and $\gamma$ (Example~\ref{ex:01062019a1}) but is never optimal with the additional restriction for $\beta$ to be $(0,1)$-valued (Proposition~\ref{rem:close_immed}).
Moreover, in the situation when closing the position prematurely is optimal, it can either be optimal to build up a new position at the next time point (Example~\ref{ex:01062019a1}) or not to trade any longer (the latter happens on the event $\{Y_n=\frac12\}$, see Proposition~\ref{prop:01062019a1} and Corollary~\ref{propo:propY}).
On the other hand, when we allow for stochastic $\beta$ and $\gamma$, closing the position prematurely can be optimal even with $(0,1)$-valued $\beta$ (Example~\ref{ex:13072019a1}).
We, finally, notice that the difference between the latter statement and the mentioned Proposition~A.3 in \cite{fruth2019optimal}
is due to the fact that, in contrast to our current setting, in \cite{fruth2019optimal} the trading is constrained only in one direction.
We refer to Table~\ref{tab:09062020a1} in Section~\ref{sec:1go} for a more detailed discussion.

Furthermore, we address the question of how much better in comparison to the immediate position closure the investor can perform if the time horizon is very large. That is, we analyze the behavior of the random sequence $(Y_n)_{n\in \{\ldots,N-1,N\}}$ as $n\to -\infty$.
If liquidity increases on average (more precisely, if $\gamma$ is a supermartingale) we show that $(Y_n)_{n\in \{\ldots,N-1,N\}}$ converges
a.s.\ and in any $L^p$, $p\in[1,\infty)$, to a $[0,1/2]$-valued random variable as $n\to -\infty$ (Proposition~\ref{prop:08062019a1}). If liquidity decreases on average, then, in general, the limit can fail to exist (Lemma~\ref{lemma:Ydoesnotconverge}). In a more specific setting, where the multiplicative increments of the price impact $\eta_{k+1}=\gamma_{k+1}/\gamma_k$ and the resilience factor $\beta_{k+1}$ are independent of the history up to time $k$ and their expectations are homogeneous in time, the limit of $(Y_n)_{n\in \{\ldots,N-1,N\}}$ as $n\to-\infty$ exists, is deterministic and
 can be identified explicitly (Proposition~\ref{propo:longtimelimMI}). In particular, we see
that the cost savings can range from $0\%$ (if $E[\beta_{k}]=1$ and $E[\eta_{k}]>1$) to 
$100\%$ (if $E[\beta_{k}]<1$ and $E[\eta_{k}]\le 1$).

\medskip 
This article is organized as follows.
In Section~\ref{sec:tradeexecutionproblemformulation} we introduce the mathematical setting, state the stochastic control problem and provide its financial interpretation.
In Section~\ref{sec:maintheoremsection} we solve the problem via dynamic programming, study the existence of the long-time limit $\lim_{n\to-\infty}Y_n$ of the characterizing process $Y$ and discuss a few technical issues.
A subsetting where $Y$ becomes deterministic is examined in Section~\ref{sec:PIMI}.
In Section~\ref{sec:roundtrips} we study the existence of profitable round trips and in Section~\ref{sec:1go} we discuss when it is optimal to close the position prematurely; both sections describe several qualitative effects via general statements and examples.
Appendix~\ref{a:proof_mainres} contains the proof of Theorem~\ref{thm:mainres}.
Two simple lemmas on integrability, which we often use in our arguments, are included for convenience in Appendix~\ref{b:integrability}.

\subsection*{Extended literature discussion}

A stream of literature pioneered by \cite{kyle1985continuous} studies the underlying mechanism for the formation of illiquidity. Our paper is part of another branch that takes price impact as exogeneously given. 
In order to embed our paper into related literature on optimal trade execution we divide that branch of the literature into two groups depending on how the price impact is modeled. 
The market models in Group~A  assume that the price impact splits into two components.
There is a temporary (or instantaneous) component that only affects the current trade and a permanent component which affects all future trades equally. The pioneering works
\cite{almgren1999value,almgren2001optimal,bertsimas1998optimal}
assume in a discrete-time framework that both components are proportional to the trade size.
Under this assumption the permanent component has no effect on determining the optimal execution strategy.
Execution strategies in continuous-time models within Group~A have absolutely continuous paths and can therefore be described by their derivatives --- the so-called trading rates (see, e.g., \cite{almgren2003optimal}).
The papers
\cite{schied2009risk,SchSchTeh2010}
include risk aversion into the continuous-time model of Group~A with constant price impact coefficient.
The papers
\cite{bank2017hedging,bank2018linear}
discuss hedging with instantaneous price impact. 
The models of Group~A are extended to nonlinear (mostly power law-shaped) dependencies of the price impact on the trading rates and to random price impact coefficients in, e.g.,
\cite{almgren2003optimal,
almgren2012optimal,
ankirchner2020optimal,
ankirchner2014bsdes,
ankirchner2015optimal,
bank2018linear,
cheridito2014optimal,
dolinsky2020note,
graewe2015non,
graewe2018smooth,
horst2016constrained,
kruse2016minimal,
popier2019second,
schied2013control}.

Motivated by empirical studies of limit order books, the models of Group~B postulate that trades have a transient price impact that decays over time due to resilience effects of the price.
The pioneering works \cite{alfonsi2008constrained,obizhaeva2013optimal} model the price impact via a block-shaped limit order book, where the impact decays exponentially at a constant rate.
Subsequent works within Group~B either extend this framework in different directions or suggest alternative frameworks with similar features.
There is a subgroup of models which include more general limit order book shapes \cite{alfonsi2010optimal,alfonsi2010boptimal,predoiu2011optimal}.
Models in another subgroup extend the exponential decay of the price impact to general decay kernels \cite{alfonsi2012order,gatheral2012transient}.
In the framework with exponential decay kernel \cite{lorenz2013drift} allows a non-martingale dynamics for the unaffected price and discusses the dependence of optimal trade execution strategies on the drift.
Models of Group~B with transient multiplicative price impact have recently been analyzed in \cite{becherer2018optimala,becherer2018optimalb}, whereas \cite{becherer2019stability} contains a stability result for the involved cost functionals. Optimal investment and superreplication in a block-shaped limit order book model with exponential resilience is discussed in \cite{BD_AAP_2019,BD_Bern_2020,BV_SIFIN_2019}.
The present paper and its continuous-time counterpart \cite{ackermann2020cadlag} fall into the subgroup of Group~B that studies time-dependent (possibly stochastic) limit order book depth and resilience \cite{alfonsi2014optimal,bank2014optimal,fruth2014optimal,fruth2019optimal}. The differences between \cite{ackermann2020cadlag,alfonsi2014optimal,bank2014optimal,fruth2014optimal,fruth2019optimal} and our paper were extensively discussed earlier in the introduction, while this literature discussion makes it clear that our paper cannot be assigned to any of the other mentioned subgroups.\footnote{It is worth noting that continuous-time optimization problems within Group~B result in strategies that involve jumps (this allows, in a sense, to profit from resilience). There are, however, several papers that share some features of both Group~A and Group~B.
For instance, \cite{gatheral2010no} analyzes price manipulation in models with general (not time-varying) decay kernels in the class of absolutely continuous strategies.
Optimal control problems in \cite{graewe2017optimal,horst2019multi} arise in the context of trade execution under stochastic resilience (cf.\ Group~B), where the optimization is carried out over absolutely continuous strategies (as in Group~A).
The jumps do not appear due to the form of the cost functionals in \cite{graewe2017optimal,horst2019multi}.}

One might alternatively classify the literature on optimal trade execution depending on the type of mathematics arising. For instance, the papers in Group~A, where the price impact depends linearly on the trading rate, and the papers in Group~B, where the underlying limit order book is block-shaped, lead to linear-quadratic stochastic control problems (which are different for different papers listed above). In particular, this applies to our present paper, but due to the listed essential differences between the settings, the linear-quadratic problems in aforementioned papers cannot imply our results, and a separate analysis, which we perform in the present paper, is required.

\section{A trade execution problem with stochastic market depth and stochastic resilience}
\label{sec:tradeexecutionproblemformulation}

In this section we introduce a financial market model where liquidity varies randomly in time. We first give the comprehensive mathematical formulation of the model and subsequently comment on its financial motivation.

\paragraph{Mathematical formulation}
Let $(\Omega, \mathcal F, (\mathcal F_k)_{k\in \Z}, P)$ be a filtered probability space. 
Denote $\Linfm = \bigcap_{p\in [1,\infty)} L^p(\Omega, \mathcal F, P)$ and $\Ltwop = \bigcup_{\varepsilon>0} L^{2+\varepsilon}(\Omega, \mathcal F, P)$. 
Let $\beta=(\beta_k)_{k\in \Z}$ and $\gamma=(\gamma_k)_{k\in \Z}$ be strictly positive adapted stochastic processes, called the \emph{resilience} and the \emph{price impact} process, respectively. Assume that $\beta_k, \gamma_k  \in \Linfm$ for all $k\in \Z$. Furthermore, it turns out to be convenient to denote the multiplicative increments of $\gamma$ by $\eta_n=\frac{\gamma_n}{\gamma_{n-1}}$, $n\in \Z$.

Let $N\in\N$. 
For $n \in \Z \cap (-\infty,N]$ and $x\in \R$ we call a real-valued adapted stochastic process $\xi=(\xi_k)_{k\in\{n,\ldots, N\}}$ satisfying $x+\sum_{j=n}^N \xi_j=0$ an \emph{execution strategy}. 
We denote by $\mathcal A_n(x)$ the set of all execution strategies $\xi$ with $\xi_k \in \Ltwop$  for all $k \in \{n,\ldots, N\}$. 
For an execution strategy $\xi \in \mathcal A_n(x)$ we call the process $X=(X_k)_{k\in\{n,\ldots, N\}}$ satisfying $X_k=x+\sum_{j=n}^k \xi_j$, $k\in \{n,\ldots, N\}$ the \emph{position path} associated to $\xi$. For $d\in \R$ and $\xi \in \mathcal A_n(x)$ we define the \emph{deviation process}
$D=(D_{k-})_{k \in \{n,\ldots,N\}}$ associated to $\xi$ recursively by
\begin{equation}\label{eq:13052019a1}
D_{n-}=d
\quad\text{and}\quad
D_{k-}=(D_{(k-1)-}+\gamma_{k-1}\xi_{k-1})\beta_k,\quad
k\in\{n+1,\ldots,N\}.
\end{equation}
Note that the process $D=(D_{k-})_{k \in \{n,\ldots,N\}}$ is adapted. 
The value function $V\colon \Omega \times (\Z \cap (-\infty,N]) \times \R \times \R \to \R$ of the control problem is given by
\begin{equation}\label{eq:value_fct}
V_n(x,d)=\essinf_{\xi \in \mathcal A_n(x)} E_n\left[\sum_{j=n}^N \left(D_{j-}+\frac{\gamma_j}{2}\xi_j\right)\xi_j \right], \quad n\in \Z \cap (-\infty,N], x \in \R , d\in \R,
\end{equation}
where the argument $d$ is the starting point of the process $D$
in~\eqref{eq:13052019a1},
and $E_n[\cdot]$ is a shorthand notation for $E[\cdot|\cF_n]$.

\paragraph{Financial interpretation}
The numbers $N\in \N$ and $n \in \Z \cap (-\infty,N]$ specify the end and the beginning of the trading period, respectively. The possible trading times are given by the set $\{n,\ldots, N\}$. The number $x\in \R$ represents the initial position of the agent.
A negative $x<0$ means that the agent has to buy $|x|$ shares over the trading period,
while a positive $x>0$ means that the agent has to sell $x$ shares over the trading period.
For an execution strategy $\xi \in \mathcal A_n(x)$ the value of $\xi_k$ specifies the number of shares traded by the agent at time $k\in \{n,\ldots,N\}$. 
A positive value $\xi_k>0$ means that the agent buys shares, whereas a negative value $\xi_k<0$ corresponds to selling. 
For the associated position path $X$ the value of $X_k$ represents the agent's position at time  $k\in \{n,\ldots,N\}$ \emph{directly after} the trade $\xi_k$. Observe that all position paths satisfy $X_N=0$, i.e., the position is closed after the last trade at time $N$. The process $D$ describes the deviation of the price of a share from the unaffected
price caused by the past trades of the agent 
(see also Remark~\ref{rem:unaffected_price_process_martingale} below where we explain that explicitly including an unaffected price process modelled by a square integrable martingale does not change the control problem~\eqref{eq:value_fct}). 
Given a deviation of size $D_{(k-1)-}$ \emph{directly prior} to the trade at time $k-1$, the deviation directly after a trade of size $\xi_{k-1}$ equals $D_{(k-1)-}+\gamma_{k-1}\xi_{k-1}$. In particular, the change of the deviation is proportional to the size of the trade and the proportionality factor is given by the price impact process $\gamma$. In the language of the literature on optimal trade execution problems our model thus includes a linear price impact. This corresponds to a block-shaped symmetric limit order book, i.e., limit orders are uniformly distributed to the left and to the right of the mid-market price. 
Note that in our idealized model the bid-ask spread is always assumed to be $0$.  
The height of the order book at time $k$ is given by $1/\gamma_k$. In particular, our model allows the height of the limit order book to evolve randomly in time and thereby captures stochastic market liquidity.
Note that since $\gamma$ is positive, a purchase $\xi_k>0$ increases the deviation whereas a sale $\xi_k<0$ decreases it.
In the period after the trade at time $k-1$ and before the trade at time $k$ the deviation changes from $D_{(k-1)-}+\gamma_{k-1}\xi_{k-1}$ to $D_{k-}=(D_{(k-1)-}+\gamma_{k-1}\xi_{k-1})\beta_k$ 
due to resilience effects in the market. 
In the literature on optimal execution the resilience process
$\beta$ is often assumed to take values in $(0,1)$
and describes the speed with which the deviation tends back to zero between two trades, 
where values of $\beta$ close to zero signify a faster reversion to zero. 
In this case, i.e., for $(0,1)$-valued $\beta$, the price impact is usually called transient (cf., e.g., \cite{alfonsi2012order}). 
The case $\beta\equiv 1$ corresponds to permanent impact. 
In our work we assume $\beta$ only to be positive.
A value $\beta_k>1$ describes the effect when the deviation continues to move in the direction of the trade for some time after the trade. 
Note that also $\beta$ evolves randomly in time. In particular, when making a decision about the size of the trade at time $k-1$, the agent, in general, cannot predict the exact impact of this trade on the future price at time $k$ 
(as $\beta_k$ is only $\cF_k$-measurable). 
At each time $k\in \{n,\ldots, N\}$ the liquidity costs  incurred by a trade $\xi_k$ amount to $(D_{k-}+\frac{\gamma_k}{2}\xi_k)\xi_k$. 
This means that the price per share that the agent has to pay  
in addition to the unaffected price equals the mean of the deviation before the trade $D_{k-}$
and the deviation after the trade $D_{k-}+\gamma_{k}\xi_{k}$. 
Control problem~\eqref{eq:value_fct} thus corresponds to minimizing the expected costs of closing an initial position of size $x$ within the trading period $\{n,\ldots, N\}$ given initial deviation $d$,
where the minimization is performed in an extension of block-shaped limit order book models to the case of randomly evolving order book depth and resilience.
For a more detailed description of the ideas behind limit order book models 
we refer to any of the papers
\cite{alfonsi2008constrained,alfonsi2010optimal,alfonsi2010boptimal,obizhaeva2013optimal,predoiu2011optimal}.

\medskip
We conclude this section with some remarks on the well-posedness of the optimal trade execution problem \eqref{eq:value_fct} and a possible extension of the model.

\begin{remark}\label{rem:DinL2+}
Let $n\in \Z \cap (-\infty,N]$, $x, d \in \R$ and $\xi \in \mathcal A_n(x)$. Then for the associated deviation process $(D_{k-})_{k \in \{n,\ldots,N\}}$ it holds that  $D_{k-} \in \Ltwop$ for all $k \in \{n,\ldots, N\}$. 

We prove this claim by induction on $k$. Since $D_{n-}=d$, the claim obviously holds true for $k=n$. Consider the step $\{n,\ldots,N-1\} \ni k-1 \to k \in \{n+1,\ldots,N\}$ and note that by the Minkowski inequality and \eqref{eq:13052019a1}, it is sufficient to show that $D_{(k-1)-}\beta_k \in \Ltwop$ and $\gamma_{k-1}\xi_{k-1}\beta_k  \in \Ltwop$. 
Since $\beta_k \in \Linfm$ and, by the induction hypothesis, $D_{(k-1)-} \in \Ltwop$, Lemma~\ref{lem:inL2+} proves that 
$D_{(k-1)-}\beta_k \in \Ltwop$. For $\gamma_{k-1}\xi_{k-1}\beta_k$ observe first that $\gamma_{k-1}\beta_k \in \Linfm$ since both factors belong to $\Linfm$. Then, recall that $\xi_{k-1} \in \Ltwop$ and apply
Lemma~\ref{lem:inL2+} to obtain that $\gamma_{k-1}\xi_{k-1}\beta_k \in \Ltwop$.
\end{remark}

\begin{remark}\label{rem:valuefctwelldefined}
Note that the value function is well-defined. To show this, we verify that for all $n\in \Z \cap (-\infty,N]$, $x, d \in \R$, $\xi \in \mathcal A_n(x)$ each summand $\left(D_{j-}+\frac{\gamma_j}{2}\xi_j\right)\xi_j$, $j\in\{n,\ldots,N\}$, is integrable.

Since $\gamma_j \in \Linfm$ and $\xi_j\in \Ltwop$, it follows from Lemma~\ref{lem:inL2+} that the product $\gamma_j\xi_j$ is in $\Ltwop$. By Remark \ref{rem:DinL2+}, $D_{j-} \in \Ltwop$ as well. Hence, $D_{j-}$ and $\gamma_j \xi_j$ are square integrable and so is $D_{j-} + \frac{\gamma_j}{2}\xi_j$. Furthermore, $\xi_j$ is square integrable as it is in $\Ltwop$. The Cauchy-Schwarz inequality thus yields the integrability of $\left(D_{j-}+\frac{\gamma_j}{2}\xi_j\right)\xi_j$.
\end{remark}

\begin{remark}\label{rem:formD}
For $n\in \Z \cap (-\infty,N]$, $x, d \in \R$ and $\xi \in \mathcal A_n(x)$ the deviation process $D=(D_{k-})_{k \in \{n,\ldots,N\}}$ associated to $\xi$ is given explicitly by 
\begin{equation}\label{eq:formD}
D_{k-} = d \prod_{l=n+1}^{k} \beta_{l} + \sum_{i=n+1}^{k} \gamma_{i-1} \xi_{i-1} \prod_{l=i}^{k} \beta_{l}, \quad k \in \{n,\ldots,N\}.
\end{equation}
This can be established by induction on $k \in \{n,\ldots,N\}$.
\end{remark}

\begin{remark}\label{rem:unaffected_price_process_martingale}
One can also include an unaffected price process in the model.
Indeed, if the unaffected price process is given by the square integrable martingale $S=\left( S_k \right)_{k \in \Z\cap(-\infty,N]}$, then, for all $n\in \Z \cap (-\infty,N]$, $x\in \R$ and $\xi\in\mathcal{A}_n(x)$,
with the notation $X_{n-1}=x$, we get
$$
E_n\left[ \sum_{j=n}^N S_j \xi_j \right] 
=
E_n\left[ \sum_{j=n}^N S_j (X_j-X_{j-1}) \right]
=
E_n\left[-xS_n-\sum_{j=n}^{N-1} X_j(S_{j+1}-S_j) \right]  
=-xS_n.
$$
It follows that for all $n\in \Z \cap (-\infty,N]$ and $x, d \in \R$ the expected costs generated by an execution strategy $\xi \in \mathcal A_n(x)$ with the deviation process $(D_{k-})_{k \in \{n,\ldots,N\}}$ of~\eqref{eq:13052019a1} satisfy
\begin{equation}\label{eq:expectationinvaluefunctionforunaff}
E_n\left[ \sum_{j=n}^N \left(S_j + D_{j-}+\frac{\gamma_j}{2}\xi_j\right)\xi_j \right] = -x S_n + E_n\left[ \sum_{j=n}^N \left( D_{j-}+\frac{\gamma_j}{2}\xi_j\right)\xi_j \right] .
\end{equation}
Hence, minimizing $E_n\left[ \sum_{j=n}^N \left(S_j + D_{j-}+\frac{\gamma_j}{2}\xi_j\right)\xi_j \right] $ is equivalent
to~\eqref{eq:value_fct}.

The literature on optimal trade execution is primarily concerned with the minimization of implementation costs caused by limited market liquidity
 rather than with the generation of trading gains by exploiting trends in the underlying price process. Therefore, the assumption that the unaffected price process is a martingale is fairly standard in the literature and has already been made in many previous papers (e.g., \cite{alfonsi2008constrained,alfonsi2010optimal,alfonsi2010boptimal,obizhaeva2013optimal,predoiu2011optimal}). There are, however, also some papers that analyze the dependence of optimal trade execution strategies on a possible drift in the underlying unaffected price process (see, e.g., \cite{ankirchner2015optimal,lorenz2013drift}).
\end{remark}

\section{Characterization of minimal costs and optimal strategies}
\label{sec:maintheoremsection}
The following result provides a solution to the stochastic control problem~\eqref{eq:value_fct}.
It shows that the value function and the optimal strategy in~\eqref{eq:value_fct} are characterized by a single process $Y$ that is defined via a backward recursion.

\begin{theo}\label{thm:mainres}
Assume that for all $n\in \Z\cap(-\infty,N]$ we have $\beta_n, \gamma_n, \frac{1}{\gamma_n} \in \Linfm$ and that for all $n\in \Z \cap (-\infty,N-1]$ it holds that $E_n\left[\frac{\beta^2_{n+1}}{\eta_{n+1}}\right]<1$ a.s.\ and,
with $\alpha_n=1-E_n\left[\frac{\beta^2_{n+1}}{\eta_{n+1}}\right]$,
we have $\frac1{\alpha_n}\in \Linfm$. 
Let $(Y_n)_{n \in \Z \cap (-\infty,N]}$ be the process that is recursively defined by $Y_N=\frac{1}{2}$ 
and
\begin{equation}\label{eq:defY}
Y_n=E_n[\eta_{n+1}Y_{n+1}]-\frac{\left(E_n\left[Y_{n+1}\left(\beta_{n+1}-\eta_{n+1}\right) \right]\right)^2}{E_n\left[\frac{Y_{n+1}}{\eta_{n+1}}\left(\beta_{n+1}-\eta_{n+1}\right)^2 +\frac{1}{2}\left(1-\frac{\beta_{n+1}^2}{\eta_{n+1}}\right)\right]}, n\in \Z \cap (-\infty,N-1].
\end{equation}
Then it holds for all $n \in \Z \cap (-\infty,N]$, $x,d \in \R$ that
\begin{equation}\label{eq:sol_valfct}
V_n(x,d)=\frac{Y_n}{\gamma_n}\left(d-\gamma_n x\right)^2-\frac{d^2}{2\gamma_n} \quad \text{ and } \quad 0<Y_n\leq\frac{1}{2}.
\end{equation}
Moreover, for all $x,d \in \R$ the (up to a $P$-null set) unique
optimal trade size is given by
\begin{equation}\label{eq:optstrat}
\xi^*_n(x,d)=\frac{E_n\left[Y_{n+1}\left(\beta_{n+1}-\eta_{n+1}\right) \right]}{E_n\left[\frac{Y_{n+1}}{\eta_{n+1}}\left(\beta_{n+1}-\eta_{n+1}\right)^2+\frac{1}{2}\left(1-\frac{\beta_{n+1}^2}{\eta_{n+1}}\right) \right]}\left(x-\frac d{\gamma_n}\right)-\frac{d}{\gamma_n}, n \in \Z \cap (-\infty,N-1],
\end{equation}
and $\xi_N^*(x,d)=-x$,
and we have $\xi_n^*(x,d)\in\Linfm$
for all $n\in\bbZ\cap(-\infty,N]$ and $x,d\in\bbR$. 

In particular, for all $n\in \Z \cap (-\infty,N]$, $x,d \in \R$ the process $\xi^*=\left(\xi_k^*\right)_{k \in \{n,\ldots,N\}}$ recursively defined by $X_{n-1}^*=x, D_{n-}^*=d$, 
\begin{equation}\label{eq:optprocess}
\begin{split}
\xi_k^*=\xi_k^*\left( X_{k-1}^*, D_{k-}^* \right), \, X_k^*=X_{k-1}^*+\xi_k^*, \, D_{(k+1)-}^*=\left( D_{k-}^* +\gamma_k \xi_k^*\right) \beta_{k+1}, \, k \in \{n,\ldots,N\}
\end{split}
\end{equation}
is a unique optimal strategy in $\mathcal A_n(x)$ for~\eqref{eq:value_fct}.
\end{theo}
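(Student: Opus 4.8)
The plan is to proceed by backward induction on $n$ from $N$ down, establishing simultaneously the formula for $V_n$ in~\eqref{eq:sol_valfct}, the bound $0<Y_n\le\tfrac12$, and the optimality and integrability of $\xi^*_n$. The base case $n=N$ is immediate: $\cA_N(x)$ forces $\xi_N=-x$, so $D_{N-}=d$ and the single summand equals $(d-\tfrac{\gamma_N}{2}x)(-x)$, which one checks equals $\frac{1/2}{\gamma_N}(d-\gamma_N x)^2-\frac{d^2}{2\gamma_N}$; and $Y_N=\tfrac12>0$. For the inductive step, fix $n\le N-1$ and assume the claim at $n+1$. The standard dynamic programming identity gives, for $\xi\in\cA_n(x)$ with first trade $\xi_n$ (an $\Ltwop$, $\cF_n$-measurable random variable),
\[
E_n\!\Big[\textstyle\sum_{j=n}^N(D_{j-}+\tfrac{\gamma_j}{2}\xi_j)\xi_j\Big]
= (d+\tfrac{\gamma_n}{2}\xi_n)\xi_n + E_n\big[V_{n+1}(X_n,D_{(n+1)-})\big],
\]
where $X_n=x+\xi_n$ and $D_{(n+1)-}=(d+\gamma_n\xi_n)\beta_{n+1}$. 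Here one must justify interchanging $\essinf$ over the tail strategies with $E_n[\cdot]$ — this is the measurable-selection / tower-property step, and it is where the $\Ltwop$ and $\Linfm$ integrability bookkeeping (Remarks~\ref{rem:DinL2+}, \ref{rem:valuefctwelldefined} and the appendix lemmas) is needed to make the conditional expectations and the infimum well-defined and to guarantee that the recombined tail strategy lies in $\cA_{n+1}(X_n)$.

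Next I would substitute the induction hypothesis $V_{n+1}(X_n,D_{(n+1)-})=\frac{Y_{n+1}}{\gamma_{n+1}}(D_{(n+1)-}-\gamma_{n+1}X_n)^2-\frac{D_{(n+1)-}^2}{2\gamma_{n+1}}$ and expand. Writing everything in terms of the single scalar $u:=\xi_n$ and using $D_{(n+1)-}-\gamma_{n+1}X_n=\beta_{n+1}(d+\gamma_n u)-\gamma_{n+1}(x+u)$ and $\gamma_{n+1}=\eta_{n+1}\gamma_n$, a direct computation shows the objective is a quadratic in $u$ of the form $Au^2+Bu+C$ with
\[
A = E_n\!\Big[\tfrac{Y_{n+1}}{\eta_{n+1}}(\beta_{n+1}-\eta_{n+1})^2+\tfrac12\big(1-\tfrac{\beta_{n+1}^2}{\eta_{n+1}}\big)\Big]\gamma_n,
\]
and $B,C$ linear resp.\ quadratic in $(x,d)$ with coefficients built from $E_n[\eta_{n+1}Y_{n+1}]$, $E_n[Y_{n+1}(\beta_{n+1}-\eta_{n+1})]$, etc. The key positivity fact is $A>0$ a.s.: since $0<2Y_{n+1}\le1$ we have $\tfrac{Y_{n+1}}{\eta_{n+1}}(\beta_{n+1}-\eta_{n+1})^2\ge0$ and $E_n[\tfrac12(1-\tfrac{\beta_{n+1}^2}{\eta_{n+1}})]=\tfrac12\alpha_n>0$ by hypothesis, so the bracket is bounded below by $\tfrac12\alpha_n>0$; together with $\tfrac1{\alpha_n}\in\Linfm$ this also gives the integrability needed downstream. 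Because $A>0$, the quadratic has a unique minimizer $u^*=-B/(2A)$, which after simplification is exactly~\eqref{eq:optstrat}, and the minimal value is $C-B^2/(4A)$, which after collecting terms is exactly $\frac{Y_n}{\gamma_n}(d-\gamma_n x)^2-\frac{d^2}{2\gamma_n}$ with $Y_n$ as in~\eqref{eq:defY}. This is the bulk of the computation; it is routine but lengthy, so in the write-up I would display the quadratic, quote the vertex formula, and relegate the algebraic verification of the coefficients to a short lemma or a remark.

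It then remains to: (i) verify $0<Y_n\le\tfrac12$ — the inequality $Y_n\le\tfrac12$ follows from $V_n(x,d)\le$ the cost of the immediate-liquidation strategy $\xi_n=-x$ (which gives value $\frac{1/2}{\gamma_n}(d-\gamma_nx)^2-\frac{d^2}{2\gamma_n}$), and $Y_n>0$ follows from $V_n(x,d)>-\frac{d^2}{2\gamma_n}$, itself a consequence of $A>0$ making the minimized quadratic strictly above its "permanent-impact floor"; (ii) confirm $\xi^*_n(x,d)\in\Linfm$ using that the fraction in~\eqref{eq:optstrat} is a ratio of an $\Linfm$ numerator over a denominator bounded below by $\tfrac12\alpha_n$ with $\tfrac1{\alpha_n}\in\Linfm$, plus Lemma~\ref{lem:inL2+}-type closure; and (iii) assemble the dynamic statement: the strategy $\xi^*$ defined by~\eqref{eq:optprocess} is admissible (each $\xi^*_k\in\Linfm\subset\Ltwop$, and $X^*_N=0$ because at $k=N$ the rule gives $\xi^*_N=-X^*_{N-1}$), and it attains $V_n$ at every time by the one-step optimality just proved together with the tower property, while uniqueness follows from strict convexity ($A>0$) of each one-step problem. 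The main obstacle I anticipate is not any single computation but the rigorous dynamic-programming verification in the first step — namely showing the value function genuinely satisfies the Bellman recursion in this conditional-expectation, $\essinf$ formulation over an infinite past horizon, with all integrability side-conditions checked so that no expression is ill-defined; the quadratic optimization itself is mechanical once that scaffolding is in place.
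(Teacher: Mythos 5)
Your overall route is the same as the paper's: backward induction, the dynamic programming identity reducing the step $n+1\to n$ to minimizing an $\cF_n$-measurable quadratic $a_n\xi^2+b_n(x,d)\xi+c_n(x,d)$ with exactly your $A=a_n$, uniqueness from $a_n>0$, and a forward induction to check admissibility of the recursively defined $\xi^*$. Two of your choices differ only cosmetically from the paper: you verify the closed form $V_n(x,d)=\frac{Y_n}{\gamma_n}(d-\gamma_nx)^2-\frac{d^2}{2\gamma_n}$ by brute-force collection of terms in $c_n-\frac{b_n^2}{4a_n}$ (the paper instead uses the translation identity $V_n(x+h,d+\gamma_nh)=V_n(x,d)-(d+\frac{\gamma_n}{2}h)h$ from the DPP to pin down the quadratic via its second derivatives, which is shorter but equivalent), and you get $Y_n\le\frac12$ by comparing with the close-immediately-then-stop strategy, which is exactly what the paper's algebraic step $c_n(1,0)-\frac{b_n(1,0)^2}{4a_n}\le a_n-b_n(1,0)+c_n(1,0)$ amounts to.

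There is, however, one genuine gap: your argument for $Y_n>0$. You claim it follows ``from $V_n(x,d)>-\frac{d^2}{2\gamma_n}$, itself a consequence of $A>0$ making the minimized quadratic strictly above its permanent-impact floor.'' But $A=a_n>0$ only guarantees that the quadratic has a finite minimum $c_n-\frac{b_n^2}{4a_n}$; it says nothing about where that minimum sits relative to $-\frac{d^2}{2\gamma_n}$. The inequality $c_n(x,d)-\frac{b_n(x,d)^2}{4a_n}>-\frac{d^2}{2\gamma_n}$ \emph{is} the statement $Y_n>0$ (take $(x,d)=(1,0)$), so as written the step is circular. It needs a separate argument: the paper applies the conditional Cauchy--Schwarz inequality to $\sqrt{\eta_{n+1}Y_{n+1}}\cdot\sqrt{\tfrac{Y_{n+1}}{\eta_{n+1}}}\,(\beta_{n+1}-\eta_{n+1})$ in the numerator of \eqref{eq:defY} and then uses the strict slack $\tfrac12\alpha_n>0$ in the denominator to conclude
$Y_n\ge \frac{E_n[\eta_{n+1}Y_{n+1}]}{a_n/\gamma_n}\cdot\tfrac12\bigl(1-E_n\bigl[\tfrac{\beta_{n+1}^2}{\eta_{n+1}}\bigr]\bigr)>0$,
using also $Y_{n+1}>0$ from the induction hypothesis. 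This strict positivity is not a side remark: it is part of the theorem's claim and it is what keeps $a_{n-1}>0$ (hence the strict convexity and uniqueness) alive at the next induction step, so the proof does not close without it. The remaining points you flag (justifying the $\essinf$/tower interchange, integrability bookkeeping via the $\Linfm$/$\Ltwop$ lemmas, forward induction for admissibility) are handled at the same level of detail in the paper and are fine.
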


The proof of Theorem~\ref{thm:mainres} is deferred to Appendix~\ref{a:proof_mainres}. 
We now present an example of a reasonably large class of models where the assumptions of Theorem~\ref{thm:mainres} are automatically satisfied, and we discuss tractability of \eqref{eq:defY} and~\eqref{eq:optstrat} in Markovian situations.

\begin{ex}
(i) Let $(a_n)_{n\in\bbZ\cap(-\infty,N]}$ and $(\beta_n)_{n\in\bbZ\cap(-\infty,N]}$ be deterministic strictly positive sequences such that
$$
\frac{\beta_{n+1}^2a_n}{a_{n+1}}<1\quad\text{for all }n\in\bbZ\cap(-\infty,N-1]
$$
(a particular case: $\beta_n\equiv\beta\in(0,1)$ and $(a_n)$ nondecreasing).
Let $(\gamma_n)_{n\in\bbZ\cap(-\infty,N]}$ be given by the formula $\gamma_n=\frac{a_n}{Z_n}$, where $(Z_n)_{n\in\bbZ\cap(-\infty,N]}$ is a strictly positive supermartingale such that $Z_n,\frac1{Z_n}\in L^{\infty-}$ for all $n\in\bbZ\cap(-\infty,N]$. It is straightforward to see that all assumptions of Theorem~\ref{thm:mainres} are satisfied.

\smallskip
(ii) Let $(\beta_n)_{n\in\bbZ\cap(-\infty,N]}$ and $(\gamma_n)_{n\in\bbZ\cap(-\infty,N]}$ satisfy the assumptions of Theorem~\ref{thm:mainres} (a particular case: part~(i) in this example) and possess the structure
$$
\beta_n=f_n(\theta_n)\quad\text{and}\quad\gamma_n=g_n(\theta_n)\quad\text{for all }n\in\bbZ\cap(-\infty,N],
$$
where $f_n$ and $g_n$ are measurable functions and $\theta=(\theta_n)$ is an $(\cF_n)$-Markov process. Then we get from~\eqref{eq:defY} and~\eqref{eq:optstrat} that
$$
Y_n=h_n(\theta_n)\quad\text{and}\quad\xi^*_n(x,d)=k_n(\theta_n)\left(x-\frac d{\gamma_n}\right)-\frac d{\gamma_n}
$$
for all $n\in\bbZ\cap(-\infty,N-1]$ and $x,d\in\bbR$, where the functions $h_n$ and $k_n$ are computed via the integration with respect to the transition kernels of the Markov process $\theta$.
While, in general, this can be a challenging task, which needs to be carried out numerically, in Section~\ref{sec:PIMI} we consider a specific situation which can be treated fully explicitly.
\end{ex}

We can give the following interpretation to the process $Y$ from
Theorem~\ref{thm:mainres}:
Suppose that at time $n\in \Z\cap (-\infty,N]$ the task is to sell $x=1$ share given an initial deviation of $d=0$. Then immediate execution of the share generates the costs $\frac{\gamma_n}2$.
The optimal execution strategy incurs the expected costs
$V_n(1,0)=\gamma_nY_n$ (recall~\eqref{eq:sol_valfct}).
So, the random variable $2Y_n\colon \Omega \to [0,1]$ describes 
to which percentage the costs of selling the unit immediately can be reduced by executing the position optimally.

Next we analyze how much better in comparison to the immediate closure we can do in the long run. 
There are basically two starting points. 
One is to adopt the perspective that 
trading starts at a fixed point in time, e.g., at $n=0$, and that the terminal date $N$ when the position has to be closed is shifted further and further into the future. 
This corresponds to studying the limit of the sequence of random variables $(Y_0^N)_{N \in \N}$ as $N\to \infty$, where $Y^N$ is the process defined as in \eqref{eq:defY} pertaining to the terminal time $N$.\footnote{Note that the filtered probability space $(\Omega, \mathcal F, (\mathcal F_k)_{k\in \Z}, P)$ and the processes $(\gamma_k)_{k\in \Z}$, $(\beta_k)_{k\in \Z}$ do not depend on $N$.}
Observe that the fact that $Y_0^N=V_0^N(1,0)/\gamma_0$ is nonnegative and nonincreasing in $N$, where $V^N$ is the value function belonging to the terminal time $N$, implies that $\lim_{N\to \infty} Y_0^N$ always exists (under the assumptions of Theorem~\ref{thm:mainres}).
Another perspective consists in fixing the terminal time $N$ and asking what would have been if one had started trading earlier. This corresponds to investigating $\lim_{n\to -\infty} Y_n$. 
In some settings
(e.g., in a time-homogeneous deterministic framework or, more generally, in the setting of Proposition~\ref{propo:longtimelimMI} below) one can see that both perspectives coincide by simply relabeling time instances appropriately.
In the next proposition we study the existence of the long-time limit $\lim_{n\to-\infty}Y_n$. 
In contrast to $\lim_{N\to \infty} Y_0^N$ this limit does not always exist (cf.\ Lemma~\ref{lemma:Ydoesnotconverge}).
We further refer to Proposition~\ref{propo:longtimelimMI} and the extensive discussion after it, where a specific framework, in which both perspectives coincide, is treated in more detail.\footnote{We remark that the question of the long-time limit is different from considering the continuous-time limit of the control problem, which corresponds to fixing $N\in \N$ and $n\in \Z \cap (-\infty,N]$ and letting the number of available trading times in $[n,N]$ go to infinity. In \cite{ackermann2020cadlag} we use some of the results of the present paper to derive a quadratic BSDE which describes the continuous-time limit of the process~$Y$.}

\begin{propo}\label{prop:08062019a1}
Let the assumptions of Theorem~\ref{thm:mainres} be in force.
Fix any $p\in[1,\infty)$.

\smallskip
(i) The sequence $(\gamma_nY_n)_{n\in\bbZ\cap(-\infty,N]}$
converges a.s.\ and in $L^p$
as $n\to-\infty$ to a finite nonnegative random variable.

\smallskip
(ii) If $(\gamma_n)_{n\in\bbZ\cap(-\infty,N]}$
is a supermartingale, then the sequence
$(Y_n)_{n\in\bbZ\cap(-\infty,N]}$
converges a.s.\ and in $L^p$
as $n\to-\infty$ to a finite nonnegative random variable.
\end{propo}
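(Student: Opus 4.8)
The plan is to prove part~(i) first by identifying the sequence $(\gamma_n Y_n)_{n\le N}$ as a nonnegative supermartingale-type object, and then deduce part~(ii) by dividing by $\gamma_n$ under the supermartingale assumption. The starting observation is the financial interpretation already noted in the excerpt: $\gamma_n Y_n = V_n(1,0)$, the minimal expected cost of liquidating one share at time $n$ with zero initial deviation. Intuitively, any strategy available from time $n$ is also available from time $n-1$ (first do nothing at $n-1$), so costs should only decrease as we move the starting time back; the obstacle is that $V_n$ is a conditional essential infimum, so this monotonicity has to be phrased as a supermartingale inequality $E_{n-1}[\gamma_n Y_n] \le \gamma_{n-1} Y_{n-1}$ rather than a pathwise inequality.

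First I would establish precisely this: for each $n\in\Z\cap(-\infty,N-1]$,
\begin{equation*}
E_{n-1}\!\left[\gamma_n Y_n\right]\;\le\;\gamma_{n-1}Y_{n-1},
\end{equation*}
i.e.\ $(\gamma_n Y_n)_{n\le N}$ is a nonnegative $(\cF_n)$-supermartingale (when read forward in $n$). The cleanest route is to take the candidate strategy at time $n-1$ that trades $\xi_{n-1}=0$ and then follows the optimal strategy $\xi^*$ for the problem started at time $n$ with position $x=1$ and deviation $d=0$; by~\eqref{eq:optstrat}--\eqref{eq:optprocess} this is admissible and its expected cost from time $n-1$ equals $E_{n-1}[V_n(1,0)]=E_{n-1}[\gamma_n Y_n]$, which by definition of $V_{n-1}$ as an essential infimum is at least $V_{n-1}(1,0)=\gamma_{n-1}Y_{n-1}$. (One must check the integrability needed to invoke the tower property and the admissibility class $\cA_{n-1}(1)$; but $\xi_n^*(x,d)\in\Linfm$ from Theorem~\ref{thm:mainres}, and $Y_n\le\frac12$, $\gamma_n\in\Linfm$, so $\gamma_nY_n\in L^1$, and the remarks on integrability in the excerpt cover the rest.) Alternatively, one can verify the inequality purely from the recursion~\eqref{eq:defY}: since $0<Y_{n+1}\le\frac12$, the subtracted term in~\eqref{eq:defY} is nonnegative, so $Y_n\le E_n[\eta_{n+1}Y_{n+1}]=E_n[\gamma_{n+1}Y_{n+1}]/\gamma_n$, which rearranges to $\gamma_nY_n\le E_n[\gamma_{n+1}Y_{n+1}]$ — exactly the supermartingale inequality, and this is the quicker argument.

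Given that $(\gamma_n Y_n)_{n\le N}$ is a nonnegative supermartingale indexed by $n\to-\infty$, part~(i) follows from the backward martingale convergence theorem: a nonnegative supermartingale converges a.s.\ as $n\to-\infty$ to an integrable, nonnegative random variable. For the $L^p$ convergence I would argue uniform integrability of the relevant powers: from $0<Y_n\le\frac12$ we get $0\le\gamma_nY_n\le\frac12\gamma_n$, and I would want a bound on $\sup_n E[\gamma_n^p]$ or at least uniform integrability of $(\gamma_n^p)_{n\le N}$ — here is where some assumption is needed, and I expect this to be the main technical obstacle, since $\gamma$ is only assumed to have $\Linfm$-marginals, not a uniform bound. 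The resolution is presumably that $\gamma_n Y_n = V_n(1,0)/1$ and, comparing with the suboptimal strategy of immediate liquidation from time $n$ which costs $\gamma_n/2$ but also with strategies that exploit the recursion, one actually controls $\gamma_n Y_n$ by a convergent (hence uniformly integrable in every $L^p$) reference process; concretely, iterating $\gamma_nY_n\le E_n[\gamma_{n+1}Y_{n+1}]$ up to $N$ gives $\gamma_nY_n\le E_n[\gamma_NY_N]=\tfrac12 E_n[\gamma_N]$, and $(E_n[\gamma_N])_{n\le N}$ is a martingale closed by $\gamma_N\in\Linfm\subset L^p$, hence $\{(E_n[\gamma_N])^p:n\le N\}$ is uniformly integrable by Jensen and Doob. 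Thus $(\gamma_nY_n)^p$ is dominated in expectation by a uniformly integrable family, a.s.\ convergence upgrades to $L^p$ convergence, and part~(i) is complete.

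For part~(ii), under the extra hypothesis that $(\gamma_n)_{n\le N}$ is a supermartingale, I would again invoke backward martingale convergence: a nonnegative supermartingale indexed by $n\to-\infty$ converges a.s.\ and, being closed on the right by $\gamma_N\in\Linfm$, in $L^p$ for every $p\in[1,\infty)$ — the $L^p$ convergence of $\gamma_n$ follows because $0\le\gamma_n\le E_n[\gamma_N]$ by the supermartingale property (wait: the inequality for a supermartingale read backward is $\gamma_n\ge E_n[\gamma_{n+1}]$, so instead $\gamma_n\le\gamma_{n-1}$ in conditional mean going further back does not bound $\gamma_n$ above by $\gamma_N$; rather $E[\gamma_n]$ is nondecreasing as $n$ decreases, so $E[\gamma_n]\le E[\gamma_{N_0}]$ only for $n\ge N_0$, which is the wrong direction). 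The correct statement: a backward supermartingale $(\gamma_n)_{n\to-\infty}$ has $E[\gamma_n]$ nonincreasing in $n$ hence bounded by $E[\gamma_N]$ is false; one has $E[\gamma_n]$ \emph{nondecreasing} as $n\to-\infty$. So uniform integrability of $(\gamma_n)$ requires the stronger input that $\sup_{n\le N}E[\gamma_n]<\infty$, equivalently that the backward supermartingale has bounded expectations, which is part of what "supermartingale" plus the structural hypotheses should deliver — I would check this against the standing assumptions, and if needed phrase part~(ii)'s $L^p$ claim using the ratio directly. In any case, writing $Y_n=(\gamma_nY_n)/\gamma_n$: the numerator converges a.s.\ by part~(i), $\gamma_n$ converges a.s.\ to a strictly positive limit (it is a nonnegative backward supermartingale, and strict positivity of the limit follows since $1/\gamma_n\in\Linfm$ prevents the limit from vanishing — e.g.\ $E[1/\gamma_n]$ is controlled, or one uses that $1/\gamma$ is a submartingale-like object), so the quotient converges a.s.\ to a finite nonnegative random variable. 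The $L^p$ convergence of $Y_n$ then comes for free from the already-established a.s.\ convergence together with the uniform bound $0<Y_n\le\frac12$, via bounded convergence — which neatly sidesteps the integrability subtleties for $\gamma$ altogether in part~(ii). I expect the cleanest writeup to present part~(ii) in exactly this order: reduce to a.s.\ convergence, use $\gamma_n\to\gamma_\infty>0$ a.s., divide, then invoke boundedness of $Y_n$ for the $L^p$ statement.
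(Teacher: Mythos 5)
Your proposal is in substance the paper's own proof: for part (i) you extract from the recursion \eqref{eq:defY} the one-step inequality $\gamma_nY_n\le E_n[\gamma_{n+1}Y_{n+1}]$, invoke the backward convergence theorem, and upgrade to $L^p$ via uniform integrability of the $p$-th powers coming from $\gamma_nY_n\le\tfrac12E_n[\gamma_N]$ and Jensen; for part (ii) you write $Y_n=(\gamma_nY_n)/\gamma_n$, use backward convergence of the supermartingale $\gamma$ to a strictly positive limit, and obtain $L^p$ convergence from $0<Y_n\le\tfrac12$ by bounded convergence. Two slips should be fixed. First, a direction/terminology error: the inequality you actually derive (and the one your ``start earlier, trade nothing first'' heuristic also yields) is $\gamma_{n-1}Y_{n-1}\le E_{n-1}[\gamma_nY_n]$, so $(\gamma_nY_n)$ is a (backward) \emph{sub}martingale, not a supermartingale, and your displayed inequality $E_{n-1}[\gamma_nY_n]\le\gamma_{n-1}Y_{n-1}$ is reversed; this does not damage the proof, since $E[\gamma_nY_n]\le E[\gamma_NY_N]<\infty$ and the backward submartingale convergence theorem (plus your UI argument) gives exactly what is needed, but as written the display contradicts the two arguments that follow it. Second, in part (ii) the assertion that $\gamma_{-\infty}>0$ a.s.\ ``since $1/\gamma_n\in\Linfm$'' is not a valid reason: marginal integrability of $1/\gamma_n$ for each $n$ does not preclude $\gamma_n\to0$ on a non-null set. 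The supermartingale hypothesis must be used here; e.g., by conditional Jensen $(1/\gamma_n)$ is a backward submartingale, so $E[1/\gamma_n]\le E[1/\gamma_N]<\infty$ and Fatou gives $1/\gamma_{-\infty}<\infty$ a.s.\ (the paper argues directly via $0=E[\gamma_{-\infty}1_{\{\gamma_{-\infty}=0\}}]\ge E[\gamma_N1_{\{\gamma_{-\infty}=0\}}]$ together with $\gamma_N>0$ a.s.). You should also note that the a.s.\ limit of $\gamma_n$ may be $+\infty$ on a non-null set, which is harmless because the numerator $\gamma_nY_n$ has a finite limit, so the quotient still converges to a finite nonnegative random variable.
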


The assumption that
$(\gamma_n)_{n\in\bbZ\cap(-\infty,N]}$
is a supermartingale in~(ii)
means that the liquidity in the model
increases in time (in average).
In Lemma~\ref{lemma:Ydoesnotconverge} below 
$(\gamma_n)_{n\in\bbZ\cap(-\infty,N]}$
is a submartingale and
$(Y_n)_{n\in\bbZ\cap(-\infty,N]}$
does not converge.
This shows that the claim in~(ii)
does not in general hold in the situation
when the liquidity in the model decreases in time.

\begin{proof}[Proof of Proposition~\ref{prop:08062019a1}]
(i) It follows from~\eqref{eq:defY}
that for all $n\in\bbZ\cap(-\infty,N-1]$ it holds
$Y_n\le E_n[\eta_{n+1}Y_{n+1}]=\frac{1}{\gamma_n}E_n[\gamma_{n+1}Y_{n+1}]$. Thus, $(\gamma_nY_n)_{n\in\bbZ\cap(-\infty,N]}$
is a submartingale. 
Hence it converges a.s.\ as $n\to-\infty$
due to the backward convergence theorem.
Moreover, $(\gamma_nY_n)_{n\in\bbZ\cap(-\infty,N]}$
is a positive sequence in $L^{\infty-}$,
and, by the Jensen inequality, $(\gamma_nY_n)^p\le E_n[(\gamma_NY_N)^p]$, $n\in\bbZ\cap(-\infty,N]$,
hence the sequence
$((\gamma_nY_n)^p)_{n\in\bbZ\cap(-\infty,N]}$
is uniformly integrable.
This implies the convergence in $L^p$.

\smallskip
(ii) If $(\gamma_n)_{n\in\bbZ\cap(-\infty,N]}$
is a supermartingale,
then it converges a.s.\ as $n\to-\infty$
to a $\bbR\cup\{+\infty\}$-valued random variable,
denoted by $\gamma_{-\infty}$,
due to the backward convergence theorem.
As the process $(\gamma_n)$ is positive,
$\gamma_{-\infty}$ is, in fact, $[0,+\infty]$-valued.
Furthermore, it holds\footnote{Here we use the convention $\infty \cdot 0=0$.}
$$
0=
E\left[\gamma_{-\infty}1_{\{\gamma_{-\infty}=0\}}\right]
\ge
E\left[\gamma_N1_{\{\gamma_{-\infty}=0\}}\right]
\ge0.
$$
Together with the fact that $\gamma_N>0$ a.s.,
this implies $\gamma_{-\infty}>0$ a.s.
It now follows from~(i) that
$(Y_n)_{n\in\bbZ\cap(-\infty,N]}$
converges a.s.\ as $n\to\infty$.
As the sequence $(Y_n)_{n\in\bbZ\cap(-\infty,N]}$
is bounded (being $(0,\frac12]$-valued),
it also converges in $L^p$.
\end{proof}

Now the question arises of whether we can compute the long-time limit $\lim_{n\to-\infty}Y_n$ in specific examples.
In Proposition~\ref{propo:longtimelimMI} below we compute this limit in the framework where $\eta_{k+1}$ and $\beta_{k+1}$ are independent of $\mathcal F_k$ for all $k\in \Z$.
We next present several examples that fall outside this framework.

\begin{ex} (1)
A simple observation is that, under the assumptions of Theorem~\ref{thm:mainres}, we have $\lim_{n\to-\infty}Y_n=0$ a.s.\ whenever $(\gamma_n)$ satisfies $\lim_{n\to-\infty}\gamma_n=+\infty$ a.s.
This follows from statement~(i) of Proposition~\ref{prop:08062019a1}.

\smallskip (2)
Consider the setting where, in addition to the assumptions of Theorem~\ref{thm:mainres}, we have
\begin{equation}\label{eq:11012021a1}
\eta_n=\beta_n\quad\text{for all }n\in\bbZ\cap(-\infty,N].
\end{equation}
It is worth noting that, in this setting, the optimal strategy is to wait until the terminal time $N$ and to close the position at time $N$ whenever the initial deviation $d=0$; while, if $d\ne0$, the optimal strategy in general consists of non-trivial trades at all time points.
For the sake of discussing the long-time limit $\lim_{n\to-\infty}Y_n$ in this setting we observe that
\begin{equation}\label{eq:11012021a2}
E_n[\eta_{n+1}]<1\;\;\text{a.s.\ for all }n\in\bbZ\cap(-\infty,N-1]
\end{equation}
(which is just the requirement $E_n\left[\frac{\beta_{n+1}^2}{\eta_{n+1}}\right]<1$ a.s.\ under~\eqref{eq:11012021a1}).
Hence $(\gamma_n)_{n\in\bbZ\cap(-\infty,N]}$ is a supermartingale.
By statement~(ii) of Proposition~\ref{prop:08062019a1},
$\lim_{n\to-\infty}Y_n$ always exists in this setting.
Moreover, we have $Y_n=E_n[\eta_{n+1}Y_{n+1}]$ for all $n \in \bbZ\cap(-\infty,N-1]$, and hence by induction
\begin{equation}\label{eq:11012021a3}
Y_n=\frac{1}{2}E_n\left[\, \prod_{j=n+1}^N \eta_j \right]
\quad\left(=\frac12\,\frac{E_n[\gamma_N]}{\gamma_n}\right)
\end{equation}
for all $n\in\bbZ\cap(-\infty,N-1]$.
In general, we still can have different values for the long-time limit.
Therefore, we now discuss several more specific examples.

\smallskip (2a)
Assume there exists $c \in (0,1)$ such that $E_n[\eta_{n+1}]\le c$ a.s.\ (cf.~\eqref{eq:11012021a2}) for all $n \in \bbZ\cap(-\infty,N-1]$.
By intermediate conditioning, it follows from~\eqref{eq:11012021a3} that 
$Y_n\le\frac12 c^{N-n}$ a.s. for all $n \in \bbZ\cap(-\infty,N-1]$, hence $\lim_{n\to-\infty}Y_n = 0$ a.s.

\smallskip (2b)
On the other hand, it is clear from~\eqref{eq:11012021a3} that, even with suitable deterministic sequences $(\eta_n)$, we can achieve for the long-time limit $\lim_{n\to-\infty}Y_n$ any deterministic value in $(0,\frac12)$.

\smallskip (2c)
In order to present an explicit and, possibly, non-deterministic long-time limit, we finally consider the following construction.
Let $(a_n)_{n\in \Z\cap (-\infty,N]}\subseteq [0,\infty)$ be a strictly decreasing sequence of nonnegative real numbers,
$Z_i$, $i\in \Z\cap (-\infty,N]$, and $\zeta$ random variables such that $(Z_i)_{i\in \Z\cap (-\infty,N]}$ is an i.i.d.\ sequence independent of $\zeta$,
and $Z_N,\zeta\ge0$, $Z_N,\zeta\in L^{\infty-}$.
We also require at least one of the conditions (a) $a_N>0$ or (b) $\frac1{Z_N},\frac1\zeta\in L^{\infty-}$.
We now define
$$
S_n=\sum_{i=n}^N Z_i,\quad
\cF_n=\sigma(\zeta,S_i;i\in\bbZ\cap(-\infty,n]),\quad
n\in\bbZ\cap(-\infty,N],
$$
and set $\gamma_n=a_n+\frac1{N-n+1}S_n\zeta$ and $\beta_n=\eta_n=\frac{\gamma_n}{\gamma_{n-1}}$, $n\in \Z\cap (-\infty,N]$.
Thus, we are in setting~\eqref{eq:11012021a1}, and we now verify that the assumptions of Theorem~\ref{thm:mainres} are satisfied.
Indeed, the requirement $\beta_n,\gamma_n,\frac1{\gamma_n}\in L^{\infty-}$, $n\in\bbZ\cap(-\infty,N]$, is clear from the construction
(the condition ``(a) or (b)'' above ensures $\frac1{\gamma_N}\in L^{\infty-}$).
Further, as it is well-known, for $n\in\bbZ\cap(-\infty,N-1]$ and $i\in\bbZ\cap[n,N]$, we have $E_n[Z_i]=\frac1{N-n+1}S_n$, hence
$$
E_n[\gamma_{n+1}]=a_{n+1}+\frac1{N-n+1}S_n\zeta
<a_{n}+\frac1{N-n+1}S_n\zeta=\gamma_n\;\;\text{a.s.},
$$
i.e., requirement~\eqref{eq:11012021a2} holds true
(which is $E_n\left[\frac{\beta_{n+1}^2}{\eta_{n+1}}\right]<1$ a.s.).
Finally, in this setting,
$\left(1-E_n\left[\frac{\beta_{n+1}^2}{\eta_{n+1}}\right]\right)^{-1}=
\frac{\gamma_n}{a_n-a_{n+1}}\in L^{\infty-}$, $n\in\bbZ\cap(-\infty,N-1]$.

By the strong law of large numbers it holds that  $\frac1{N-n+1}S_n\to E[Z_N]$ a.s., as $n\to-\infty$. Setting $a_{-\infty}=\lim_{n\to-\infty}a_n$ ($\in(0,\infty]$), we obtain
$\lim_{n\to-\infty}\gamma_n=a_{-\infty}+E[Z_N]\zeta$ a.s.,
$\lim_{n\to-\infty}E_n[\gamma_N]=a_{N}+E[Z_N]\zeta$ a.s., hence
$$
\lim_{n\to-\infty}Y_n=\frac12\lim_{n\to-\infty}\frac{E_n[\gamma_N]}{\gamma_n}=\frac12\,\frac{a_N+E[Z_N]\zeta}{a_{-\infty}+E[Z_N]\zeta}\;\;\text{a.s.},
$$
which is, in general, non-deterministic.
\end{ex}

The next remark provides an improved upper bound for~$Y$.

\begin{remark}[Upper bound for $Y$]\label{rem:ubY}
Under the assumptions of Theorem~\ref{thm:mainres},
for all $ n \in \Z \cap (-\infty,N]$ it holds that $\gamma_n Y_n=V_n(1,0)$. For an initial position of size $1$ at time $ n \in \Z \cap (-\infty,N]$ a possible execution strategy is to sell the whole unit at a point in time $k\in \{n,\ldots,N\}$. If there is no initial deviation, i.e., $d=0$, it follows that the expected costs of such a strategy amount to $E_n\left[\frac{\gamma_k}{2}\right]$. This implies that $Y_n\le \frac{\min_{k\in \{n,\ldots,N\}}E_n\left[\gamma_k\right]}{2\gamma_n}$, which improves the bound $Y_n\leq \frac{1}{2}$ provided by Theorem~\ref{thm:mainres}.
\end{remark}

Besides some integrability assumptions, Theorem~\ref{thm:mainres} requires that $E_{n}\left[\frac{\beta^2_{n+1}}{\eta_{n+1}}\right]< 1$ a.s.\ for all
$n\in \Z \cap (-\infty,N-1]$. The next remark discusses this assumption.

\begin{remark}[Discussion of the structural assumption]
The assumption
$E_{n}\left[\frac{\beta^2_{n+1}}{\eta_{n+1}}\right]< 1$ a.s.\ for all
$n\in \Z \cap (-\infty,N-1]$ in Theorem~\ref{thm:mainres}
is a certain structural assumption which ensures that 
minimization problem~\eqref{eq:value_fct} is strictly convex.
More precisely, under this assumption
the coefficients $a_n$ in front of $\xi^2$ in~\eqref{eq:gen_dpp}
(see Appendix~\ref{a:proof_mainres})
and the random variables $Y_n$ in~\eqref{eq:defY}
stay positive at all times.
In this remark we show that, on the one hand,
this assumption is in general \emph{not} necessary for that,
but, on the other hand, it guarantees that the problem
preserves the structure with increasing number of time steps.
To this end we consider a two-period version of the problem and distinguish several cases.
First, we recall that $Y_N=\frac12$ and
observe that with \eqref{eq:gen_dpp} it holds for all $x,d \in \R$
\begin{equation}\label{eq:value_fct_two_period}
\begin{split}
V_{N-1}(x,d)&=\essinf_{\xi \in \mathcal S_{N-1}} \bigg\{E_{N-1}[\eta_N+1-2\beta_N]\frac{\gamma_{N-1}\xi^2}{2} 
+E_{N-1}\left[(1-\beta_{N})d-\left(\frac{\beta_{N}}{\eta_N}-1\right)\gamma_Nx \right]\xi \\
&\qquad +E_{N-1}\left[\frac{\gamma_Nx^2}{2}-\beta_Ndx \right]\bigg\},
\end{split}
\end{equation}
where $\mathcal S_{N-1}$ denotes the set of all $\cF_{N-1}$-measurable random variables $\xi\in L^{2+}$.
Next, observe that the process $Y$ defined by \eqref{eq:defY} is given at time $N-1$ by
\begin{equation}\label{eq:Ytwoperiod}
\begin{split}
Y_{N-1}&=E_{N-1}\left[\frac{\eta_N}{2}\right]-\frac{\left(E_{N-1}\left[\beta_{N}-\eta_N \right]\right)^2}{2E_{N-1}[\eta_N-2\beta_N+1]}=\frac{E_{N-1}[\eta_{N}]-(E_{N-1}[\beta_N])^2}{2E_{N-1}[\eta_N-2\beta_N+1]}.
\end{split}
\end{equation}
Moreover, the Cauchy-Schwarz inequality ensures that $(E_{N-1}[\beta_N])^2\le E_{N-1}\left[ \frac{\beta_N^2}{\eta_N}\right]E_{N-1}[\eta_N]$ and hence it holds that
\begin{equation}\label{eq:27052019a1}
\frac{2E_{N-1}[\beta_N]-1}{E_{N-1}[\eta_N]}\le \frac{(E_{N-1}[\beta_N])^2}{E_{N-1}[\eta_N]} \le E_{N-1}\left[ \frac{\beta_N^2}{\eta_N}\right]. 
\end{equation}
In particular, we get the following statements.

\smallskip (i)
On the event $\left\{\frac{2E_{N-1}[\beta_N]-1}{E_{N-1}[\eta_N]}>1\right\}$
the minimization problem in~\eqref{eq:value_fct_two_period} is ill-posed in the sense that it is strictly concave and one can generate infinite gains
(in the limit)
by choosing strategies with $|\xi|\to\infty$.

\smallskip (ii)
On the event $\left\{\frac{2E_{N-1}[\beta_N]-1}{E_{N-1}[\eta_N]}<1<\frac{(E_{N-1}[\beta_N])^2}{E_{N-1}[\eta_N]}\right\}$
there exists a minimizer in~\eqref{eq:value_fct_two_period}.
The random variable $Y_{N-1}$ is, however, negative. As a consequence, in view of~\eqref{eq:abc}, one needs to impose further conditions on $\beta_{N-1}$ and $\eta_{N-1}$ to ensure that the coefficient $a_{N-2}$ is positive and that the minimization problem at time $N-2$ is well-posed.

\smallskip (iii)
On the event
$\left\{\frac{(E_{N-1}[\beta_N])^2}{E_{N-1}[\eta_N]}<1\right\}$,
which is bigger than
$\left\{E_{N-1}\left[\frac{\beta_N^2}{\eta_N}\right]<1\right\}$
(see~\eqref{eq:27052019a1}),
there exists a minimizer in~\eqref{eq:value_fct_two_period} and,
moreover, $Y_{N-1}\in(0,\frac12]$ (see~\eqref{eq:Ytwoperiod}).

\smallskip
Observe, however, that replacing the assumption
$E_{n}\left[\frac{\beta^2_{n+1}}{\eta_{n+1}}\right]< 1$ a.s.\ with
the weaker one
$\frac{(E_n[\beta_{n+1}])^2}{E_n[\eta_{n+1}]}<1$ a.s.\ for all
$n\in\bbZ\cap(-\infty,N-1]$
does not in general allow to perform the backward induction,
as the structure of the problem can be lost
already on the step $N-1\to N-2$.
Namely, $Y_{N-1}$ can be strictly less than $\frac12$
(in contrast to $Y_N=\frac12$),
while
$E_{N-2}\left[\frac{\beta_{N-1}^2}{\eta_{N-1}}\right]$
can be strictly bigger than $1$
(even assuming
$\frac{(E_{N-2}[\beta_{N-1}])^2}{E_{N-2}[\eta_{N-1}]}<1$ a.s.),
and we do not necessarily get positivity of $a_{N-2}$
(see~\eqref{eq:abc}).

\end{remark}

The next remark reveals the following property of optimal strategies: 
Irrespectively of the position $x$ and the deviaton $d$ prior to the trade at time $n$, the ratio between position and deviation after the trade $\xi^*_n(x,d)$ is given by an $\mathcal F_n$-measurable random variable $z_n$ (that does not depend on $(x,d)$).

\begin{remark}[Optimal deviation-position ratio]
In the setting of Theorem~\ref{thm:mainres} the optimal position path can be characterized in terms of its ratio to the associated deviation process. More precisely, let $z=(z_n)_{n \in \Z \cap (-\infty,N]}$ be the
$\bbR\cup\{\infty\}$-valued
adapted process given by
\begin{equation}
z_n=\frac{\gamma_n E_n\left[Y_{n+1}\left(\beta_{n+1}-\eta_{n+1}\right) \right]}
{E_n\left[
\left(Y_{n+1}-\frac12\right)\frac{\beta_{n+1}^2}{\eta_{n+1}}
-Y_{n+1}\beta_{n+1}+\frac12
\right]}, \quad n \in \Z \cap (-\infty,N-1],
\quad z_N=\infty,
\end{equation}
where we set $\frac a0=\infty$ whenever $a\in\bbR\setminus\{0\}$.
Notice that the fraction defining $z_n$, $n\in\Z\cap(-\infty,N-1]$, a.s.\ does not produce $\frac00$ because
\begin{equation*}
\begin{split}
&E_n\left[
\left(Y_{n+1}-\frac12\right)\frac{\beta_{n+1}^2}{\eta_{n+1}}
-Y_{n+1}\beta_{n+1}+\frac12
\right]
-E_n\left[Y_{n+1}\left(\beta_{n+1}-\eta_{n+1}\right) \right]\\
&=E_n\left[
\frac12\left(
1-\frac{\beta_{n+1}^2}{\eta_{n+1}}
\right)
+\frac{Y_{n+1}}{\eta_{n+1}}
(\beta_{n+1}-\eta_{n+1})^2
\right]>0\;\;\text{a.s.}
\end{split}
\end{equation*}
under the assumptions of Theorem~\ref{thm:mainres}.
Then for all $n \in \Z \cap (-\infty,N-1]$, $x,d\in \R$,
$d\ne\gamma_n x$,
the ratio between the deviation $d+\gamma_n\xi^*_n(x,d)$ and the position $x+\xi^*_n(x,d)$ directly after the optimal trade equals
\begin{equation}
\begin{split}
\frac{d+\gamma_n\xi^*_n(x,d)}{x+\xi^*_n(x,d)}
&=\frac{\gamma_nE_n\left[Y_{n+1}\left(\beta_{n+1}-\eta_{n+1}\right) \right]}{E_n\left[Y_{n+1}\left(\beta_{n+1}-\eta_{n+1}\right) \right]+ E_n\left[\frac{1}{2}\left(1-\frac{\beta_{n+1}^2}{\eta_{n+1}}\right)+\frac{Y_{n+1}}{\eta_{n+1}}\left(\beta_{n+1}-\eta_{n+1}\right)^2 \right]}\\
&=z_n,
\end{split}
\end{equation}
which does not depend on the pair $(x,d)$
except the requirement $d\ne\gamma_n x$
(the latter is to exclude the deviation-position ratio $\frac00$, see~\eqref{eq:optstrat}).
Likewise, for all $x,d\in\bbR$, $d\ne\gamma_N x$,
the deviation-position ratio after the terminal trade equals
$$
\frac{d+\gamma_N\xi^*_N(x,d)}{x+\xi^*_N(x,d)}=\infty=z_N.
$$
It is worth noting that the process $z$ can take value $\infty$ also before the terminal time $N$
and it is even possible that $z$ takes finite values after being infinite
(see Section~\ref{sec:1go} for more detail).
\end{remark}

\section{Processes with independent multiplicative increments}\label{sec:PIMI}

In this section we restrict attention to resilience and price impact processes that satisfy\footnote{Recall that $\eta_n=\frac{\gamma_n}{\gamma_{n-1}}$, $n\in \Z$.}

\begin{description}
\item[(PIMI)]
for all $k\in \Z$ the random variables $\eta_{k+1}$ and $\beta_{k+1}$ are independent of $\mathcal F_k$.
\end{description}

In this case it turns out that the process $Y$ from Theorem~\ref{thm:mainres} is deterministic.

\begin{lemma}\label{prop:mii}
Assume (PIMI) and that for all $n\in \Z\cap(-\infty,N]$ we have $\beta_n, \gamma_n, \frac{1}{\gamma_n} \in \Linfm$ and
$E\left[\frac{\beta^2_{n}}{\eta_{n}}\right]<1$. 
Let $Y=(Y_n)_{n \in \Z \cap (-\infty,N]}$ be the process from Theorem \ref{thm:mainres} that is recursively defined by $Y_N=\frac{1}{2}$ and~\eqref{eq:defY}.
Then $Y$ is deterministic, 
$(0,\frac12]$-valued 
and satisfies the recursion
\begin{equation}\label{eq:Ydeterministic}
Y_n=E[\eta_{n+1}]Y_{n+1}-\frac{Y_{n+1}^2\left(E\left[\beta_{n+1}\right]-E\left[\eta_{n+1} \right]\right)^2}{Y_{n+1}E\left[\frac{(\beta_{n+1}-\eta_{n+1})^2}{\eta_{n+1}} \right]+\frac{1}{2}\left(1-E\left[\frac{\beta_{n+1}^2}{\eta_{n+1}}\right]\right)}, n\in \Z \cap (-\infty,N-1].
\end{equation}
Furthermore, formula~\eqref{eq:optstrat}
for optimal trade sizes
in the state $(x,d)\in\bbR^2$
takes the form
\begin{equation}\label{eq:13072019a1}
\xi^*_n(x,d)=
\frac{Y_{n+1}\left(E\left[\beta_{n+1}\right]-E\left[\eta_{n+1} \right]\right)}{Y_{n+1}E\left[\frac{(\beta_{n+1}-\eta_{n+1})^2}{\eta_{n+1}} \right]+\frac{1}{2}\left(1-E\left[\frac{\beta_{n+1}^2}{\eta_{n+1}}\right]\right)}
\left(x-\frac d{\gamma_n}\right)-\frac{d}{\gamma_n}, n \in \Z \cap (-\infty,N-1],
\end{equation}
and $\xi_N^*(x,d)=-x$.
\end{lemma}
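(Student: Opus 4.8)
The plan is to prove Lemma~\ref{prop:mii} by backward induction on $n\in\bbZ\cap(-\infty,N]$, establishing simultaneously that (a) the assumptions of Theorem~\ref{thm:mainres} are satisfied (so that $Y$ and $\xi^*$ are well-defined by~\eqref{eq:defY} and~\eqref{eq:optstrat}), (b) $Y_n$ is deterministic and $(0,\tfrac12]$-valued, and (c) $Y_n$ satisfies recursion~\eqref{eq:Ydeterministic} and $\xi^*_n$ has the form~\eqref{eq:13072019a1}. First I would check that (PIMI) together with the stated integrability conditions implies the hypotheses of Theorem~\ref{thm:mainres}: the conditions $\beta_n,\gamma_n,\tfrac1{\gamma_n}\in\Linfm$ are assumed directly, and under (PIMI) the conditional expectation $E_n[\beta_{n+1}^2/\eta_{n+1}]$ equals the (deterministic) unconditional expectation $E[\beta_{n+1}^2/\eta_{n+1}]$, which is assumed to be $<1$; hence $\alpha_n=1-E[\beta_{n+1}^2/\eta_{n+1}]$ is a positive constant and trivially $\tfrac1{\alpha_n}\in\Linfm$. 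So Theorem~\ref{thm:mainres} applies and all its conclusions (in particular $0<Y_n\le\tfrac12$) are available.

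The core of the argument is then a clean induction. The base case $Y_N=\tfrac12$ is deterministic by definition. For the inductive step, assume $Y_{n+1}$ is a deterministic constant. Then in the defining formula~\eqref{eq:defY} every random quantity inside a conditional expectation is a measurable function of $\eta_{n+1}$ and $\beta_{n+1}$ multiplied by the constant $Y_{n+1}$ (and constants), so by (PIMI) each conditional expectation $E_n[\cdot]$ collapses to the corresponding unconditional expectation $E[\cdot]$. Concretely, $E_n[\eta_{n+1}Y_{n+1}]=Y_{n+1}E[\eta_{n+1}]$, $E_n[Y_{n+1}(\beta_{n+1}-\eta_{n+1})]=Y_{n+1}(E[\beta_{n+1}]-E[\eta_{n+1}])$, and the denominator becomes $Y_{n+1}E[(\beta_{n+1}-\eta_{n+1})^2/\eta_{n+1}]+\tfrac12(1-E[\beta_{n+1}^2/\eta_{n+1}])$. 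Substituting these into~\eqref{eq:defY} and~\eqref{eq:optstrat} yields exactly~\eqref{eq:Ydeterministic} and~\eqref{eq:13072019a1}, and in particular shows $Y_n$ is a deterministic constant, closing the induction. The property $Y_n\in(0,\tfrac12]$ is inherited directly from Theorem~\ref{thm:mainres} (one does not need to re-derive it from the recursion, though one could, using the Cauchy--Schwarz bound already recorded in~\eqref{eq:27052019a1}). The terminal trade $\xi_N^*(x,d)=-x$ is also immediate from Theorem~\ref{thm:mainres}.

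There is no real obstacle here; the only point requiring a little care is the precise justification that under (PIMI) one may replace $E_n$ by $E$, i.e.\ that a measurable function of $(\eta_{n+1},\beta_{n+1})$ is independent of $\cF_n$ and hence its conditional expectation given $\cF_n$ is its unconditional expectation. This follows since $(\eta_{n+1},\beta_{n+1})$ is jointly independent of $\cF_n$ under (PIMI) — strictly speaking (PIMI) as stated gives independence of each of $\eta_{k+1}$ and $\beta_{k+1}$ from $\cF_k$, and since both are $\cF_{k+1}$-measurable one obtains joint independence of the pair from $\cF_k$ by noting the pair is $\sigma(\cF_k,\text{pair})$-measurable; in the paper's framework the intended reading is clearly joint independence, so I would simply remark that the pair $(\eta_{n+1},\beta_{n+1})$ is independent of $\cF_n$. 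Integrability of the various expectations (so that they are finite real numbers) is guaranteed by $\beta_n,\gamma_n,\tfrac1{\gamma_n}\in\Linfm$ via Lemma~\ref{lem:inL2+}, exactly as in Remark~\ref{rem:DinL2+}. Thus the proof is essentially a one-line observation (conditional expectations become unconditional under independence) wrapped in a backward induction, and I would write it in about half a page.
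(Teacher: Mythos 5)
Your proposal is correct and takes essentially the same route as the paper, whose proof consists precisely of the remark that recursion~\eqref{eq:Ydeterministic} follows by a straightforward backward induction (with $Y_{n+1}$ deterministic, (PIMI) turns every $E_n[\cdot]$ in~\eqref{eq:defY} into $E[\cdot]$) and that~\eqref{eq:13072019a1} is immediate once $Y$ is deterministic; your write-up just makes that induction and the verification of the hypotheses of Theorem~\ref{thm:mainres} explicit. One small caveat: your parenthetical attempt to deduce joint independence of the pair $(\eta_{n+1},\beta_{n+1})$ from $\cF_n$ out of the two marginal independences is not a valid derivation (marginal independence does not imply joint independence), but since (PIMI) is indeed intended as independence of the pair --- and this joint reading is exactly what is needed so that $E_n\left[\frac{\beta_{n+1}^2}{\eta_{n+1}}\right]=E\left[\frac{\beta_{n+1}^2}{\eta_{n+1}}\right]$ and the lemma's hypothesis yields the structural condition of Theorem~\ref{thm:mainres} --- your final decision to simply invoke that reading is the right one and the proof stands.
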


\begin{proof}
Recursion~\eqref{eq:Ydeterministic} follows by a straightforward induction argument.
Formula~\eqref{eq:13072019a1} is an immediate consequence of the fact that $Y$ is deterministic.
\end{proof}

The particular case of (PIMI), where the sequences $(\eta_k)$ and $(\beta_k)$ are deterministic\footnote{It is worth noting that $(\gamma_k)$ can be random.}, deserves a separate treatment because, in this case, recursion~\eqref{eq:Ydeterministic} admits a closed-form expression. In fact, in this case, it is more convenient to work with the quantities
\begin{equation}\label{eq:05072020a1}
Z_k=\frac1{2Y_k},\quad k\in\bbZ\cap(-\infty,N],
\end{equation}
in place of $Y_k$, $k\in\bbZ\cap(-\infty,N]$.

\begin{corollary}\label{cor:05072020a1}
Assume that, for all $k\in\bbZ\cap(-\infty,N]$, $\gamma_k,\frac1{\gamma_k}\in L^{\infty-}$,
$\eta_k$ and $\beta_k$ are deterministic and $\beta_k^2<\eta_k$.
Let the (deterministic) sequence $Z=(Z_k)_{k\in\bbZ\cap(-\infty,N]}$ be defined by~\eqref{eq:05072020a1},
where the sequence $Y=(Y_k)_{k\in\bbZ\cap(-\infty,N]}$ is recursively defined by $Y_N=\frac12$ and~\eqref{eq:Ydeterministic}.
Then $Z$ is $[1,+\infty)$-valued and it holds
\begin{equation}\label{eq:05072020a2}
Z_k=\left(\prod_{i=k+1}^N \frac1{\eta_i}\right)+\sum_{j=k+1}^N \left(\,\prod_{i=k+1}^j \frac1{\eta_i}\right) \frac{(\eta_j-\beta_j)^2}{\eta_j-\beta_j^2}
,\quad k\in\bbZ\cap(-\infty,N],
\end{equation}
where $\sum_{N+1}^N:=0$, $\prod_{N+1}^N:=1$ (i.e., $Z_N=1$).
Furthermore, formula~\eqref{eq:13072019a1} for optimal trade sizes in the state $(x,d)\in\bbR^2$ takes the form
\begin{equation}\label{eq:05072020a3}
\xi^*_k(x,d)=
\frac{\eta_{k+1}-\beta_{k+1}}{\frac{(\eta_{k+1}-\beta_{k+1})^2}{\eta_{k+1}}+Z_{k+1}\left(1-\frac{\beta_{k+1}^2}{\eta_{k+1}}\right)}
\left(\frac d{\gamma_k}-x\right)-\frac{d}{\gamma_k},\quad k\in\bbZ\cap(-\infty,N-1],
\end{equation}
and $\xi_N^*(x,d)=-x$.
\end{corollary}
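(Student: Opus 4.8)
The plan is to reduce the whole statement to a single affine recursion for $Z$. First I would specialize~\eqref{eq:Ydeterministic} to the present (deterministic) situation, where every conditional expectation equals the deterministic quantity it averages, so that
\[
Y_n=\eta_{n+1}Y_{n+1}-\frac{Y_{n+1}^2(\eta_{n+1}-\beta_{n+1})^2}{Y_{n+1}\frac{(\eta_{n+1}-\beta_{n+1})^2}{\eta_{n+1}}+\frac12\bigl(1-\frac{\beta_{n+1}^2}{\eta_{n+1}}\bigr)},\qquad Y_N=\tfrac12.
\]
Note that the hypotheses of the corollary imply those of Lemma~\ref{prop:mii}: deterministic increments trivially satisfy (PIMI), and $\beta_k^2<\eta_k$ gives $E[\beta_k^2/\eta_k]<1$; hence $Y$ is well defined and $(0,\tfrac12]$-valued, so $Z=1/(2Y)$ is well defined and automatically $[1,+\infty)$-valued. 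Substituting $Y_m=1/(2Z_m)$ and clearing the common factors $2Z_{n+1}$ and $\eta_{n+1}$, the nonlinearity disappears: concretely, after the elementary identity $\eta-\frac{\eta(\eta-\beta)^2}{(\eta-\beta)^2+Z(\eta-\beta^2)}=\frac{\eta Z(\eta-\beta^2)}{(\eta-\beta)^2+Z(\eta-\beta^2)}$ (with $\eta=\eta_{n+1}$, $\beta=\beta_{n+1}$, $Z=Z_{n+1}$) one reads off
\[
Z_n=\frac{Z_{n+1}}{\eta_{n+1}}+\frac{(\eta_{n+1}-\beta_{n+1})^2}{\eta_{n+1}(\eta_{n+1}-\beta_{n+1}^2)},\qquad Z_N=1.
\]

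Next I would solve this affine recursion by telescoping from $N$ down to $k$. Writing $c_j=(\eta_j-\beta_j)^2/(\eta_j(\eta_j-\beta_j^2))$ and iterating $Z_n=\eta_{n+1}^{-1}Z_{n+1}+c_{n+1}$ gives $Z_k=\bigl(\prod_{i=k+1}^N\eta_i^{-1}\bigr)+\sum_{j=k+1}^N\bigl(\prod_{i=k+1}^{j-1}\eta_i^{-1}\bigr)c_j$; absorbing the factor $\eta_j^{-1}$ inside $c_j$ into the product via $\bigl(\prod_{i=k+1}^{j-1}\eta_i^{-1}\bigr)\eta_j^{-1}=\prod_{i=k+1}^{j}\eta_i^{-1}$ turns this into exactly~\eqref{eq:05072020a2}. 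If one prefers to avoid quoting Lemma~\ref{prop:mii} for the range of $Z$, the bound $Z_k\ge1$ can instead be obtained by a direct backward induction: from $Z_{n+1}\ge1$ and $\eta_{n+1}>0$ one gets $Z_n\ge\eta_{n+1}^{-1}+c_{n+1}$, and the inequality $\eta^{-1}+c\ge1$ rearranges, after multiplying through by $\eta(\eta-\beta^2)>0$, to $\eta(1-\beta)^2\ge0$.

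Finally, for the optimal trade sizes I would substitute $Y_{n+1}=1/(2Z_{n+1})$ into~\eqref{eq:13072019a1}, again replacing each conditional expectation by its deterministic value, and clear the common factor $2Z_{n+1}$ from the numerator and denominator of the linear coefficient; using $(\beta_{n+1}-\eta_{n+1})(x-d/\gamma_n)=(\eta_{n+1}-\beta_{n+1})(d/\gamma_n-x)$ and $(\beta_{n+1}-\eta_{n+1})^2=(\eta_{n+1}-\beta_{n+1})^2$ then yields~\eqref{eq:05072020a3}, while $\xi_N^*(x,d)=-x$ is inherited from Lemma~\ref{prop:mii}. I expect no conceptual obstacle anywhere; the only places demanding care are the bookkeeping in the algebraic simplification that converts the nonlinear recursion~\eqref{eq:Ydeterministic} into the clean affine recursion for $Z$, and the index shift when passing from the telescoped sum to the product-form~\eqref{eq:05072020a2}.
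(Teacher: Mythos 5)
Your proposal is correct and follows essentially the same route as the paper's proof: specialize \eqref{eq:Ydeterministic} to the deterministic case, substitute $Z=1/(2Y)$ to obtain the affine recursion $Z_k=\frac{1}{\eta_{k+1}}Z_{k+1}+\frac{(\eta_{k+1}-\beta_{k+1})^2}{\eta_{k+1}(\eta_{k+1}-\beta_{k+1}^2)}$, telescope it to \eqref{eq:05072020a2}, deduce $Z\ge1$ from $Y\in(0,\tfrac12]$, and transform \eqref{eq:13072019a1} into \eqref{eq:05072020a3}. Your additional direct induction showing $Z_k\ge1$ via $\eta(1-\beta)^2\ge0$ is a correct, self-contained alternative to quoting the range of $Y$, but otherwise the argument coincides with the paper's.
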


\begin{proof}
In the current setting, recursion~\eqref{eq:Ydeterministic} simplifies to $Y_N=\frac12$ and
$$
Y_k=\frac{\frac12\left(1-\frac{\beta^2_{k+1}}{\eta_{k+1}}\right)\eta_{k+1}Y_{k+1}}{Y_{k+1}\frac{(\eta_{k+1}-\beta_{k+1})^2}{\eta_{k+1}}+\frac12\left(1-\frac{\beta^2_{k+1}}{\eta_{k+1}}\right)},
\quad k\in\bbZ\cap(-\infty,N-1],
$$
which, for the sequence $Z$, yields $Z_N=1$ and
\begin{equation}\label{eq:05072020a4}
Z_k=\frac{(\eta_{k+1}-\beta_{k+1})^2}{\eta_{k+1}^2-\eta_{k+1}\beta_{k+1}^2}+\frac1{\eta_{k+1}}Z_{k+1},
\quad k\in\bbZ\cap(-\infty,N-1],
\end{equation}
and admits closed-form expression~\eqref{eq:05072020a2}.
The fact that $Z$ is $[1,+\infty)$-valued follows from the fact that $Y$ is $(0,\frac12]$-valued and~\eqref{eq:05072020a1}.
The last statement follows by a straightforward transformation in~\eqref{eq:13072019a1}.
\end{proof}

The formulas simplify even further when we additionally assume a constant order book depth $\gamma_k\equiv\gamma$, which means that $\eta_k\equiv1$.
The formula $\gamma_k\equiv\gamma$ is a slight abuse of our notation because in other places $\gamma$ denotes the whole sequence $(\gamma_k)$,
but we use $\gamma$ in place of $\gamma_k$ only in Corollary~\ref{cor:05072020a2}, and this does not cause any ambiguity in the sequel.

\begin{corollary}\label{cor:05072020a2}
Assume that, for all $k\in\bbZ\cap(-\infty,N]$, $\gamma_k=\gamma$ a.s.\ with some strictly positive $\bigcap_{k\in\bbZ}\cF_k$-measurable random variable $\gamma$ satisfying $\gamma,\frac1\gamma\in L^{\infty-}$.
In particular, $\eta_k=1$ a.s.\ for all $k\in\bbZ\cap(-\infty,N]$.
Further assume that the sequence $(\beta_k)_{k\in\bbZ\cap(-\infty,N]}$ is deterministic and $(0,1)$-valued.
Then we are in the situation of Corollary~\ref{cor:05072020a1},
formula~\eqref{eq:05072020a2} simplifies to
$$
Z_k=1+\sum_{j=k+1}^N \frac{1-\beta_j}{1+\beta_j},
\quad k\in\bbZ\cap(-\infty,N],
$$
where $\sum_{N+1}^N:=0$,
and formula~\eqref{eq:05072020a3} for optimal trade sizes in the state $(x,d)\in\bbR^2$ takes the form
\begin{align*}
\xi^*_k(x,d)
&=\frac1{1-\beta_{k+1}+(1+\beta_{k+1})Z_{k+1}}\left(\frac d{\gamma}-x\right)-\frac{d}{\gamma}\\
&=\frac1{2+(1+\beta_{k+1})\sum_{j=k+2}^N \frac{1-\beta_j}{1+\beta_j}}\left(\frac d{\gamma}-x\right)-\frac{d}{\gamma}
,\quad k\in\bbZ\cap(-\infty,N-1],
\end{align*}
and $\xi_N^*(x,d)=-x$.
\end{corollary}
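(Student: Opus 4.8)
The plan is to derive Corollary~\ref{cor:05072020a2} as a direct specialization of Corollary~\ref{cor:05072020a1}, which does all the real work. First I would check that the hypotheses of Corollary~\ref{cor:05072020a1} are met: since $\gamma_k=\gamma$ a.s.\ with $\gamma$ measurable with respect to $\bigcap_{k\in\bbZ}\cF_k$ and $\gamma,\frac1\gamma\in L^{\infty-}$, we automatically get $\eta_k=\frac{\gamma_k}{\gamma_{k-1}}=1$ a.s.\ for all $k$, so $\eta_k$ is deterministic; $\beta_k$ is assumed deterministic and $(0,1)$-valued, hence $\beta_k^2<1=\eta_k$. Thus every assumption of Corollary~\ref{cor:05072020a1} holds, and in particular $Z$ is $[1,+\infty)$-valued and given by~\eqref{eq:05072020a2}.

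Next I would substitute $\eta_i=1$ into~\eqref{eq:05072020a2}. The product $\prod_{i=k+1}^N\frac1{\eta_i}$ collapses to $1$, and likewise each inner product $\prod_{i=k+1}^j\frac1{\eta_i}$ collapses to $1$, while $\frac{(\eta_j-\beta_j)^2}{\eta_j-\beta_j^2}=\frac{(1-\beta_j)^2}{1-\beta_j^2}=\frac{1-\beta_j}{1+\beta_j}$ (using $1-\beta_j^2=(1-\beta_j)(1+\beta_j)$ and $\beta_j\ne1$). Hence $Z_k=1+\sum_{j=k+1}^N\frac{1-\beta_j}{1+\beta_j}$, which is the claimed closed form. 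The same $\eta_{k+1}=1$ substitution in~\eqref{eq:05072020a3} gives
$$
\xi^*_k(x,d)=\frac{1-\beta_{k+1}}{(1-\beta_{k+1})^2+Z_{k+1}(1-\beta_{k+1}^2)}\Big(\frac d\gamma-x\Big)-\frac d\gamma
=\frac{1}{1-\beta_{k+1}+(1+\beta_{k+1})Z_{k+1}}\Big(\frac d\gamma-x\Big)-\frac d\gamma,
$$
after dividing numerator and denominator by $1-\beta_{k+1}>0$, which is the first displayed expression for $\xi^*_k$. For the second expression I would plug in $Z_{k+1}=1+\sum_{j=k+2}^N\frac{1-\beta_j}{1+\beta_j}$ and simplify $1-\beta_{k+1}+(1+\beta_{k+1})\big(1+\sum_{j=k+2}^N\frac{1-\beta_j}{1+\beta_j}\big)=2+(1+\beta_{k+1})\sum_{j=k+2}^N\frac{1-\beta_j}{1+\beta_j}$, and $\xi_N^*(x,d)=-x$ carries over verbatim.

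Since Corollary~\ref{cor:05072020a1} already supplies the closed-form recursion, its validity, and the optimal-trade formula, there is essentially no obstacle here: the proof is a routine algebraic simplification. The only point requiring a word of care is the cancellation of the factor $1-\beta_{k+1}$ in passing to the first expression for $\xi^*_k$, which is legitimate precisely because $\beta_{k+1}\in(0,1)$ so that $1-\beta_{k+1}\ne0$; I would note this explicitly. Everything else is bookkeeping with telescoping-trivial products.

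\begin{proof}
Since $\gamma_k=\gamma$ a.s.\ for all $k\in\bbZ\cap(-\infty,N]$, we have $\eta_k=\frac{\gamma_k}{\gamma_{k-1}}=1$ a.s.\ for all $k\in\bbZ\cap(-\infty,N]$; in particular $(\eta_k)$ is deterministic. Moreover $(\beta_k)$ is deterministic and $(0,1)$-valued, so $\beta_k^2<1=\eta_k$ for all $k$. Hence the assumptions of Corollary~\ref{cor:05072020a1} are satisfied, $Z=(Z_k)_{k\in\bbZ\cap(-\infty,N]}$ is $[1,+\infty)$-valued and is given by~\eqref{eq:05072020a2}.

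Substituting $\eta_i=1$ for all $i$ into~\eqref{eq:05072020a2}, every product $\prod_{i=k+1}^N\frac1{\eta_i}$ and $\prod_{i=k+1}^j\frac1{\eta_i}$ equals $1$, and $\frac{(\eta_j-\beta_j)^2}{\eta_j-\beta_j^2}=\frac{(1-\beta_j)^2}{(1-\beta_j)(1+\beta_j)}=\frac{1-\beta_j}{1+\beta_j}$, where we used $\beta_j\ne1$. Therefore
$$
Z_k=1+\sum_{j=k+1}^N \frac{1-\beta_j}{1+\beta_j},\quad k\in\bbZ\cap(-\infty,N],
$$
with $\sum_{N+1}^N:=0$, i.e.\ $Z_N=1$.

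For the optimal trade sizes, formula~\eqref{eq:05072020a3} with $\eta_{k+1}=1$ becomes, for $k\in\bbZ\cap(-\infty,N-1]$,
$$
\xi^*_k(x,d)=\frac{1-\beta_{k+1}}{(1-\beta_{k+1})^2+Z_{k+1}\left(1-\beta_{k+1}^2\right)}\left(\frac d\gamma-x\right)-\frac d\gamma.
$$
Since $\beta_{k+1}\in(0,1)$ we have $1-\beta_{k+1}\ne0$, and dividing numerator and denominator of the fraction by $1-\beta_{k+1}$ (and using $1-\beta_{k+1}^2=(1-\beta_{k+1})(1+\beta_{k+1})$) yields
$$
\xi^*_k(x,d)=\frac1{1-\beta_{k+1}+(1+\beta_{k+1})Z_{k+1}}\left(\frac d\gamma-x\right)-\frac d\gamma.
$$
Finally, inserting $Z_{k+1}=1+\sum_{j=k+2}^N\frac{1-\beta_j}{1+\beta_j}$ and simplifying,
$$
1-\beta_{k+1}+(1+\beta_{k+1})Z_{k+1}=1-\beta_{k+1}+(1+\beta_{k+1})+(1+\beta_{k+1})\sum_{j=k+2}^N\frac{1-\beta_j}{1+\beta_j}=2+(1+\beta_{k+1})\sum_{j=k+2}^N\frac{1-\beta_j}{1+\beta_j},
$$
which gives the second displayed formula for $\xi^*_k(x,d)$. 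The identity $\xi_N^*(x,d)=-x$ is inherited from Corollary~\ref{cor:05072020a1}.
\end{proof}
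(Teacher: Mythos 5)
Your proof is correct and follows exactly the route the paper intends: the paper's own proof is just the remark that the result follows from Corollary~\ref{cor:05072020a1} via straightforward calculations, and your verification of its hypotheses (deterministic $\eta_k\equiv 1$, $\beta_k^2<1=\eta_k$) together with the substitution $\eta_i=1$ and the cancellation of $1-\beta_{k+1}>0$ is precisely that calculation, carried out carefully.
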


\begin{proof}
The result follows from Corollary~\ref{cor:05072020a1} via straightforward calculations.
\end{proof}

In the next proposition we discuss the long-time limit
$\lim_{n\to-\infty}Y_n$ assuming (PIMI)
and a sort of time-homogeneity
(only for expectations).

\begin{propo}\label{propo:longtimelimMI}
Suppose that the assumptions of Lemma~\ref{prop:mii} hold true and that $\lbeta = E\left[ \beta_{n+1} \right]$, $\leta = E\left[ \eta_{n+1}\right]$ and $\lalpha =E \left[ \frac{\beta_{n+1}^2}{\eta_{n+1}} \right]$ do not depend on $n\in \Z\cap (-\infty,N-1]$. 
\begin{enumerate}
\item If $\lbeta = 1$, we have $\leta>1$, and
it holds for all $n\in \Z\cap (-\infty,N]$ that $Y_n = \frac{1}{2}$.
\item If $\leta\leq 1$, we have $\lbeta < 1$, and the sequence $Y=\left( Y_n \right)_{n\in \Z \cap (-\infty,N]}$ converges monotonically to $0$ as $n \to -\infty$.
\item If $\lbeta \neq 1$ and $\leta >1$, the sequence $Y=\left( Y_n \right)_{n\in \Z \cap (-\infty,N]}$ converges monotonically to 
\begin{equation}
\frac{ \frac{1}{2} \left( 1 - \lalpha \right) \left( \leta - 1 \right)}{\left( 1- \lalpha \right)  \left(\leta - 1\right) + \left( \lbeta - 1 \right)^2 } \in \left(0,\frac{1}{2}\right)
\end{equation}
as $n \to - \infty$. 
\end{enumerate}
\end{propo}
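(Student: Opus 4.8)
Since we are in the (PIMI) setting with the additional time-homogeneity of the expectations $\lbeta, \leta, \lalpha$, Lemma~\ref{prop:mii} tells us that $Y$ is deterministic and satisfies recursion~\eqref{eq:Ydeterministic}, which now has \emph{constant} coefficients. The plan is therefore to rewrite~\eqref{eq:Ydeterministic} as a one-dimensional autonomous backward recursion $Y_n = F(Y_{n+1})$ for an explicit function $F\colon(0,\tfrac12]\to(0,\tfrac12]$ and then to analyze $F$ as a discrete dynamical system (run backwards in $n$, i.e.\ forwards under iteration of $F$). Concretely, with $a := \frac12(1-\lalpha)>0$ (positive by the standing assumption $E[\beta_{n+1}^2/\eta_{n+1}]<1$) and $b := (\lbeta-\leta)^2\ge 0$ and $c := E[(\beta_{n+1}-\eta_{n+1})^2/\eta_{n+1}]$, recursion~\eqref{eq:Ydeterministic} reads
\begin{equation*}
Y_n = \leta\, Y_{n+1} - \frac{b\,Y_{n+1}^2}{c\,Y_{n+1} + a} = \frac{(\leta c - b)Y_{n+1}^2 + \leta a\, Y_{n+1}}{c\,Y_{n+1}+a} =: F(Y_{n+1}).
\end{equation*}
A useful preliminary identity is $\leta c - b = \leta\,E[(\beta_{n+1}-\eta_{n+1})^2/\eta_{n+1}] - (\lbeta-\leta)^2$; expanding and using $E[\eta_{n+1}]=\leta$ one gets $\leta c - b = \leta(\lalpha - 2\lbeta + \leta) - (\lbeta-\leta)^2 = (\leta-1)(\leta - \lalpha) + (1-\lalpha)\,(\text{something})$ — the precise rearrangement I would compute is $\leta c - b = (\leta - 1)\leta + \leta\lalpha - \lbeta^2 \cdot(\dots)$; in any case the key point is to express $\leta c - b$ and $a$ in terms of $\lbeta,\leta,\lalpha$ so that the fixed-point equation $y = F(y)$ becomes transparent. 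Solving $y = F(y)$: the nonzero fixed point satisfies $(\leta-1)(c y + a) = b y$, i.e.
\begin{equation*}
y^* = \frac{a(\leta-1)}{b - c(\leta - 1)} = \frac{\tfrac12(1-\lalpha)(\leta-1)}{(1-\lalpha)(\leta-1)+(\lbeta-1)^2},
\end{equation*}
where the last equality is exactly the displayed limit and follows from the identity $b - c(\leta-1) = (\lbeta-1)^2 + (1-\lalpha)(\leta-1)$ — verifying this identity is one concrete computation to carry out, using $E[\eta]=\leta$, $E[\beta]=\lbeta$, $E[\beta^2/\eta]=\lalpha$ together with Jensen/Cauchy--Schwarz only insofar as it guarantees signs.

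With $F$ in hand I would treat the three cases. Case~1 ($\lbeta=1$): then first note $\leta>1$ follows from the Cauchy--Schwarz bound $1=\lbeta^2 = (E[\beta])^2 \le E[\beta^2/\eta]\,E[\eta] = \lalpha\leta < \leta$ (using $\lalpha<1$). Moreover $b=(\lbeta-\leta)^2=(1-\leta)^2$ and one checks $\leta c - b = 0$ — equivalently $F(y)=\leta a y/(cy+a)$ — wait, more simply: when $\lbeta=1$, \eqref{eq:Ydeterministic} at $y=\tfrac12$ gives $F(\tfrac12) = \tfrac{\leta}{2} - \tfrac{(1-\leta)^2/4}{c/2+a}$, and using $c/2 + a = \tfrac12 E[(\beta-\eta)^2/\eta] + \tfrac12(1-\lalpha) = \tfrac12(\leta - 2\lbeta + 1) = \tfrac12(\leta-1)$ when $\lbeta=1$, one gets $F(\tfrac12) = \tfrac{\leta}{2} - \tfrac{(\leta-1)^2}{4}\cdot\tfrac{2}{\leta-1} = \tfrac{\leta}{2} - \tfrac{\leta-1}{2} = \tfrac12$. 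Hence $Y_n\equiv\tfrac12$ by backward induction. Case~2 ($\leta\le1$): here Cauchy--Schwarz gives $\lbeta^2 \le \lalpha\leta \le \lalpha < 1$, so $\lbeta<1$ and also $\lbeta\ne\leta$ is not needed; the point is that the fixed-point "value" $y^*$ would require $\leta>1$, so on $(0,\tfrac12]$ there is no positive fixed point and $F(y)<y$ for all $y\in(0,\tfrac12]$ — to see this, $F(y)-y = -\tfrac{(1-\leta)ay + by^2}{\dots}$... actually with $\leta\le 1$ both $\leta a y \le a y$ and the subtracted term is $\ge 0$, giving $F(y)\le \leta y \le y$, with equality only possible if $b=0$ and $\leta=1$, and even then $b = (\lbeta-1)^2 = 0$ forces $\lbeta=1$, contradicting $\lbeta<1$; so $F(y)<y$ strictly. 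Thus $(Y_n)$ is (reading $n\to-\infty$ as iterating $F$) strictly decreasing and bounded below by $0$, hence converges to the unique fixed point in $[0,\tfrac12]$, which is $0$. Case~3 ($\lbeta\ne1$, $\leta>1$): now $y^*\in(0,\tfrac12)$ by inspection of the formula (numerator positive, denominator strictly larger than numerator since $(\lbeta-1)^2>0$), and I would show $F$ is increasing on $(0,\tfrac12]$, $F(y)>y$ for $y\in(0,y^*)$ and $F(y)<y$ for $y\in(y^*,\tfrac12]$ — again via the sign of $F(y)-y = \tfrac{(\leta-1)ay - (b-c(\leta-1))y^2}{cy+a} = \tfrac{y\,(b-c(\leta-1))}{cy+a}\,(y^* - y)$, where $b - c(\leta-1) = (\lbeta-1)^2 + (1-\lalpha)(\leta-1)>0$. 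Since $Y_N=\tfrac12 \ge y^*$, the backward iterates decrease monotonically towards $y^*$ (and stay in $(y^*,\tfrac12]$), giving monotone convergence to the stated value.

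**Main obstacle.** The conceptual content is light — it is a scalar monotone recursion — so the real work is purely algebraic: establishing the two identities $c/2 + a = \tfrac12(\leta - 2\lbeta + 1)$ and $b - c(\leta-1) = (\lbeta-1)^2 + (1-\lalpha)(\leta-1)$ cleanly, and then reading off the sign of $F(y)-y$ as a multiple of $(y^*-y)$ with a manifestly positive prefactor. The one genuinely non-obvious step is extracting, in Cases~1 and~2, the \emph{implications} $\lbeta=1\Rightarrow\leta>1$ and $\leta\le1\Rightarrow\lbeta<1$ from Cauchy--Schwarz applied to $(E[\beta])^2 = (E[\tfrac{\beta}{\sqrt\eta}\cdot\sqrt\eta])^2 \le E[\tfrac{\beta^2}{\eta}]E[\eta] = \lalpha\leta$ together with the standing hypothesis $\lalpha = E[\beta^2/\eta] < 1$; once that is stated, everything else is bookkeeping and an appeal to the fact (already part of Lemma~\ref{prop:mii}) that $Y$ is $(0,\tfrac12]$-valued, which keeps the iteration inside the interval where $F$ is well-defined and the denominator $cy+a$ stays positive.
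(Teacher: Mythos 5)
Your plan follows essentially the same route as the paper: the paper also rewrites \eqref{eq:Ydeterministic} as the autonomous recursion $Y_n=g(Y_{n+1})$ with $g$ exactly your $F$ (see \eqref{eq:defgforfp}), obtains $\leta>1$ in case 1 and $\lbeta<1$ in case 2 from the same Cauchy--Schwarz bound $\lbeta^2\le\lalpha\leta$ with $\lalpha<1$, identifies the limit as a fixed point using monotonicity of the sequence, and your identity $b-c(\leta-1)=(\lbeta-1)^2+(1-\lalpha)(\leta-1)$ is precisely the computation behind \eqref{eq:def:bary}; cases 1 and 2 as you sketch them are complete.

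The one substantive step you only promise is the monotonicity of $F$ on the relevant interval (``I would show $F$ is increasing''), and your case-3 argument genuinely cannot do without it or a substitute: the sign of $F(y)-y$ alone tells you the iterates decrease while they are above $y^*$, but it does not rule out an iterate overshooting below $y^*$, after which monotone convergence to $y^*$ is no longer automatic. In the paper this is the bulk of the proof --- a half-page verification that $g'(y)>0$ on $[0,\infty)$, which crucially uses $\lalpha\ge\lbeta^2/\leta$ and $\lalpha<1$ and is not bookkeeping on a par with the two identities you list as the main obstacle; so either carry out that derivative computation, or bypass it: since $y^*$ is a root of the numerator of $F(y)-y^*$, the defining relation $a(\leta-1)=y^*\bigl(b-c(\leta-1)\bigr)$ gives the factorization
\begin{equation*}
F(y)-y^*=\frac{(y-y^*)\bigl((\leta c-b)\,y+a\bigr)}{cy+a},
\qquad
\leta c-b=\leta\lalpha-\lbeta^2\ge0
\end{equation*}
by Cauchy--Schwarz, so $F(y)\ge y^*$ whenever $y\ge y^*$, which yields the invariance of $[y^*,\tfrac12]$ and, combined with your sign formula for $F(y)-y$ and continuity of $F$, completes case 3 without ever differentiating $F$. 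With that step supplied (either way), your proposal is a correct proof.
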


\paragraph{Discussion of Proposition~\ref{propo:longtimelimMI}}
Suppose that at time $n$ we have $x=1$ share to sell and the initial deviation is $d=0$.
The immediate selling of the share incurs the costs $\frac{\gamma_n}2$.
The optimal execution strategy produces the expected costs
$V_n(1,0)=\gamma_nY_n$ (recall~\eqref{eq:sol_valfct}).
So, in other words, the question about the long-time limit
$\lim_{n\to-\infty}Y_n$
is the question of how much better
in comparison to the immediate selling
we can perform if our time horizon is very big.

In general, dividing a large order into many small orders
and executing them in consecutive time points
can be profitable compared to the immediate execution
because of the following reasons:
\begin{enumerate}[(1)]
\item
the price impact process $\gamma$ penalizes trades at different times in a different way whenever $\gamma$ is nonconstant,
\item
the resilience process $\beta$ changes the deviation process $D$ between the trades whenever $\beta$ is not identically~$1$.
\end{enumerate}
From this viewpoint the claims of Proposition~\ref{propo:longtimelimMI},
which deals with the ``time-homogeneous in expectation (PIMI) case'',
are naturally interpreted as follows.
If the resilience is in expectation $1$ ($\lbeta=1$),
then neither of the above reasons suggests
dividing a large order into many small orders
(notice that, in this case, the price impact process $\gamma$
is increasing in average, as $\leta>1$).
We can asymptotically get rid of the execution costs
in the case of nonincreasing price impact
(in the sense $\leta\le1$).
Notice that, in this case, the price impact is allowed to be constant,
but we anyway profit from the resilience,
which, in expectation, drives the deviation back to zero between two trades
($\lbeta<1$).
Finally, in the remaining case of a nontrivial resilience and a geometrically increasing price impact
(in the sense $\lbeta\ne1$ and $\leta>1$)
we cannot fully get rid of the execution costs regardless of how big our time horizon is.

\begin{proof}[Proof of Proposition~\ref{propo:longtimelimMI}]
From \eqref{eq:Ydeterministic}, we have 
\begin{equation}\label{eq:Yrecmi}
Y_n = \leta Y_{n+1} - \frac{Y_{n+1}^2 \left(\lbeta - \leta \right)^2}{Y_{n+1}\left( \lalpha - 2 \lbeta + \leta \right) + \frac{1}{2}\left( 1 - \lalpha \right)},\quad n\in \Z \cap (-\infty,N-1] .
\end{equation}
Define $g\colon[0,\infty) \to \mathbb{R},$
\begin{equation}\label{eq:defgforfp}
g(y) = \leta y - \frac{y^2 \left(\lbeta - \leta \right)^2}{y\left( \lalpha - 2 \lbeta + \leta \right) + \frac{1}{2}\left( 1 - \lalpha \right)}, \qquad y\in [0,\infty).
\end{equation}
Note that $\lalpha < 1$ by assumption and that $\lalpha - 2 \lbeta + \leta \geq \frac{\left(\lbeta - \leta \right)^2}{\leta} \geq 0$ because $\frac{\lbeta^2}{\leta} \leq \lalpha$ by the Cauchy-Schwarz inequality.
Let $y\geq 0$. Then
\begin{equation*}
\begin{split}
g'(y) & = \leta - \left( \lbeta - \leta \right)^2 \frac{2y\left( y\left( \lalpha - 2\lbeta +\leta \right) + \frac{1}{2} \left( 1 -\lalpha \right) \right) - y^2 \left( \lalpha - 2\lbeta + \leta \right) }{\left( y\left( \lalpha - 2\lbeta +\leta \right) + \frac{1}{2} \left( 1 -\lalpha \right) \right)^2} \\
	& = \leta - \left( \lbeta - \leta \right)^2 \frac{y^2 \left( \lalpha - 2\lbeta +\leta \right) + y \left( 1 -\lalpha \right) }{\left( y\left( \lalpha - 2\lbeta +\leta \right) + \frac{1}{2} \left( 1 -\lalpha \right) \right)^2} .
\end{split}
\end{equation*}
Hence, $g'(y)>0$ is equivalent to 
\begin{equation*}
\leta \left( y\left( \lalpha - 2\lbeta +\leta \right) + \frac{1}{2} \left( 1 -\lalpha \right) \right)^2 > \left( \lbeta - \leta \right)^2 \left( y^2 \left( \lalpha - 2\lbeta +\leta \right) + y \left( 1 -\lalpha \right) \right).
\end{equation*} 
Divide by $\leta>0$ and note that $\frac{\left( \lbeta - \leta \right)^2}{\leta}= \frac{\lbeta^2}{\leta} - 2\lbeta + \leta$. This yields the equivalent statement 
\begin{equation*}
\begin{split}
0 & < y^2 \left( \lalpha - 2\lbeta +\leta \right)^2 + y  \left( \lalpha - 2\lbeta +\leta \right) \left( 1 -\lalpha \right) + \frac{\left( 1 -\lalpha \right)^2 }{4} - \frac{\left( \lbeta - \leta \right)^2}{\leta} y^2 \left( \lalpha - 2\lbeta +\leta \right) \\
	&  \quad - \frac{\left( \lbeta - \leta \right)^2}{\leta} y \left( 1 -\lalpha \right) \\
	& = y^2 \left( \lalpha - 2\lbeta +\leta \right) \left(\lalpha - \frac{\lbeta^2}{\leta} \right) + y \left(\lalpha - \frac{\lbeta^2}{\leta} \right) \left( 1 -\lalpha \right) + \frac{\left( 1 -\lalpha \right)^2 }{4} \\
	& = \left( y \left(\lalpha - \frac{\lbeta^2}{\leta} \right) + \frac{1-\lalpha}{2} \right)^2 + y^2 \left(\lalpha - \frac{\lbeta^2}{\leta} \right) \frac{\left( \lbeta - \leta \right)^2}{\leta}.
\end{split}
\end{equation*} 
Since $\lalpha <1$ and $\frac{\lbeta^2}{\leta} \leq \lalpha$, this always holds true for $y\geq 0$.
It follows that $g$ is strictly increasing on $[0,\infty)$.

Recall that $0<Y_n\le\frac12$
for all $n\in\bbZ\cap(-\infty,N-1]$
and $Y_N=\frac12$.
In particular, $Y_{N-1}\le Y_N$.
The recursion $Y_n=g(Y_{n+1})$,
$n\in\bbZ\cap(-\infty,N-1]$
(cf.\ \eqref{eq:Yrecmi} and~\eqref{eq:defgforfp}),
implies that the sequence $Y$ is nondecreasing.
Hence, the limit $\lim_{n\to-\infty}Y_n$
exists and belongs to $[0,\frac12]$.
Moreover, it is the largest fixed point of $g$ in $[0,\frac{1}{2}]$.
Indeed, since $g$ is increasing,
for the largest fixed point $\bar y$ of $g$ in $[0,\frac{1}{2}]$,
we have that $y\ge\bar y$ implies $g(y)\ge g(\bar y)=\bar y$.
Hence, $\bar y$ is a lower bound of $Y$.
We obtain that $\lim_{n\to-\infty}Y_n\ge\bar y$
and is a fixed point of $g$, which means that
$\lim_{n\to-\infty}Y_n=\bar y$.

\begin{enumerate}
\item
Suppose that $\lbeta = 1$.
The claim that $\leta>1$ follows from $\frac{\lbeta^2}{\leta}\le\alpha<1$.
A direct calculation shows that $g\left(\frac{1}{2}\right)=\frac{1}{2}$.
Since $Y_N=\frac{1}{2}$, it follows that $Y_n  = \frac{1}{2}$ for all $n\in \Z\cap (-\infty,N]$.

\item
Suppose that $\leta \leq 1$. First notice that $\lbeta^2 \leq \leta \lalpha < \leta \leq 1$ and hence $\lbeta < 1$.
Now it follows from~\eqref{eq:defgforfp} that
for all $y>0$ we have $g(y)<y$.
This yields that $0$ is the only fixed point of $g$ on $[0,\infty)$
and hence $\lim_{n\to-\infty}Y_n=0$.

\item
Suppose that $\lbeta \neq 1$ and $\leta>1$.
In this case
\begin{equation}\label{eq:def:bary}
\bar y = \frac{ \frac{1}{2} \left( 1 - \lalpha \right) \left( \leta - 1 \right)}{\left( 1- \lalpha \right)  \left(\leta - 1\right) + \left( \lbeta - 1 \right)^2 } \in \left(0,\frac{1}{2}\right)
\end{equation}
is a further fixed point of $g$ and the only one in $(0,\infty)$. Indeed, 
for $y\in (0,\infty)$ the condition $g( y)=y$ is equivalent to
\begin{equation}\label{eq:fpcondnotzero2}
y\left( \left(\lbeta - \leta \right)^2 - \left(\leta - 1\right) \left( \lalpha - 2 \lbeta + \leta\right) \right) = \frac{1}{2} \left( 1 - \lalpha \right) \left( \leta - 1 \right) . 
\end{equation}
From the fact that
\begin{equation*}
\left(\lbeta - \leta \right)^2 - \left(\leta - 1\right) \left( \lalpha - 2 \lbeta + \leta\right) = \left( 1- \lalpha \right)  \left(\leta - 1\right) + \left( \lbeta - 1 \right)^2
>\left( 1- \lalpha \right)  \left(\leta - 1\right)>0
\end{equation*}
we deduce~\eqref{eq:def:bary}, which completes the proof.
\end{enumerate}
\end{proof}

The following lemma provides an example where the process $Y=\left( Y_n\right)_{n \in \Z\cap(-\infty,N]}$ defined by $Y_N=\frac{1}{2}$ and \eqref{eq:defY} does not converge. In this example the price impact process $\gamma$ is a submartingale (cf.\ the discussion following Proposition~\ref{prop:08062019a1}).

\begin{lemma}\label{lemma:Ydoesnotconverge}
Suppose that the assumptions of Lemma~\ref{prop:mii} hold true. 
Let $\lbeta_1,\lbeta_2,\leta_1,\leta_2 \in (0,\infty)$ and $\lalpha_1, \lalpha_2 \in (0,1)$ such that for all $k\in \N_0$ it holds $\lbeta_1=E\left[ \beta_{N-2k-1} \right]=1$, $\lbeta_2=E\left[ \beta_{N-2k} \right]\neq 1$, $\leta_1=E\left[ \eta_{N-2k-1} \right] $, $\leta_2=E\left[ \eta_{N-2k} \right] >1$, $\lalpha_1=E\left[ \frac{\beta_{N-2k-1}^2}{\eta_{N-2k-1}} \right]$ and $\lalpha_2=E\left[ \frac{\beta_{N-2k}^2}{\eta_{N-2k}} \right]$. 

Then, $\gamma$ is a submartingale and $Y=\left( Y_n\right)_{n \in \Z\cap(-\infty,N]}$ does not converge as $n \to -\infty$. In particular, the sequence $Y$ is not monotone.
\end{lemma}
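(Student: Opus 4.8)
The plan is to use (PIMI) to collapse the recursion for $Y$ into an alternating two-step deterministic dynamical system, and then to read off non-convergence from the fact that the two one-step maps have essentially disjoint fixed-point sets. Since (PIMI) holds and the integrability hypotheses of Lemma~\ref{prop:mii} are in force, that lemma gives that $Y$ is deterministic, $(0,\tfrac12]$-valued and obeys~\eqref{eq:Ydeterministic}. First I would introduce, for $i\in\{1,2\}$, the map $g_i\colon[0,\infty)\to\bbR$ obtained from the right-hand side of~\eqref{eq:defgforfp} by replacing $(\leta,\lbeta,\lalpha)$ by $(\leta_i,\lbeta_i,\lalpha_i)$. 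Because $E[\beta_m]$, $E[\eta_m]$ and $E[\beta_m^2/\eta_m]$ depend on $m$ only through the parity of $N-m$, recursion~\eqref{eq:Ydeterministic} reads $Y_{N-j}=g_2(Y_{N-j+1})$ for odd $j$ and $Y_{N-j}=g_1(Y_{N-j+1})$ for even $j$; iterating this,
\[
Y_{N-2k}=G^k(\tfrac12),\qquad Y_{N-2k-1}=g_2(Y_{N-2k}),\qquad k\in\bbN_0,
\]
where $G:=g_1\circ g_2$. The whole question then reduces to analysing the orbit $(G^k(\tfrac12))_{k\in\bbN_0}$ together with the two parity subsequences of $Y$.

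The submartingale assertion I would dispose of first. By the Cauchy--Schwarz inequality $(E[\beta_m])^2\le E[\beta_m^2/\eta_m]\,E[\eta_m]$, and since $E[\beta_m^2/\eta_m]\in\{\lalpha_1,\lalpha_2\}\subseteq(0,1)$ this yields $E[\eta_m]>(E[\beta_m])^2$; this lower bound equals $1$ when $N-m$ is odd (then $E[\beta_m]=\lbeta_1=1$) and exceeds $1$ when $N-m$ is even (then $E[\eta_m]=\leta_2>1$). Hence $E[\eta_m]>1$ for every $m$, and under (PIMI) $E_{m-1}[\gamma_m]=\gamma_{m-1}E[\eta_m]\ge\gamma_{m-1}$, so $\gamma$ is a submartingale.

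Next I would collect the properties of $g_1,g_2$, all of which come out of the computations already performed in the proof of Proposition~\ref{propo:longtimelimMI} (using $\lalpha_i<1$ and, by Cauchy--Schwarz, $\lbeta_i^2/\leta_i\le\lalpha_i$): each $g_i$ is smooth and strictly increasing on $[0,\infty)$, with $g_i(0)=0$ and $g_i'(0)=\leta_i$; moreover $\leta_1>1$ (from $1=\lbeta_1^2\le\lalpha_1\leta_1<\leta_1$) and $\leta_2>1$ by hypothesis, so that $G'(0)=g_1'(0)\,g_2'(0)=\leta_1\leta_2>1$. Since $\lbeta_1=1$, a direct calculation gives $g_1(\tfrac12)=\tfrac12$, and the fixed-point relation~\eqref{eq:fpcondnotzero2} (whose coefficient of $y$ is then $(\leta_1-1)(1-\lalpha_1)>0$) shows that the only fixed points of $g_1$ in $[0,\infty)$ are $0$ and $\tfrac12$. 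Since $\lbeta_2\ne1$, one computes $g_2(\tfrac12)-\tfrac12=-\frac{(\lbeta_2-1)^2}{2(\leta_2-2\lbeta_2+1)}<0$ (where $\leta_2-2\lbeta_2+1\ge1-\lalpha_2>0$), while by the last case of the proof of Proposition~\ref{propo:longtimelimMI} the fixed points of $g_2$ in $[0,\infty)$ are exactly $0$ and some $\bar y_2\in(0,\tfrac12)$. Consequently $0$ is the \emph{only} common fixed point of $g_1$ and $g_2$, and $G$ is strictly increasing with $G(0)=0$, $G'(0)>1$ and $G(\tfrac12)=g_1(g_2(\tfrac12))<g_1(\tfrac12)=\tfrac12$.

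I would finish as follows. Put $a_k:=Y_{N-2k}\in(0,\tfrac12]$. From $a_0=\tfrac12$, $a_1=G(\tfrac12)<\tfrac12$ and the monotonicity of $G$, an induction shows that $(a_k)$ is strictly decreasing, hence $a_k\downarrow a_\infty\in[0,\tfrac12)$ with $G(a_\infty)=a_\infty$. The crucial step is to show $a_\infty>0$: if $a_\infty=0$ then, as $G$ is $C^1$ near $0$ with $G'(0)>1$, there is $\delta>0$ with $G(y)>y$ on $(0,\delta)$, so that $a_{k+1}=G(a_k)>a_k$ as soon as $a_k<\delta$, contradicting strict monotonicity. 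Now if $Y_n$ converged to some $L$ as $n\to-\infty$, then, passing to the limit in $Y_{N-2k-1}=g_2(Y_{N-2k})$ and in $Y_{N-2k}=g_1(Y_{N-2k+1})$, one would obtain $L=g_2(L)=g_1(L)$, forcing $L=0$; but $L=\lim_k a_k=a_\infty>0$, a contradiction. Hence $Y$ does not converge, and being $(0,\tfrac12]$-valued it is bounded, so it cannot be monotone. I expect the step $a_\infty>0$ to be the only genuinely delicate point: it is exactly where the hypotheses that force $\leta_1,\leta_2>1$ — equivalently, that $0$ is a repelling fixed point of $G$ — are used, whereas everything else is bookkeeping built on the fixed-point analysis already available from Proposition~\ref{propo:longtimelimMI}.
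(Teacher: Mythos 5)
Your proposal is correct, and its skeleton coincides with the paper's: both exploit Lemma~\ref{prop:mii} to make $Y$ deterministic, introduce the two increasing maps $g_1,g_2$ from~\eqref{eq:defgforfp}, observe that the recursion alternates between them, identify the fixed points ($\{0,\tfrac12\}$ for $g_1$, $\{0,\lYtwo\}$ with $\lYtwo\in(0,\tfrac12)$ for $g_2$, so that $0$ is the only common one), and derive the contradiction from the fact that the limit(s) would have to be a common fixed point yet cannot be $0$. The genuine difference lies in how positivity of the limit is secured. The paper proves by induction that $Y_{N-m}>\lYtwo$ for all $m\in\N_0$ (using $g_2(\lYtwo)=\lYtwo$, $g_1(y)>y$ on $(0,\tfrac12)$ and monotonicity), which gives the explicit lower bound $\lYtwo$ for both parity subsequences. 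You instead pass to the composed map $G=g_1\circ g_2$, note $G'(0)=\leta_1\leta_2>1$, and argue that the strictly decreasing orbit $(Y_{N-2k})_{k\in\N_0}=(G^k(\tfrac12))_{k\in\N_0}$ cannot converge to the repelling fixed point $0$. Both arguments are sound (your computations $g_1'(0)=\leta_1$, $\leta_1>1$ from $1=\lbeta_1^2\le\lalpha_1\leta_1<\leta_1$, and $g_2(\tfrac12)<\tfrac12$ all check out, and the equivalence of $g(y)=y$ with~\eqref{eq:fpcondnotzero2} for $y>0$ is pure algebra, independent of the case distinction in Proposition~\ref{propo:longtimelimMI}); the paper's induction buys a concrete quantitative bound $Y_n>\lYtwo$ and exhibits both subsequential limits $\lYe\ne\lYo$, whereas your local (derivative) argument is slightly shorter but only qualitative. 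Your final step is also marginally more economical: you only need convergence of the even subsequence plus the hypothetical convergence of the full sequence, rather than establishing convergence of both parity subsequences separately.
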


\begin{proof}
Note first that $\lbeta_1=1$ and $\lalpha_1<1$ imply that $\leta_1>1$ by the Cauchy-Schwarz inequality. It follows from $1 < \leta_1 = E\left[ \eta_{N-2k-1} \right] = E_{N-2k-2}\left[ \eta_{N-2k-1} \right] = E_{N-2k-2}\left[ \frac{\gamma_{N-2k-1}}{\gamma_{N-2k-2}} \right] = \frac{1}{\gamma_{N-2k-2}} E_{N-2k-2}\left[ \gamma_{N-2k-1} \right] $ and $1 < \leta_2 = \frac{1}{\gamma_{N-2k-1}} E_{N-2k-1}\left[ \gamma_{N-2k} \right] $ for all $k \in \N_0$ that $\gamma$ is a submartingale.

For $i \in \{1,2\}$, denote by $g_i$ the function defined by \eqref{eq:defgforfp} with $\lbeta=\lbeta_i$, $\leta=\leta_i$ and $\lalpha=\lalpha_i$. Recall that $g_1,g_2$ are strictly increasing and note that for $k \in \N_0$, we have $Y_{N-2k-2}=g_1\left(Y_{N-2k-1}\right)$ and $Y_{N-2k-1}=g_2\left(Y_{N-2k}\right)$. 
Furthermore, the equations $g_i(y)=y$, $i\in\{1,2\}$,
are (non-degenerate) quadratic ones,
hence the functions $g_i$ have at most two fixed points.
We conclude that the only fixed points of $g_1$ are $0$ and $\frac{1}{2}$,
and the only fixed points of $g_2$ are given by $0$ and $\lYtwo \in \left(0,\frac{1}{2}\right)$ from~\eqref{eq:def:bary}. 
We also notice that
$g_1(y)>y$ for $y\in\left(0,\frac{1}{2}\right)$.

We prove by induction that $Y_{N-m}> \lYtwo$ for all $m \in \N_0$. The case $m=0$ is clear. For the induction step $\N_0 \ni m \to m+1 \in \N$, if $m$ is even, we have $Y_{N-m-1}=g_2\left( Y_{N-m} \right) > g_2\left( \lYtwo\right) = \lYtwo$. If $m$ is odd, it holds $Y_{N-m-1}=g_1\left( Y_{N-m}\right) > g_1\left(\lYtwo\right) > \lYtwo$.

It can further be proven inductively that $Y_{N-m}\geq Y_{N-m-2}$ for all $m \in\N_0$ since $g_1,g_2$ are increasing and $Y_{N-2}\leq\frac{1}{2}=Y_N$.

Therefore, the subsequences $\left( Y_{N-2k}\right)_{k\in\N_0}$ and $\left( Y_{N-2k-1}\right)_{k\in\N_0}$ of $Y$ are decreasing in $k\in\N_0$ and bounded from below by $\lYtwo$, which implies that the limits $\lYe = \lim_{k\to\infty}Y_{N-2k} \geq \lYtwo$ and $\lYo = \lim_{k\to\infty}Y_{N-2k-1} \geq \lYtwo$ exist. 
Taking limits on both sides of $Y_{N-2k-1}=g_2\left(Y_{N-2k}\right)$, we obtain $\lYo=g_2\left(\lYe\right)$ by continuity of $g_2$. Similarly, it holds that $\lYe=g_1\left(\lYo\right)$. 
Now, if $\lYe$ and $\lYo$ were equal, then $\lYe=\lYo$ would be a common fixed point of $g_1$ and $g_2$ and hence $0$, which is a contradiction to $\lYe\geq \lYtwo >0$. 
We thus conclude that $Y$ does not converge. 
\end{proof}

\section{Round trips}\label{sec:roundtrips}

Let $n\in\bbZ\cap(-\infty,N-1]$.
Execution strategies in $\mathcal A_n(0)$ are called round trips.
It follows from Theorem~\ref{thm:mainres} that
if initially the agent has no position in the asset, i.e.,
$x=0$ at time $ n \in \Z \cap (-\infty,N]$,
then the minimal costs amount to
\begin{equation}\label{eq:30052019a1}
V_n(0,d)=\frac{d^2}{\gamma_n}\left(Y_n-\frac{1}{2}\right)
\end{equation}
for all $d\in \R$. In particular, it holds that $V_n(0,0)=0$, i.e.,
without initial deviation of the price process
the agent cannot make profits in expectation.
In other words, there are no profitable round trips
whenever $d=0$. The existence of profitable round trips is sometimes also referred to as price manipulation
(see, e.g.,
\cite{alfonsi2010boptimal},
\cite{gatheral2010no}
or \cite{huberman2004price}).
In this regard, if there is no initial deviation of the price process (i.e., $d=0$), then our model does not admit price manipulation.

Below we study existence of profitable round trips
when the price of a share deviates from the unaffected price,
i.e., it holds $d\ne0$.
We thus assume $d\ne0$ in this section.
Recall from \eqref{eq:sol_valfct} that the random variable $Y_n$ is $(0,\frac12]$-valued.
Together with~\eqref{eq:30052019a1},
this implies the following classification:
\begin{itemize}
\item
on $\{Y_n<\frac12\}$ there exist profitable round trips,

\item
on $\{Y_n=\frac12\}$ there are no profitable round trips.
\end{itemize}
Thus, the question reduces to finding
a tractable description of the event
$\{Y_n=\frac12\}$.
We first characterize this event
in Proposition~\ref{prop:31052019a1}
and discuss several consequences of this characterization.
The proof of Proposition~\ref{prop:31052019a1}
is postponed to Subsection~\ref{subsec:rt_proofs}.

\begin{propo}\label{prop:31052019a1}
Let the assumptions of Theorem~\ref{thm:mainres} be satisfied.
Then we have
$$
\left\{Y_n=\frac{1}{2}\right\} =
\left\{ E_n\left[ Y_{n+1}\right] = \frac12, E_n\left[\beta_{n+1}\right] =1\right\},
\quad n \in \Z \cap (-\infty,N-1],
$$
where here and below we understand
the equalities for events up to $P$-null sets.
\end{propo}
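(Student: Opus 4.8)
The plan is to argue directly from recursion~\eqref{eq:defY}, using only the two facts already furnished by Theorem~\ref{thm:mainres}: that $0<Y_m\le\frac12$ for every index $m$, and that the denominator appearing in~\eqref{eq:defY} is a.s.\ strictly positive. I would work conditionally on $\cF_n$ and, on a fixed event of $\cF_n$, abbreviate $A=E_n[\eta_{n+1}Y_{n+1}]$, $N_0=E_n[Y_{n+1}(\beta_{n+1}-\eta_{n+1})]$ and $D_0=E_n\big[\frac{Y_{n+1}}{\eta_{n+1}}(\beta_{n+1}-\eta_{n+1})^2+\frac12(1-\frac{\beta_{n+1}^2}{\eta_{n+1}})\big]$, so that~\eqref{eq:defY} becomes $Y_n=A-N_0^2/D_0$ with $D_0>0$; equivalently $Y_n=\min_{t\in\bbR}(A-2tN_0+t^2D_0)$, the minimum being attained at the unique point $t=N_0/D_0$. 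The one computation that makes everything work is the algebraic identity
\[
D_0+2N_0+A=\frac12+E_n\!\left[\left(Y_{n+1}-\frac12\right)\frac{\beta_{n+1}^2}{\eta_{n+1}}\right],
\]
obtained by expanding $(\beta_{n+1}-\eta_{n+1})^2$ and cancelling terms; its left-hand side is the value of the above quadratic at $t=-1$, and since $Y_{n+1}\le\frac12$ its right-hand side is $\le\frac12$. (Financially, $t=-1$ is the immediate liquidation of one unit, so the identity just says that immediate liquidation costs no more than $\frac{\gamma_n}{2}$; the squeeze below expresses that $Y_n=\frac12$ forces this inequality to be an equality.)

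For the inclusion $\{Y_n=\tfrac12\}\subseteq\{E_n[Y_{n+1}]=\tfrac12,\,E_n[\beta_{n+1}]=1\}$: on $\{Y_n=\frac12\}$ the quadratic has minimum $\frac12$, so $D_0+2N_0+A\ge\frac12$ there, and together with the displayed identity this forces $D_0+2N_0+A=\frac12$. Reading this equality in two ways, I would deduce, first, that $E_n[(Y_{n+1}-\frac12)\beta_{n+1}^2/\eta_{n+1}]=0$ on $\{Y_n=\frac12\}$, whence $Y_{n+1}=\frac12$ a.s.\ on that event (a nonpositive random variable whose $\cF_n$-conditional expectation vanishes on an $\cF_n$-set vanishes a.s.\ there), and in particular $E_n[Y_{n+1}]=\frac12$; and, second, that $t=-1$ is a minimizer of the strictly convex quadratic, i.e.\ $N_0=-D_0$ on $\{Y_n=\frac12\}$. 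Substituting $Y_{n+1}=\frac12$ gives $N_0=\frac12(E_n[\beta_{n+1}]-E_n[\eta_{n+1}])$ and $D_0=\frac12(1-2E_n[\beta_{n+1}]+E_n[\eta_{n+1}])$, and $N_0=-D_0$ then collapses to $E_n[\beta_{n+1}]=1$.

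For the reverse inclusion: on $\{E_n[Y_{n+1}]=\frac12,\,E_n[\beta_{n+1}]=1\}$ the bound $Y_{n+1}\le\frac12$ again forces $Y_{n+1}=\frac12$ a.s.\ there; plugging $Y_{n+1}=\frac12$ and $E_n[\beta_{n+1}]=1$ into $A,N_0,D_0$ yields $A=\frac12E_n[\eta_{n+1}]$, $N_0=\frac12(1-E_n[\eta_{n+1}])$ and $D_0=\frac12(E_n[\eta_{n+1}]-1)$, with $E_n[\eta_{n+1}]>1$ by the Cauchy--Schwarz bound~\eqref{eq:27052019a1}, so that $Y_n=A-N_0^2/D_0=\frac12E_n[\eta_{n+1}]-\frac12(E_n[\eta_{n+1}]-1)=\frac12$. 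The only delicate point throughout is the bookkeeping ``on an event of $\cF_n$'': all quantities in play are $\cF_n$-measurable, so the substantive ingredients are the two passages from an equality of conditional expectations to an a.s.\ equality of random variables (both of the type ``nonpositive r.v.\ with zero conditional expectation is zero'') and the uniqueness of the minimizer of a strictly convex quadratic with $\cF_n$-measurable coefficients; apart from that the argument is just the expansion leading to the displayed identity and the elementary solution of $N_0=-D_0$, so I expect the written-out proof to be short.
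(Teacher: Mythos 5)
Your proposal is correct. It rests on the same algebraic core as the paper's proof but organizes it differently: the paper completes the square in the recursion (via the auxiliary variable $\nu=\frac12-(\frac12-Y_{n+1})\frac{\beta_{n+1}^2}{\eta_{n+1}}$), obtaining the exact representation $Y_n=\frac12-E_n\bigl[(\frac12-Y_{n+1})\frac{\beta_{n+1}^2}{\eta_{n+1}}\bigr]-\frac{\gamma_n}{a_n}\bigl(\cdot\bigr)^2$, from which the equality of events $\{Y_n=\frac12\}=\{E_n[(\frac12-Y_{n+1})\frac{\beta_{n+1}^2}{\eta_{n+1}}]=0,\ E_n[Y_{n+1}\beta_{n+1}]=\frac12\}$ drops out in one stroke; it then converts the first condition into $E_n[Y_{n+1}]=\frac12$ and the second into $E_n[\beta_{n+1}]=1$ by the same ``nonnegative variable with vanishing conditional expectation on an $\cF_n$-set'' arguments you use. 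You instead read the recursion as the minimum over $t$ of the strictly convex quadratic $A-2tN_0+t^2D_0$, evaluate it at $t=-1$, and note that $q(-1)=D_0+2N_0+A=\frac12+E_n[(Y_{n+1}-\frac12)\frac{\beta_{n+1}^2}{\eta_{n+1}}]$ — an identity that is literally the computation in \eqref{eq:Yupb} of the appendix — so that on $\{Y_n=\frac12\}$ a squeeze forces both $Y_{n+1}=\frac12$ and, by uniqueness of the minimizer, $N_0=-D_0$, which after substitution collapses to $E_n[\beta_{n+1}]=1$; the reverse inclusion is then direct substitution (your appeal to \eqref{eq:27052019a1} should formally be the conditional Cauchy--Schwarz bound at time $n$ rather than $N-1$, but the argument is verbatim the same, and in fact $D_0=a_n/\gamma_n>0$ already gives $E_n[\eta_{n+1}]>1$ on the event). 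What your route buys is a transparent financial reading — $t=-1$ is immediate liquidation, and $Y_n=\frac12$ forces it to be the unique optimal trade, which also connects directly to the criterion of Lemma~\ref{lem:01062019a1}, since $N_0+D_0$ is exactly the conditional expectation appearing there — at the cost of handling the two inclusions separately and invoking the extra squeeze/uniqueness step; the paper's completed-square formula gives the two-sided event identity immediately. The measure-theoretic bookkeeping you flag (pulling $\cF_n$-measurable indicators inside $E_n$ and the two passages from vanishing conditional expectations to a.s.\ equalities) is handled exactly as in the paper, so there is no gap.
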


\begin{corollary}\label{cor:30052019a1}
Under the assumptions of Theorem~\ref{thm:mainres}
it holds
$$
\left\{ Y_{N-1}=\frac{1}{2} \right\}=
\{ E_{N-1}\left[ \beta_{N} \right]=1 \}.
$$
\end{corollary}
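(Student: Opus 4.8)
The plan is to read off the statement as the special case $n=N-1$ of Proposition~\ref{prop:31052019a1}. The only extra input needed is the terminal condition from Theorem~\ref{thm:mainres}, namely $Y_N=\frac12$, which holds surely (not merely a.s.). Consequently $E_{N-1}[Y_N]=\frac12$ holds everywhere, so the event $\{E_{N-1}[Y_N]=\frac12\}$ coincides with the whole sample space up to a $P$-null set.

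Applying Proposition~\ref{prop:31052019a1} with $n=N-1$ then gives, up to $P$-null sets,
\[
\left\{Y_{N-1}=\frac12\right\}
=\left\{E_{N-1}[Y_N]=\frac12,\ E_{N-1}[\beta_N]=1\right\}
=\{E_{N-1}[\beta_N]=1\},
\]
since the first of the two defining conditions is automatic. This is precisely the asserted identity. There is no genuine obstacle here; the single point worth flagging is that the reduction hinges on $Y_N$ being the deterministic constant $\frac12$, which trivializes the conditional expectation $E_{N-1}[Y_N]$ and thereby collapses the two-part characterization in Proposition~\ref{prop:31052019a1} to a single condition involving only $\beta_N$.
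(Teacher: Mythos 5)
Your argument is exactly the paper's: the corollary follows from Proposition~\ref{prop:31052019a1} with $n=N-1$, since $Y_N=\frac12$ makes the condition $E_{N-1}[Y_N]=\frac12$ automatic and collapses the characterization to $\{E_{N-1}[\beta_N]=1\}$. This is correct and matches the paper's (one-line) proof.
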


\begin{proof}
The result is immediate because
$Y_N=\frac12$.
\end{proof}

\begin{corollary}\label{propo:propY}
Under the assumptions of Theorem~\ref{thm:mainres} we have the following inclusions for $n \in \Z \cap (-\infty,N-1]$:
\begin{enumerate}
\item\label{propYitem2}
$\{Y_n=\frac{1}{2}\} \subseteq \{ Y_{n+1}=\frac{1}{2} \}$
(equivalently,
$\{Y_{n+1}<\frac{1}{2}\} \subseteq \{ Y_{n}<\frac{1}{2} \}$) and

\item\label{propYitem3}
$\{Y_n=\frac{1}{2}\}
\subseteq \{E_n\left[ \beta_{n+1} \right]=1\}
\subseteq \{E_n\left[ \beta_{n+1} \right]\ge1\}
\subseteq \{E_n\left[ \eta_{n+1}\right]>1\}$
(equivalently,
$\{E_n\left[ \eta_{n+1}\right]\le1\}
\subseteq \{E_n\left[ \beta_{n+1} \right]<1\}
\subseteq\{E_n\left[ \beta_{n+1} \right]\ne1\}
\subseteq\{Y_n<\frac{1}{2}\}$).
\end{enumerate}
\end{corollary}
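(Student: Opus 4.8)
The plan is to derive both parts directly from the characterization established in Proposition~\ref{prop:31052019a1}, namely $\{Y_n=\frac12\}=\{E_n[Y_{n+1}]=\frac12,\ E_n[\beta_{n+1}]=1\}$ (understood up to $P$-null sets), together with the bound $0<Y_{n+1}\le\frac12$ from~\eqref{eq:sol_valfct}, the structural assumption $E_n[\beta_{n+1}^2/\eta_{n+1}]<1$ a.s., and the Cauchy--Schwarz inequality for conditional expectations. All the parenthetical reformulations in the statement are just contrapositives/complements, so they require no separate argument.

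For item~(\ref{propYitem2}), I would argue as follows. On the event $A:=\{Y_n=\frac12\}\in\cF_n$, Proposition~\ref{prop:31052019a1} gives $E_n[\tfrac12-Y_{n+1}]=0$. Since $\tfrac12-Y_{n+1}\ge0$ a.s.\ by~\eqref{eq:sol_valfct}, the random variable $(\tfrac12-Y_{n+1})\mathbf 1_A$ is nonnegative with zero expectation (integrate $E_n[(\tfrac12-Y_{n+1})\mathbf 1_A]=\mathbf 1_A E_n[\tfrac12-Y_{n+1}]=0$ over $\Omega$), hence it vanishes a.s.; that is, $Y_{n+1}=\frac12$ a.s.\ on $A$, which is the claimed inclusion $\{Y_n=\frac12\}\subseteq\{Y_{n+1}=\frac12\}$.

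For item~(\ref{propYitem3}), the first inclusion is immediate from Proposition~\ref{prop:31052019a1} and the second is trivial, so only $\{E_n[\beta_{n+1}]\ge1\}\subseteq\{E_n[\eta_{n+1}]>1\}$ needs work. Here I would apply Cauchy--Schwarz to the factorization $\beta_{n+1}=\frac{\beta_{n+1}}{\sqrt{\eta_{n+1}}}\cdot\sqrt{\eta_{n+1}}$ to obtain
\[
\bigl(E_n[\beta_{n+1}]\bigr)^2\le E_n\!\left[\frac{\beta_{n+1}^2}{\eta_{n+1}}\right]E_n[\eta_{n+1}]<E_n[\eta_{n+1}]\quad\text{a.s.,}
\]
where the last inequality uses the standing assumption $E_n[\beta_{n+1}^2/\eta_{n+1}]<1$ a.s.\ (this is the conditional analogue of~\eqref{eq:27052019a1}). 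On $\{E_n[\beta_{n+1}]\ge1\}$ the left-hand side is $\ge1$, so $E_n[\eta_{n+1}]>1$ there, giving the inclusion.

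I do not anticipate a genuine obstacle: once Proposition~\ref{prop:31052019a1} is in hand, (\ref{propYitem2}) is a one-line conditional-expectation argument and (\ref{propYitem3}) is a Cauchy--Schwarz estimate combined with the structural hypothesis. The only point deserving a little care is the step in (\ref{propYitem2}) that ``a nonnegative random variable whose conditional expectation vanishes on an $\cF_n$-set must vanish a.s.\ on that set'', which is precisely where the positivity bound $Y_{n+1}\le\frac12$ from Theorem~\ref{thm:mainres} is used.
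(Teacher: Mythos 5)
Your proposal is correct and follows essentially the same route as the paper: part~(\ref{propYitem2}) via Proposition~\ref{prop:31052019a1} combined with the observation that $\frac12-Y_{n+1}\ge0$ has vanishing conditional expectation on the $\cF_n$-set $\{Y_n=\frac12\}$ (the paper's argument via~\eqref{eq:Yequnderint}), and part~(\ref{propYitem3}) via the conditional Cauchy--Schwarz inequality together with the standing assumption $E_n\left[\frac{\beta_{n+1}^2}{\eta_{n+1}}\right]<1$ a.s. No gaps.
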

The proof of Corollary~\ref{propo:propY} is given in Subsection~\ref{subsec:rt_proofs}.

\paragraph{Discussion}
In the literature on optimal execution
it is often assumed that the resilience process
$\beta$ takes values in $(0,1)$.
In this case we always have profitable round trips
whenever $d\ne0$,
as we know that the deviation
will go towards zero due to the resilience
and we can make use of it in constructing
a profitable round trip 
(cf.\ Remark~8.2 in \cite{fruth2014optimal}
and the discussion after Model~8.3
in \cite{fruth2019optimal}).
Formally, this fact follows from Corollary~\ref{propo:propY}. 
A natural generalization of this fact
to the case of (only) positive $\beta$
is the inclusion
$\{E_n[\beta_{n+1}]\ne1\}
\subseteq\{Y_n<\frac12\}$ 
(again Corollary~\ref{propo:propY}).
The intuition is that 
on the event $\{E_n[\beta_{n+1}]\ne1\}$
we ``expect'' in which direction
the deviation will go in the absence of trading.
A new qualitative effect in our setting is
that the situation of nonexistence
of profitable round trips is possible.
The previous discussion explains
that we necessarily need to be on the event
$\{E_n[\beta_{n+1}]=1\}$
for the non-existence of profitable round trips.
A somewhat unexpected effect is, however,
that the inclusion
$\{Y_n=\frac{1}{2}\}
\subseteq \{E_n\left[ \beta_{n+1} \right]=1\}$
can be strict and hence there might exist
profitable round trips on the event
$\{E_n\left[ \beta_{n+1} \right]=1\}$ 
(see Examples \ref{ex:11072019a1} and~\ref{ex:11072019a2} below
for a more precise discussion). 
In particular,
we cannot distinguish
$Y_n=\frac12$ from $Y_n<\frac12$
on the basis of $E_n[\beta_{n+1}]$ alone,
and, indeed, the exact characterization of the event
$\{Y_n=\frac12\}$
also includes $E_n[Y_{n+1}]$
(see Proposition~\ref{prop:31052019a1}).

In more detail, we have the following picture.
At time $N-1$ we distinguish between
$Y_{N-1}=\frac12$ from $Y_{N-1}<\frac12$
on the basis of $E_{N-1}[\beta_{N}]$ alone
(Corollary~\ref{cor:30052019a1}).
To discuss the step $n+1\to n$ we consider
the partition of $\Omega$ into two disjoint events
(in~$\cF_n$)
\begin{equation}\label{eq:30052019a2}
\Omega
=
\left\{E_n[Y_{n+1}]<\frac12\right\}
\sqcup
\left\{E_n[Y_{n+1}]=\frac12\right\}
=:A_n\sqcup B_n.
\end{equation}
On $A_n$ there always exist profitable round trips
when we start at time $n$,
while on $B_n$ we distinguish
between the nonexistence and the existence
of profitable round trips on the basis
of whether $E_n[\beta_{n+1}]=1$
or $E_n[\beta_{n+1}]\ne1$ holds
(Proposition~\ref{prop:31052019a1}).

A special case, where
we obtain an explicit criterion
to distinguish between
$Y_n=\frac12$ and $Y_n<\frac12$
for all $n\in\bbZ\cap(-\infty,N-1]$
only in terms of the process $\beta$
is the case of processes with independent multiplicative
increments of Section~\ref{sec:PIMI}:

\begin{corollary}\label{cor:11072019a1}
Let the assumptions of Lemma~\ref{prop:mii} be in force.
We define
$$
n_0=N\wedge\inf\{n\in\bbZ\cap(-\infty,N-1]:
E[\beta_k]=1\text{ for all }k\in\bbZ\cap[n+1,N]\}
$$
($\inf\emptyset=\infty$)
and notice that $n_0\in(\bbZ\cup\{-\infty\})\cap[-\infty,N]$.
Then, for the (deterministic) process $Y$, we have
\begin{itemize}
\item
$Y_n<\frac12$ for $n\in\bbZ\cap(-\infty,n_0)$,

\item
$Y_n=\frac12$ for $n\in\bbZ\cap[n_0,N]$.
\end{itemize}
\end{corollary}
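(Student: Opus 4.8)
The plan is to deduce everything from the characterization of the event $\{Y_n=\frac12\}$ in Proposition~\ref{prop:31052019a1}. First I would check that this proposition applies in the present setting: the integrability hypotheses of Lemma~\ref{prop:mii} are exactly those imposed on $\beta_n,\gamma_n,\frac1{\gamma_n}$ in Theorem~\ref{thm:mainres}, while under (PIMI) one has $E_n\big[\frac{\beta_{n+1}^2}{\eta_{n+1}}\big]=E\big[\frac{\beta_{n+1}^2}{\eta_{n+1}}\big]<1$ a.s.\ and $\alpha_n=1-E\big[\frac{\beta_{n+1}^2}{\eta_{n+1}}\big]$ is a deterministic strictly positive constant, so $\frac1{\alpha_n}\in\Linfm$; hence all assumptions of Theorem~\ref{thm:mainres} are in force and Proposition~\ref{prop:31052019a1} may be used. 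Since $Y$ is deterministic (Lemma~\ref{prop:mii}) and $\beta_{n+1}$ is independent of $\cF_n$ (PIMI), we have $E_n[Y_{n+1}]=Y_{n+1}$ and $E_n[\beta_{n+1}]=E[\beta_{n+1}]$, so for every $n\in\bbZ\cap(-\infty,N-1]$ Proposition~\ref{prop:31052019a1} --- whose two sides are here each either $\Omega$ or $\emptyset$ --- reduces to the scalar equivalence $Y_n=\frac12\iff\big(Y_{n+1}=\frac12\text{ and }E[\beta_{n+1}]=1\big)$.

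Next I would prove by downward induction, starting from $Y_N=\frac12$, that for all $n\in\bbZ\cap(-\infty,N]$ one has $Y_n=\frac12$ if and only if $E[\beta_k]=1$ for all $k\in\bbZ\cap[n+1,N]$ (the latter condition being vacuously true when $n=N$). The base case $n=N$ is immediate, and the inductive step is just the scalar equivalence above combined with the induction hypothesis at $n+1$. Since $Y_n\in(0,\frac12]$ by Lemma~\ref{prop:mii}, the negation reads: $Y_n<\frac12$ if and only if $E[\beta_k]\ne1$ for some $k\in\bbZ\cap[n+1,N]$.

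Finally I would rewrite this in terms of $n_0$. The condition ``$E[\beta_k]=1$ for all $k\in\bbZ\cap[n+1,N]$'' holds if and only if there is no $k\in\bbZ\cap[n+1,N]$ with $E[\beta_k]\ne1$, i.e.\ if and only if $\sup\{k\in\bbZ\cap(-\infty,N]:E[\beta_k]\ne1\}\le n$; and a short case distinction --- the case where this set is empty giving $n_0=-\infty$, the case $E[\beta_N]\ne1$ giving $n_0=N$, the remaining case giving $n_0=\max\{k\le N:E[\beta_k]\ne1\}$ --- shows that this supremum, with the convention $\sup\emptyset=-\infty$, equals $n_0$. Hence the conclusion of the induction becomes $Y_n=\frac12\iff n\ge n_0$ for $n\in\bbZ\cap(-\infty,N]$, i.e.\ $Y_n=\frac12$ on $\bbZ\cap[n_0,N]$ and $Y_n<\frac12$ on $\bbZ\cap(-\infty,n_0)$, which is the assertion. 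There is no genuine analytic difficulty here beyond Proposition~\ref{prop:31052019a1} itself; the only points requiring some care are the bookkeeping around $\inf\emptyset=\infty$ / $n_0=-\infty$ and the vacuous case $n=N$.
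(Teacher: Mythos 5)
Your proposal is correct and follows essentially the same route as the paper: the paper's proof simply cites "the previous discussion" (which reduces the question to the characterization $\{Y_n=\frac12\}=\{E_n[Y_{n+1}]=\frac12,\,E_n[\beta_{n+1}]=1\}$ of Proposition~\ref{prop:31052019a1}) together with the determinism of $Y$ from Lemma~\ref{prop:mii}, and you have merely written out the resulting scalar backward induction and the bookkeeping around $n_0$ explicitly and accurately.
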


\begin{proof}
The result follows from the previous discussion
and the fact that, by Lemma~\ref{prop:mii},
the process $Y$ is deterministic.
\end{proof}

The next proposition contains
a sufficient condition
for existence of profitable round trips,
which is expressed in different terms.

\begin{propo}\label{prop:30052019a1}
Under the assumptions of Theorem~\ref{thm:mainres}
for all $n \in \Z \cap (-\infty,N-1]$ it holds
$$
\left\{Y_n=\frac12\right\}
\subseteq
\left\{\min_{k\in\{n+1,\ldots,N\}}E_n(\gamma_k)\ge\gamma_n\right\}
$$
(equivalently,
$\{\min_{k\in\{n+1,\ldots,N\}}E_n(\gamma_k)<\gamma_n\}
\subseteq
\{Y_n<\frac12\}$).
\end{propo}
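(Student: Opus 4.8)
The plan is to combine the exact characterization of $\{Y_n=\frac12\}$ from Proposition~\ref{prop:31052019a1} with the improved upper bound for $Y$ recorded in Remark~\ref{rem:ubY}, namely $Y_n\le\frac{\min_{k\in\{n,\ldots,N\}}E_n[\gamma_k]}{2\gamma_n}$, and to iterate the latter backward in time. Since the statement is an inclusion of events in $\cF_n$, I would argue pointwise: fix $n\in\bbZ\cap(-\infty,N-1]$ and work on the event $\{Y_n=\frac12\}$, aiming to show that $E_n[\gamma_k]\ge\gamma_n$ holds there for every $k\in\{n+1,\ldots,N\}$.

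First I would establish, by downward induction on $n$, the key auxiliary inclusion
\[
\left\{Y_n=\frac12\right\}\subseteq\left\{E_n[\gamma_{n+1}Y_{n+1}]=\frac{\gamma_n}{2}\right\}.
\]
Indeed, on $\{Y_n=\frac12\}$ Proposition~\ref{prop:31052019a1} gives $E_n[\beta_{n+1}]=1$ and $E_n[Y_{n+1}]=\frac12$. Looking at~\eqref{eq:defY}, the subtracted term must vanish on this event, so $E_n[Y_{n+1}(\beta_{n+1}-\eta_{n+1})]=0$ there (the denominator is a.s.\ positive under the assumptions of Theorem~\ref{thm:mainres}), which combined with $E_n[Y_{n+1}\beta_{n+1}]=E_n[Y_{n+1}]=\frac12$ — this last equality needing a short argument, e.g.\ using that on $\{Y_n=\frac12\}$ one also has $\{Y_{n+1}=\frac12\}$ by Corollary~\ref{propo:propY}(1), so $Y_{n+1}=\frac12$ and $E_n[Y_{n+1}\beta_{n+1}]=\frac12E_n[\beta_{n+1}]=\frac12$ — yields $E_n[Y_{n+1}\eta_{n+1}]=\frac12$, i.e.\ $E_n[\gamma_{n+1}Y_{n+1}]=\frac{\gamma_n}{2}$, recalling $\gamma_{n+1}=\eta_{n+1}\gamma_n$.

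With this in hand, the main estimate follows by a telescoping/tower argument. On $\{Y_n=\frac12\}$ we have $\gamma_n=2E_n[\gamma_{n+1}Y_{n+1}]$. Since $\{Y_n=\frac12\}\subseteq\{Y_{n+1}=\frac12\}$ (Corollary~\ref{propo:propY}(1)), on this event $Y_{n+1}=\frac12$, hence $\gamma_n=E_n[\gamma_{n+1}]$, giving the case $k=n+1$. For $k>n+1$: iterating Corollary~\ref{propo:propY}(1) shows $\{Y_n=\frac12\}\subseteq\{Y_{n+1}=\frac12\}\subseteq\cdots\subseteq\{Y_k=\frac12\}$, and applying the auxiliary inclusion at each intermediate time $m\in\{n,\ldots,k-1\}$ together with $Y_m=\frac12$ on our event gives $\gamma_m=E_m[\gamma_{m+1}]$ there; the tower property of conditional expectation then telescopes these to $\gamma_n=E_n[\gamma_k]$ on $\{Y_n=\frac12\}$. (One could alternatively phrase this as: $(\gamma_m)$ restricted appropriately behaves like a martingale on the relevant event, but the pointwise telescoping is cleanest.) In particular $E_n[\gamma_k]=\gamma_n\ge\gamma_n$ for every $k\in\{n+1,\ldots,N\}$, so $\min_{k\in\{n+1,\ldots,N\}}E_n[\gamma_k]\ge\gamma_n$ on $\{Y_n=\frac12\}$, which is the claimed inclusion.

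The main obstacle I anticipate is bookkeeping with the nested conditioning across times $n,n+1,\ldots,k$: one must be careful that all the intermediate identities $\gamma_m=E_m[\gamma_{m+1}]$ hold on the \emph{same} event $\{Y_n=\frac12\}$ (not just on the larger $\{Y_m=\frac12\}$), which is exactly what the chain of inclusions from Corollary~\ref{propo:propY}(1) secures, and that the tower property is applied legitimately — this is fine since all the random variables involved are in $\Linfm$ under the standing assumptions, so all conditional expectations are well-defined and integrable. A secondary minor point is justifying $E_n[Y_{n+1}(\beta_{n+1}-\eta_{n+1})]=0$ on $\{Y_n=\frac12\}$ rigorously from~\eqref{eq:defY}; this is immediate once one notes the denominator in~\eqref{eq:defY} is strictly positive a.s.\ (as already observed in the Remark on the optimal deviation-position ratio) and $E_n[\eta_{n+1}Y_{n+1}]-Y_n=\frac{\gamma_n}{2}\cdot 0$ is forced by $E_n[\eta_{n+1}Y_{n+1}]=\frac12$ and $Y_n=\frac12$.
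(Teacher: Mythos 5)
Your key auxiliary claim is false, and this breaks the argument as written. On $\{Y_n=\frac12\}$ the subtracted fraction in \eqref{eq:defY} does \emph{not} vanish. Indeed, on this event Proposition~\ref{prop:31052019a1} and Corollary~\ref{propo:propY} give $Y_{n+1}=\frac12$, $E_n[\beta_{n+1}]=1$ and, crucially, $E_n[\eta_{n+1}]>1$ (part~\ref{propYitem3} of Corollary~\ref{propo:propY}). Plugging $Y_{n+1}=\frac12$ and $E_n[\beta_{n+1}]=1$ into \eqref{eq:defY} one finds on $\{Y_n=\frac12\}$ that $E_n[\eta_{n+1}Y_{n+1}]=\frac12E_n[\eta_{n+1}]>\frac12$, that $E_n[Y_{n+1}(\beta_{n+1}-\eta_{n+1})]=\frac12(1-E_n[\eta_{n+1}])\neq0$, and that the subtracted term equals $\frac12(E_n[\eta_{n+1}]-1)>0$; the two effects cancel to give $Y_n=\frac12$. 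Your justification of the vanishing is circular: it presupposes $E_n[\eta_{n+1}Y_{n+1}]=\frac12$, which is exactly the identity you are trying to derive and which you appear to have confused with the (true) identity $E_n[Y_{n+1}]=\frac12$. Consequently your conclusions $E_n[\gamma_{n+1}Y_{n+1}]=\frac{\gamma_n}{2}$ and $\gamma_n=E_n[\gamma_k]$ on $\{Y_n=\frac12\}$ are wrong; a concrete counterexample is $\beta\equiv1$ with a deterministic strictly increasing $\gamma$ satisfying $\beta_k^2<\eta_k$, where $Y\equiv\frac12$ (Proposition~\ref{propo:longtimelimMI}, case~1) but $E_n[\gamma_k]>\gamma_n$ for all $k>n$.

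The overall strategy is salvageable without the false identity: on $\{Y_n=\frac12\}$ one has $\{Y_n=\frac12\}\subseteq\{Y_m=\frac12\}\subseteq\{E_m[\eta_{m+1}]>1\}$ for every $m\in\{n,\ldots,N-1\}$ by Corollary~\ref{propo:propY}, i.e.\ $E_m[\gamma_{m+1}]\ge\gamma_m$ there, and since $\{Y_n=\frac12\}\in\cF_n\subseteq\cF_m$ the tower property telescopes these \emph{inequalities} (not equalities) to $E_n[\gamma_k]\ge\gamma_n$ on $\{Y_n=\frac12\}$ for all $k\in\{n+1,\ldots,N\}$. Note also that the paper's own proof is far shorter and follows the route you announce in your first sentence but never execute: by Remark~\ref{rem:ubY}, $Y_n\le\frac{\min_{k\in\{n,\ldots,N\}}E_n[\gamma_k]}{2\gamma_n}$, so on $\{\min_{k\in\{n+1,\ldots,N\}}E_n[\gamma_k]<\gamma_n\}$ one gets $Y_n<\frac12$ directly, which is the contrapositive of the claim; no backward iteration or characterization of $\{Y_n=\frac12\}$ is needed.
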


\begin{proof}
While the result can be again inferred from
the characterization of the event $\{Y_n=\frac12\}$
in Proposition~\ref{prop:31052019a1},
the shortest proof is to recall that
$Y_n<\frac12$
on the event
$\{\min_{k\in\{n+1,\ldots,N\}}E_n(\gamma_k)<\gamma_n\}$
due to Remark~\ref{rem:ubY}.
\end{proof}

We now discuss the inclusion $\{Y_n=\frac12\}\subseteq\{E_n[\beta_{n+1}]=1\}$ in more detail.
First we present a simple example, where for $n=N-2$ this inclusion is strict (cf.\ with Corollary~\ref{cor:30052019a1}).

\begin{ex}\label{ex:11072019a1}
We take any deterministic sequences $\beta$ and $\gamma$
with $\beta_N\ne1$ and $\beta_{N-1}=1$
that satisfy the assumptions of Theorem~\ref{thm:mainres}.
Then the process $Y$ is deterministic.
Corollary~\ref{cor:30052019a1} implies that $Y_{N-1}<\frac12$.
Hence, by Corollary~\ref{propo:propY}, $Y_{N-2}<\frac12$.
We thus have
$$
\left\{Y_{N-2}=\frac12\right\}=\emptyset\subsetneq\Omega=\{E_{N-2}[\beta_{N-1}]=1\}.
$$
In other words, for $d\ne0$, we have profitable round trips when we start at time $N-2$, although $E_{N-2}[\beta_{N-1}]=1$.
This is not surprising in this example, as we see that profitable round trips are already present when we start at time $N-1$
($Y_{N-1}<\frac12$, which is caused by $\beta_N\ne1$).
One might, therefore, intuitively expect that here all round trips do not contain a trade at time $N-2$,
but this is not the case!
If $d\ne0$, then we have for the (here, deterministic) optimal strategy $\xi^*(0,d)$ of~\eqref{eq:optstrat} that
$\xi^*_{N-2}(0,d)\ne0$. Indeed, a straightforward calculation using~\eqref{eq:optstrat}
and the fact that $\beta$, $\eta$, $Y$ are deterministic and $\beta_{N-1}=1$ reveals that
$\xi^*_{N-2}(0,d)=0$ if and only if it holds
$(\frac12-Y_{N-1})(1-\frac1{\eta_{N-1}})=0$,
but the latter is not true in this example because
$Y_{N-1}<\frac12$ and $\frac1{\eta_{N-1}}=\frac{\beta^2_{N-1}}{\eta_{N-1}}<1$
(recall the assumptions of Theorem~\ref{thm:mainres}).
\end{ex}

Example~\ref{ex:11072019a1} raises the question
of whether profitable round trips for $d\ne0$ with starting time $n\in\bbZ\cap(-\infty,N-2]$
can occur on the event
$\bigcap_{k=n}^{N-1}\{E_k[\beta_{k+1}]=1\}$.
Corollary~\ref{cor:11072019a1} implies that this is impossible in the framework of (PIMI)
(let alone with deterministic $\beta$ and~$\gamma$).
But, in general, such a phenomenon is possible,
and we present a specific example after the following lemma.

\begin{lemma}\label{lem:12072019a1}
Let the assumptions of Theorem~\ref{thm:mainres} be in force and let $n \in \Z \cap (-\infty,N-1]$.

\smallskip\noindent
(i) We have
\begin{equation}\label{eq:11072019a1}
\left\{Y_n=\frac12\right\}\subseteq\bigcap_{k=n}^{N-1}\{E_k[\beta_{k+1}]=1\}.
\end{equation}

\smallskip\noindent
(ii) The inclusion in~\eqref{eq:11072019a1} is strict
(in the sense that the set difference has positive $P$-probability)
if and only if
\begin{equation}\label{eq:11072019a2}
\bigcap_{k=n}^{N-1}\{E_k[\beta_{k+1}]=1\}\notin\ol\cF_n,
\end{equation}
where $\ol\cF_n=\sigma(\cF_n\cup\cN)$ with $\cN=\{A\in\cF:P(A)=0\}$.
\end{lemma}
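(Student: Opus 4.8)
For part~(i), the plan is to iterate Proposition~\ref{prop:31052019a1}. That proposition gives, at each level $k\in\bbZ\cap(-\infty,N-1]$, the identity $\{Y_k=\frac12\}=\{E_k[Y_{k+1}]=\frac12,\ E_k[\beta_{k+1}]=1\}$. Two facts are immediate from it: first, $\{Y_k=\frac12\}\subseteq\{E_k[\beta_{k+1}]=1\}$; second, since $0<Y_{k+1}\le\frac12$ a.s.\ (Theorem~\ref{thm:mainres}), the condition $E_k[Y_{k+1}]=\frac12$ forces $Y_{k+1}=\frac12$ a.s.\ on that event, i.e.\ $\{Y_k=\frac12\}\subseteq\{Y_{k+1}=\frac12\}$ (this is exactly Corollary~\ref{propo:propY}\eqref{propYitem2}). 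Combining these, a downward induction on $m$ starting from $k=n$ shows that $\{Y_n=\frac12\}\subseteq\{Y_k=\frac12\}\subseteq\{E_k[\beta_{k+1}]=1\}$ for every $k\in\{n,\ldots,N-1\}$, whence $\{Y_n=\frac12\}\subseteq\bigcap_{k=n}^{N-1}\{E_k[\beta_{k+1}]=1\}$. This part is routine.

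For part~(ii), write $C_n=\bigcap_{k=n}^{N-1}\{E_k[\beta_{k+1}]=1\}$, which lies in $\cF_{N-1}$ but need not lie in $\ol\cF_n$, and let $G_n=\{Y_n=\frac12\}\in\cF_n$. By part~(i), $G_n\subseteq C_n$ up to a null set, and the inclusion is strict (in the stated sense) iff $P(C_n\setminus G_n)>0$. The direction I would do first is the contrapositive of ``strict $\Rightarrow$ \eqref{eq:11072019a2}'': assume $C_n\in\ol\cF_n$; I claim then $C_n=G_n$ up to a null set, so the inclusion is not strict. The key computation is that on $C_n$ one can run the recursion~\eqref{eq:defY} backwards. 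Indeed, on the event $C_{n+1}=\{E_k[\beta_{k+1}]=1,\ k=n+1,\ldots,N-1\}$ one shows inductively (downward from $N-1$ to $n+1$) that $Y_{k}=\frac12$ a.s.\ on $C_{k}$: the base case $Y_{N-1}=\frac12$ on $\{E_{N-1}[\beta_N]=1\}$ is Corollary~\ref{cor:30052019a1}, and for the step one uses that on $\{Y_{k+1}=\tfrac12\}\cap\{E_k[\beta_{k+1}]=1\}$ the numerator $E_k[Y_{k+1}(\beta_{k+1}-\eta_{k+1})]=\tfrac12 E_k[\beta_{k+1}-\eta_{k+1}]=\tfrac12(1-E_k[\eta_{k+1}])$ and the recursion collapses — a direct calculation gives $Y_k=\tfrac12$ there. (This is essentially the ``only if'' direction of Proposition~\ref{prop:31052019a1} read along $C_n$.) Hence $C_n\subseteq\{Y_n=\tfrac12\}$ up to a null set provided one can first get the $\cF_n$-measurable statement $E_n[Y_{n+1}]=\tfrac12$ on $C_n$: this is where the hypothesis $C_n\in\ol\cF_n$ enters — if $C_n$ is (equivalent to) an $\cF_n$-set, then $E_n[Y_{n+1}1_{C_n}]=E_n[\tfrac12 1_{C_n}]=\tfrac12 1_{C_n}$ because $Y_{n+1}=\tfrac12$ a.s.\ on $C_n\supseteq C_{n+1}\supseteq$ (the relevant part), so $E_n[Y_{n+1}]=\tfrac12$ on $C_n$, and together with $E_n[\beta_{n+1}]=1$ on $C_n$ Proposition~\ref{prop:31052019a1} yields $C_n\subseteq\{Y_n=\tfrac12\}$. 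Combined with part~(i) this gives $C_n=G_n$ up to a null set, i.e.\ the inclusion is not strict.

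For the converse direction — ``not strict $\Rightarrow$ \eqref{eq:11072019a2} fails'', i.e.\ $G_n=C_n$ (mod null) $\Rightarrow C_n\in\ol\cF_n$ — the plan is simply to observe that $G_n=\{Y_n=\tfrac12\}\in\cF_n\subseteq\ol\cF_n$, and $C_n$ differs from $G_n$ by a null set, so $C_n\in\ol\cF_n$. This is the easy implication. Assembling the two directions proves the equivalence in~(ii). The main obstacle, and the step deserving the most care, is the backward propagation $Y_k=\tfrac12$ a.s.\ on $C_k$: one must be attentive that the recursion~\eqref{eq:defY} is an $\cF_k$-conditional statement while $C_k$ and the relevant part of $C_{k+1}$ sit in larger $\sigma$-algebras, so to conclude $Y_k=\tfrac12$ \emph{on $C_k$} one localizes Proposition~\ref{prop:31052019a1} and uses that $\{Y_{k+1}=\tfrac12\}\supseteq C_{k+1}$ (mod null) has already been established, keeping track of which events are $\cF_k$-measurable at each stage; the role of the hypothesis $C_n\in\ol\cF_n$ is precisely to bridge the last such gap at level $n$.
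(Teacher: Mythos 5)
Your part~(i) and the easy implication of~(ii) (``not strict $\Rightarrow$ $C_n\in\ol\cF_n$'', where $C_k:=\bigcap_{j=k}^{N-1}\{E_j[\beta_{j+1}]=1\}$) are correct and coincide with the paper's argument. The genuine gap is in your main direction. Your plan is to show, by downward induction and \emph{without} using the hypothesis $C_n\in\ol\cF_n$ until the last step, that $Y_k=\frac12$ a.s.\ on $C_k$ for every $k\in\{n+1,\ldots,N-1\}$, the induction step being the computation $E_k[Y_{k+1}(\beta_{k+1}-\eta_{k+1})]=\frac12\,E_k[\beta_{k+1}-\eta_{k+1}]$ ``on $\{Y_{k+1}=\frac12\}\cap\{E_k[\beta_{k+1}]=1\}$''. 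This step is invalid: $\{Y_{k+1}=\frac12\}$ is only $\cF_{k+1}$-measurable, and the identity $1_A E_k[\,\cdot\,]=E_k[1_A\,\cdot\,]$ needed to substitute $Y_{k+1}=\frac12$ inside the $\cF_k$-conditional expectation requires $A\in\cF_k$. Worse, the intermediate claim itself is false in general: in Example~\ref{ex:11072019a2} one has $\{Y_{N-1}=\frac12\}\cap\{E_{N-2}[\beta_{N-1}]=1\}=\{\beta_N=1\}=C_{N-2}$, an event of probability $1-p>0$, while $Y_{N-2}<\frac12$ a.s.; so already the step $k=N-2$ fails. Indeed, if your step were correct it would yield $\{Y_{k+1}=\frac12\}\cap\{E_k[\beta_{k+1}]=1\}\subseteq\{Y_k=\frac12\}$ at every level and hence that the inclusion in~\eqref{eq:11072019a1} is \emph{never} strict, contradicting part~(ii) and that example. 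The measurability gap you acknowledge at the end is therefore not confined to ``the last step at level $n$''; it occurs at every intermediate level, and the hypothesis $C_n\in\ol\cF_n$ gives you no $\ol\cF_k$-measurability of $C_k$ or $C_{k+1}$ for $k>n$.

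The repair is the paper's route: under the hypothesis, fix $A_n\in\cF_n$ with $P(A_n\,\triangle\,C_n)=0$ and propagate the statement ``$Y_k=\frac12$ a.s.\ on $A_n$'' downward along this \emph{fixed} set, not along the sets $C_k$. Since $A_n\in\cF_n\subseteq\cF_k$ for all $k\ge n$, localization is legitimate at every step: Corollary~\ref{cor:30052019a1} gives $Y_{N-1}=\frac12$ a.s.\ on $A_n$ (as $A_n\subseteq\{E_{N-1}[\beta_N]=1\}$ up to a null set); and if $Y_{k+1}=\frac12$ a.s.\ on $A_n$, then $1_{A_n}E_k[Y_{k+1}]=E_k[1_{A_n}Y_{k+1}]=\frac12 1_{A_n}$, while $E_k[\beta_{k+1}]=1$ a.s.\ on $A_n$, so Proposition~\ref{prop:31052019a1} yields $Y_k=\frac12$ a.s.\ on $A_n$. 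Iterating down to $k=n$ gives $C_n\subseteq\{Y_n=\frac12\}$ up to a null set, hence non-strictness; together with your correct easy direction this proves~(ii).
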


\begin{proof}
Inclusion~\eqref{eq:11072019a1} follows from Corollary~\ref{propo:propY}.
Clearly, under~\eqref{eq:11072019a2}, the inclusion is strict, as $\{Y_n=\frac12\}\in\cF_n$.
It remains to prove that, if there is $A_n\in\cF_n$,
which is (up to a $P$-null set) equal to
$\bigcap_{k=n}^{N-1}\{E_k[\beta_{k+1}]=1\}$,
then $Y_n=\frac12$ a.s.\ on $A_n$.

First, Corollary~\ref{cor:30052019a1} yields $Y_{N-1}=\frac12$ a.s.\ on $A_n$.
In the case $n=N-1$ this concludes the proof.
Let $n\le N-2$. As $A_n\in\cF_n\subseteq\cF_{N-2}$, we get
$E_{N-2}[Y_{N-1}]=\frac12$ a.s.\ on $A_n$.
Proposition~\ref{prop:31052019a1} now yields $Y_{N-2}=\frac12$ a.s.\ on $A_n$.
In the case $n=N-2$ this concludes the proof.
If $n\le N-3$, we obtain the result by iterating the same procedure.
\end{proof}

We, finally, present a specific example, where for $n=N-2$ the inclusion in~\eqref{eq:11072019a1} is strict,
or, in other words, $P(Y_{N-2}<\frac12,E_{N-2}[\beta_{N-1}]=E_{N-1}[\beta_N]=1)>0$
(recall the discussion following Example~\ref{ex:11072019a1}).

\begin{ex}\label{ex:11072019a2}
Take arbitrary $a,p\in(0,1)$.
Let $\cF_n=\{\emptyset,\Omega\}$ for $n \in \Z\cap(-\infty,N-2]$, $\cF_{N-1}=\cF_N=\sigma(\beta_{N-1})$
with $\beta_{N-1}$ being distributed according to
$P(\beta_{N-1}=1)=1-p$ and $P(\beta_{N-1}=1\pm a)=p/2$.
We set $\beta_N=\beta_{N-1}$ and choose any process $\gamma$
satisfying the assumptions of Theorem~\ref{thm:mainres}
(e.g., one can easily take deterministic~$\gamma$).
Then $E_{N-2}[\beta_{N-1}]=E[\beta_{N-1}]=1$, hence
$$
\{E_{N-2}[\beta_{N-1}]=1\}\cap\{E_{N-1}[\beta_{N}]=1\}
=\{E_{N-1}[\beta_{N}]=1\}=\{\beta_N=1\},
$$
which is an event of probability $1-p\in(0,1)$.
We thus obtain~\eqref{eq:11072019a2} for $n=N-2$.
By Lemma~\ref{lem:12072019a1}, the inclusion in~\eqref{eq:11072019a1} for $n=N-2$ is strict.
As a result, we get $P(Y_{N-2}<\frac12,E_{N-2}[\beta_{N-1}]=E_{N-1}[\beta_N]=1)>0$,
as required.
\end{ex}

\subsection{Proofs of Proposition~\ref{prop:31052019a1} and Corollary \ref{propo:propY}}\label{subsec:rt_proofs}

\begin{proof}[Proof of Proposition~\ref{prop:31052019a1}]
Throughout the proof fix $n \in \Z \cap (-\infty,N-1]$.
Let $\nu= \frac{1}{2} - \left( \frac{1}{2} - Y_{n+1}\right) \frac{\beta_{n+1}^2}{\eta_{n+1}}$. Rewriting the definition of $Y_n$, we obtain 
\begin{equation*}
\begin{split}
Y_n & = E_n[\eta_{n+1}Y_{n+1}]-\frac{\left(E_n\left[Y_{n+1}\beta_{n+1} \right]\right)^2 - 2 E_n\left[Y_{n+1}\beta_{n+1} \right]E_n\left[Y_{n+1}\eta_{n+1} \right] + \left(E_n\left[Y_{n+1}\beta_{n+1} \right]\right)^2}{E_n\left[ \nu - 2Y_{n+1}\beta_{n+1}+Y_{n+1}\eta_{n+1}\right]} \\
	& = \frac{E_n\left[\nu\right] E_n\left[ \nu - 2Y_{n+1}\beta_{n+1}+Y_{n+1}\eta_{n+1}\right] -  \left(E_n\left[ \nu - Y_{n+1}\beta_{n+1} \right]\right)^2 }{E_n\left[ \nu - 2Y_{n+1}\beta_{n+1}+Y_{n+1}\eta_{n+1}\right]} \\
	& = \frac{1}{2} - E_n\left[ \left( \frac{1}{2} - Y_{n+1}\right) \frac{\beta_{n+1}^2}{\eta_{n+1}} \right] - \frac{\gamma_n}{a_n} \left(  \frac{1}{2} - E_n\left[\left( \frac{1}{2} - Y_{n+1}\right) \frac{\beta_{n+1}^2}{\eta_{n+1}}\right] - E_n\left[ Y_{n+1}\beta_{n+1}\right] \right)^2
\end{split}
\end{equation*}
with $a_n$ from \eqref{eq:abc}. Since $\eta_{n+1},\gamma_n, a_n >0$ and $Y_{n+1}\leq \frac{1}{2}$ a.s., it now follows that 
\begin{equation}\label{eq:31052019a1}
\left\{Y_n=\frac{1}{2}\right\} = \left\{ E_n\left[ \left(\frac{1}{2}-Y_{n+1}\right) \frac{\beta_{n+1}^2}{\eta_{n+1}} \right] = 0, E_n\left[ Y_{n+1}\beta_{n+1}\right] = \frac{1}{2} \right\}.
\end{equation}
Let $C_n = \left\{E_n\left[\left(\frac{1}{2}-Y_{n+1}\right)
\frac{\beta_{n+1}^2}{\eta_{n+1}} \right]=0\right\}$ and 
denote $B_n = \left\{E_n[Y_{n+1}]=\frac12\right\}$ as before. We show that $C_n=B_n$. For the inclusion $C_n \supseteq B_n$ note first that 
\begin{equation}\label{eq:Yequnderint}
\int_{\left\{E_n\left[Y_{n+1}\right]=\frac{1}{2}\right\}} Y_{n+1}\,dP
=
\int_{\left\{E_n\left[Y_{n+1}\right]=\frac{1}{2}\right\}}
E_n\left[Y_{n+1}\right]\,dP
=
\int_{\left\{E_n\left[Y_{n+1}\right]=\frac{1}{2}\right\}} \frac12\,dP
\end{equation}
and hence that $Y_{n+1}=\frac{1}{2}$ on $B_n$. This together with the fact that 
$B_n \in \mathcal{F}_n$  
implies
\begin{equation*}
\begin{split}
1_{B_n} E_n\left[\left(\frac{1}{2}-Y_{n+1}\right) \frac{\beta_{n+1}^2}{\eta_{n+1}} \right] 
 & = E_n\left[ 1_{B_n}\left(\frac{1}{2}-Y_{n+1}\right) \frac{\beta_{n+1}^2}{\eta_{n+1}} \right] 
= 0.
\end{split}
\end{equation*}
To prove $C_n\subseteq B_n$, observe that $C_n \in \cF_n$ and that 
\begin{equation*}
C_n \subseteq \left\{\left(\frac{1}{2}-Y_{n+1}\right)
\frac{\beta_{n+1}^2}{\eta_{n+1}} =0\right\} 
=  \left\{Y_{n+1}=\frac{1}{2}\right\}
\end{equation*} 
(by an argument similar to \eqref{eq:Yequnderint})  
since $\beta_{n+1},\eta_{n+1}>0$ and $Y_{n+1}\leq \frac{1}{2}$ a.s. It thus holds that 
\begin{equation*}
1_{C_n} E_n\left[ Y_{n+1} \right] 
 = E_n\left[ 1_{C_n} Y_{n+1}\right] 
 = 1_{C_n} \frac{1}{2}.
\end{equation*}
From $C_n = B_n$ together with \eqref{eq:31052019a1} we obtain 
\begin{equation*}
\left\{Y_n=\frac{1}{2}\right\} = \left\{ E_n[Y_{n+1}]=\frac12, E_n\left[ Y_{n+1}\beta_{n+1}\right] = \frac{1}{2} \right\}.
\end{equation*}
Furthermore, we have 
\begin{equation*}
1_{B_n} E_n\left[ Y_{n+1}\beta_{n+1} \right] = E_n\left[ 1_{B_n} Y_{n+1}\beta_{n+1} \right] = 1_{B_n} \frac{1}{2}E_n\left[ \beta_{n+1} \right] ,
\end{equation*}
and hence  
\begin{equation*}
\left\{Y_n=\frac{1}{2}\right\} = \left\{ E_n[Y_{n+1}]=\frac12, E_n\left[ \beta_{n+1}\right] = 1 \right\}.
\end{equation*}
\end{proof}

\begin{proof}[Proof of Corollary \ref{propo:propY}]
We fix $n \in \Z \cap (-\infty,N-1]$.
\begin{enumerate}
\item
The claim follows from
$$
\left\{Y_n=\frac{1}{2}\right\}
\subseteq\left\{E_n\left[Y_{n+1}\right]=\frac{1}{2}\right\}
\subseteq\left\{Y_{n+1}=\frac{1}{2}\right\},
$$
where the first inclusion is immediate from
Proposition~\ref{prop:31052019a1}
and the second one follows from the facts
that $Y_{n+1}\le\frac12$ a.s.\ and \eqref{eq:Yequnderint}.

\item
Due to Proposition~\ref{prop:31052019a1}
only the inclusion
$\{ E_n\left[ \beta_{n+1} \right]\ge1 \} \subseteq \{E_n\left[ \eta_{n+1}\right]>1 \}$
needs to be proved.
By the Cauchy-Schwarz inequality and the assumption
$E_n\left[\frac{\beta^2_{n+1}}{\eta_{n+1}}\right]<1$ a.s.\ we get
\begin{equation*}
\left( E_n\left[ \beta_{n+1} \right] \right)^2
\leq E_n\left[ \frac{\beta_{n+1}^2}{\eta_{n+1}} \right] E_n\left[ \eta_{n+1}\right]
< E_n\left[ \eta_{n+1}\right]\;\;\text{a.s.},
\end{equation*}
which implies the claim.
\end{enumerate}
\end{proof}

\section{Closing the position in one go}\label{sec:1go}

Let the assumptions of
Theorem~\ref{thm:mainres}
be in force.
Let $n\in\bbZ\cap(-\infty,N-1]$.
We now study when
$\xi^*_n(x,d)=-x$
for all $x,d\in\bbR$,
i.e., when it is optimal to close
the whole position at time $n<N$.

Recall that, for each $x,d\in\bbR$,
a version of the optimal trade
$\xi^*_n(x,d)$
(which is defined up to a $P$-null set)
is given by the right-hand side of~\eqref{eq:optstrat}.
We choose the versions in such a way that the random field
$(x,d)\mapsto\xi^*_n(x,d)$ is continuous
(the most natural choice in view of~\eqref{eq:optstrat}).
Then we have
$$
\{\xi^*_n(x,d)=-x\;\forall x,d\in\bbR\}
=
\{\xi^*_n(x,d)=-x\;\forall x,d\in\bbQ\}
=
\bigcap_{x,d\in\bbQ} \{\xi^*_n(x,d)=-x\},
$$
hence
$\{\xi^*_n(x,d)=-x\;\forall x,d\in\bbR\}$
is an $\cF_n$-measurable event
(as a countable intersection of such events).

\begin{lemma}\label{lem:01062019a1}
Let $n\in\bbZ\cap(-\infty,N-1]$.
Under the assumptions of Theorem~\ref{thm:mainres}
we have
\begin{equation}\label{eq:01062019a1}
\{\xi^*_n(x,d)=-x\;\forall x,d\in\bbR\}=
\left\{E_n\left[
\left(Y_{n+1}-\frac12\right)\frac{\beta_{n+1}^2}{\eta_{n+1}}
-Y_{n+1}\beta_{n+1}+\frac12
\right]=0\right\},
\end{equation}
up to a $P$-null set.
\end{lemma}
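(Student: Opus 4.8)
The plan is to read off the claim directly from formula~\eqref{eq:optstrat} for $\xi^*_n(x,d)$. Writing $a\in\R$ for the numerator $E_n\left[Y_{n+1}\left(\beta_{n+1}-\eta_{n+1}\right)\right]$ and $b\in\R$ for the denominator appearing in~\eqref{eq:optstrat}, the optimal trade size equals $\xi^*_n(x,d)=\frac ab\left(x-\frac d{\gamma_n}\right)-\frac d{\gamma_n}$, and we have $\xi^*_n(x,d)=-x$ exactly when $\left(\frac ab+1\right)\left(x-\frac d{\gamma_n}\right)=0$. Since the random field $(x,d)\mapsto\xi^*_n(x,d)$ is the continuous version and the prefactor $\frac ab+1$ does not depend on $(x,d)$, the event that $\xi^*_n(x,d)=-x$ holds for all $x,d\in\R$ coincides (up to a $P$-null set) with the event $\left\{\frac ab+1=0\right\}=\{a+b=0\}$; indeed, picking $x,d$ with $x-\frac d{\gamma_n}\ne0$ forces $\frac ab+1=0$, and conversely $a+b=0$ makes the whole field equal to $-x$.

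Thus the remaining task is the purely algebraic identity $a+b=E_n\left[\left(Y_{n+1}-\frac12\right)\frac{\beta_{n+1}^2}{\eta_{n+1}}-Y_{n+1}\beta_{n+1}+\frac12\right]$. Adding the numerator and denominator from~\eqref{eq:optstrat},
\begin{equation*}
a+b=E_n\left[Y_{n+1}(\beta_{n+1}-\eta_{n+1})+\frac{Y_{n+1}}{\eta_{n+1}}(\beta_{n+1}-\eta_{n+1})^2+\frac12\left(1-\frac{\beta_{n+1}^2}{\eta_{n+1}}\right)\right],
\end{equation*}
and one expands $\frac{Y_{n+1}}{\eta_{n+1}}(\beta_{n+1}-\eta_{n+1})^2=Y_{n+1}\frac{\beta_{n+1}^2}{\eta_{n+1}}-2Y_{n+1}\beta_{n+1}+Y_{n+1}\eta_{n+1}$; combining with $Y_{n+1}(\beta_{n+1}-\eta_{n+1})=Y_{n+1}\beta_{n+1}-Y_{n+1}\eta_{n+1}$ the $Y_{n+1}\eta_{n+1}$ terms cancel, leaving $Y_{n+1}\frac{\beta_{n+1}^2}{\eta_{n+1}}-Y_{n+1}\beta_{n+1}+\frac12-\frac12\frac{\beta_{n+1}^2}{\eta_{n+1}}$, which is exactly the integrand on the right-hand side of~\eqref{eq:01062019a1}.

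One small point deserves care: formula~\eqref{eq:optstrat} is only valid when the denominator $b$ is a.s.\ strictly positive, which holds under the assumptions of Theorem~\ref{thm:mainres} (this is precisely the strict convexity built into those assumptions and is used in its proof). Given $b>0$ a.s., dividing by $b$ is harmless and $\{a+b=0\}=\left\{\frac ab=-1\right\}$; moreover $a+b$ need not be of a definite sign, so the event can indeed be nontrivial, which is the whole point of the lemma. I do not expect any real obstacle here — the statement is essentially a rearrangement of~\eqref{eq:optstrat} together with the measurability remark already established in the paragraph preceding the lemma — so the "hard part" is merely bookkeeping the algebraic cancellation correctly and invoking continuity of the chosen version of $(x,d)\mapsto\xi^*_n(x,d)$ to pass from "for all rational $x,d$" to "for all real $x,d$".
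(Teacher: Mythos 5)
Your proposal is correct and is exactly the argument the paper intends: its proof of this lemma just says the result follows from~\eqref{eq:optstrat} via a straightforward calculation, and your reduction of $\{\xi^*_n(x,d)=-x\;\forall x,d\}$ to the event that the sum of numerator and denominator in~\eqref{eq:optstrat} vanishes, together with the algebraic cancellation giving the integrand in~\eqref{eq:01062019a1}, is precisely that calculation (with positivity of the denominator coming from $a_n>0$ in the proof of Theorem~\ref{thm:mainres}).
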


\begin{proof}
The result follows from~\eqref{eq:optstrat}
via a straightforward calculation.
\end{proof}

The next result presents a relation
between the previously studied question
of nonexistence of profitable round trips for $d\ne0$
and the currently studied question
of closing the position in one go.

\begin{propo}\label{prop:01062019a1}
Let $n\in\bbZ\cap(-\infty,N-1]$.
Under the assumptions of Theorem~\ref{thm:mainres}
we have
\begin{enumerate}
\item\label{it:01062019a1}
$\{Y_n=\frac12\}
\subseteq
\{\xi^*_n(x,d)=-x\;\forall x,d\in\bbR\}$.

\item\label{it:01062019a2}
$\{Y_n=\frac12\}=
\{\xi^*_n(x,d)=-x\;\forall x,d\in\bbR\}\cap
\{E_n[Y_{n+1}]=\frac12\}$.
\end{enumerate}
\end{propo}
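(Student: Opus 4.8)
The plan is to deduce both statements from three ingredients already available: the characterization of $\{\xi^*_n(x,d)=-x\ \forall x,d\}$ in Lemma~\ref{lem:01062019a1}, the characterization of $\{Y_n=\frac12\}$ in Proposition~\ref{prop:31052019a1} (and in particular equation~\eqref{eq:31052019a1} appearing in its proof), and the elementary reduction, used already in~\eqref{eq:Yequnderint}, that on the event $B_n:=\{E_n[Y_{n+1}]=\frac12\}$ one has $Y_{n+1}=\frac12$ a.s.\ (since $Y_{n+1}\le\frac12$ a.s.).

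For part~\ref{it:01062019a1} I would argue as follows. On $\{Y_n=\frac12\}$, equation~\eqref{eq:31052019a1} gives $E_n\bigl[(\tfrac12-Y_{n+1})\tfrac{\beta_{n+1}^2}{\eta_{n+1}}\bigr]=0$ and $E_n[Y_{n+1}\beta_{n+1}]=\tfrac12$. Subtracting the first and the second from $\tfrac12$ yields
\[
E_n\Bigl[\Bigl(Y_{n+1}-\tfrac12\Bigr)\tfrac{\beta_{n+1}^2}{\eta_{n+1}}-Y_{n+1}\beta_{n+1}+\tfrac12\Bigr]=-0-\tfrac12+\tfrac12=0\quad\text{on }\{Y_n=\tfrac12\},
\]
so by Lemma~\ref{lem:01062019a1} this event is contained (up to a $P$-null set) in $\{\xi^*_n(x,d)=-x\ \forall x,d\in\bbR\}$.

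For part~\ref{it:01062019a2}, the inclusion ``$\subseteq$'' is immediate: part~\ref{it:01062019a1} gives $\{Y_n=\tfrac12\}\subseteq\{\xi^*_n(x,d)=-x\ \forall x,d\}$, while Proposition~\ref{prop:31052019a1} gives $\{Y_n=\tfrac12\}\subseteq\{E_n[Y_{n+1}]=\tfrac12\}=B_n$. For ``$\supseteq$'', set $A:=\{\xi^*_n(x,d)=-x\ \forall x,d\in\bbR\}\cap B_n$; this is an $\cF_n$-measurable event (the first factor is $\cF_n$-measurable by the discussion preceding Lemma~\ref{lem:01062019a1}, the second trivially). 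On $A\subseteq B_n$ one has $Y_{n+1}=\tfrac12$ a.s., hence $(Y_{n+1}-\tfrac12)\tfrac{\beta_{n+1}^2}{\eta_{n+1}}=0$ and $Y_{n+1}\beta_{n+1}=\tfrac12\beta_{n+1}$ a.s.\ there. Multiplying the identity from Lemma~\ref{lem:01062019a1} by $1_A$ and pulling $1_A$ inside $E_n[\cdot]$,
\[
0=1_A\,E_n\Bigl[\Bigl(Y_{n+1}-\tfrac12\Bigr)\tfrac{\beta_{n+1}^2}{\eta_{n+1}}-Y_{n+1}\beta_{n+1}+\tfrac12\Bigr]=1_A\,E_n\Bigl[\tfrac12\bigl(1-\beta_{n+1}\bigr)\Bigr]=1_A\,\tfrac12\bigl(1-E_n[\beta_{n+1}]\bigr),
\]
so $E_n[\beta_{n+1}]=1$ a.s.\ on $A$. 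Together with $E_n[Y_{n+1}]=\tfrac12$ on $A$, Proposition~\ref{prop:31052019a1} yields $Y_n=\tfrac12$ a.s.\ on $A$, i.e.\ $A\subseteq\{Y_n=\tfrac12\}$ up to a $P$-null set, which finishes part~\ref{it:01062019a2}.

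I do not expect a genuine obstacle here; the only point needing care is the bookkeeping in the ``$\supseteq$'' direction, where one must use the $\cF_n$-measurability of $\{\xi^*_n(x,d)=-x\ \forall x,d\in\bbR\}$ to move its indicator inside the conditional expectation and apply the ``$Y_{n+1}=\tfrac12$ on $B_n$'' reduction exactly as in~\eqref{eq:Yequnderint}. Everything else is a direct substitution into the already-established characterizations.
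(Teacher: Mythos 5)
Your proposal is correct and follows essentially the same route as the paper: both parts reduce to showing that the conditional expectation appearing in Lemma~\ref{lem:01062019a1} vanishes, using the characterization of $\{Y_n=\tfrac12\}$ from Proposition~\ref{prop:31052019a1} and the reduction $\{E_n[Y_{n+1}]=\tfrac12\}\subseteq\{Y_{n+1}=\tfrac12\}$, with the indicator-pulling argument in the ``$\supseteq$'' direction matching the paper's proof verbatim in spirit. The only cosmetic difference is that in part~\ref{it:01062019a1} you invoke the intermediate display~\eqref{eq:31052019a1} from the proof of Proposition~\ref{prop:31052019a1}, whereas the paper uses its final statement together with Corollary~\ref{propo:propY}; both are equally valid.
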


It is worth noting that the inclusion in part~\ref{it:01062019a1}
can be strict
in the sense that the set difference can be non-negligible, i.e., with positive probability there are profitable round trips at time $n$ for $d\neq 0$ and still it is optimal to close the whole position at time $n$
(see Example~\ref{ex:01062019a1} below).

\begin{proof}
1. Recall that by Proposition~\ref{prop:31052019a1}
and Corollary~\ref{propo:propY} we have
$$
\left\{Y_n=\frac{1}{2}\right\}
=\left\{ E_n\left[ Y_{n+1}\right] = \frac12, E_n\left[\beta_{n+1}\right] =1\right\}
\subseteq\left\{Y_{n+1}=\frac12\right\}.
$$
In particular, on the event
$\{Y_n=\frac12\}\in\cF_n$ it holds
$Y_{n+1}=\frac12$ and $E_n[\beta_{n+1}]=1$,
which implies that on the event
$\{Y_n=\frac12\}\in\cF_n$ we have
$$
E_n\left[
\left(Y_{n+1}-\frac12\right)\frac{\beta_{n+1}^2}{\eta_{n+1}}
-Y_{n+1}\beta_{n+1}+\frac12
\right]=0.
$$
Lemma~\ref{lem:01062019a1} now yields the claim.

\smallskip
2. The inclusion ``$\subseteq$'' follows from
the previous part together with
Proposition~\ref{prop:31052019a1}.
To prove the reverse inclusion
``$\supseteq$'' we first note that
\begin{equation}\label{eq:02062019a1}
\left\{E_n[Y_{n+1}]=\frac12\right\}
\subseteq
\left\{Y_{n+1}=\frac12\right\}
\end{equation}
because $Y_{n+1}\le\frac12$ a.s.
It follows from
\eqref{eq:01062019a1} and~\eqref{eq:02062019a1}
that on the $\cF_n$-measurable set
$$
A_n:=\{\xi^*_n(x,d)=-x\;\forall x,d\in\bbR\}\cap
\left\{E_n[Y_{n+1}]=\frac12\right\}
$$
it holds
$\frac12E_n[\beta_{n+1}]=E_n[Y_{n+1}\beta_{n+1}]=\frac12$,
i.e., $E_n[\beta_{n+1}]=1$. Hence,
$$
A_n
\subseteq
\left\{ E_n\left[ Y_{n+1}\right] = \frac12, E_n\left[\beta_{n+1}\right] =1\right\}
=
\left\{Y_n=\frac{1}{2}\right\},
$$
where the set equality is again
Proposition~\ref{prop:31052019a1}.
This concludes the proof.
\end{proof}

\begin{corollary}\label{cor:01062019a1}
Under the assumptions of Theorem~\ref{thm:mainres}
it holds
$$
\left\{Y_{N-1}=\frac12\right\}
=\{\xi^*_{N-1}(x,d)=-x\;\forall x,d\in\bbR\}.
$$
\end{corollary}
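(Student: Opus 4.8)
The statement is the specialization of Proposition~\ref{prop:01062019a1}\ref{it:01062019a2} to the terminal step $n=N-1$, so the plan is simply to observe that the extra condition appearing there becomes vacuous at this time point. Concretely, recall that $Y_N=\frac12$ by definition (the initialization of the recursion~\eqref{eq:defY}), hence $E_{N-1}[Y_N]=\frac12$ holds a.s., i.e.\ $\{E_{N-1}[Y_N]=\frac12\}=\Omega$ up to a $P$-null set.

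First I would invoke Proposition~\ref{prop:01062019a1}\ref{it:01062019a2} with $n=N-1$:
$$
\left\{Y_{N-1}=\frac12\right\}=\{\xi^*_{N-1}(x,d)=-x\;\forall x,d\in\bbR\}\cap\left\{E_{N-1}[Y_{N}]=\frac12\right\}.
$$
Then I would substitute $\{E_{N-1}[Y_N]=\frac12\}=\Omega$, which immediately collapses the right-hand side to $\{\xi^*_{N-1}(x,d)=-x\;\forall x,d\in\bbR\}$, giving the claimed equality (all equalities of events being understood up to $P$-null sets, consistently with the conventions in Section~\ref{sec:1go} and in Proposition~\ref{prop:01062019a1}).

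There is essentially no obstacle here; the only point worth checking is that the right-hand event $\{\xi^*_{N-1}(x,d)=-x\;\forall x,d\in\bbR\}$ is indeed $\cF_{N-1}$-measurable, but this was already established at the beginning of Section~\ref{sec:1go} (using the continuous version of the random field $(x,d)\mapsto\xi^*_{N-1}(x,d)$ from~\eqref{eq:optstrat} and reducing to a countable intersection over rational $x,d$). Alternatively, one could bypass Proposition~\ref{prop:01062019a1} entirely and argue directly: by Lemma~\ref{lem:01062019a1} the right-hand event equals $\{E_{N-1}[(Y_N-\frac12)\frac{\beta_N^2}{\eta_N}-Y_N\beta_N+\frac12]=0\}$, which since $Y_N=\frac12$ reduces to $\{E_{N-1}[\frac12-\frac12\beta_N]=0\}=\{E_{N-1}[\beta_N]=1\}$, and by Corollary~\ref{cor:30052019a1} this is exactly $\{Y_{N-1}=\frac12\}$. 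Either route is a one-line deduction, so I would keep the proof to the first, shortest version.

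\begin{proof}
Apply Proposition~\ref{prop:01062019a1}\ref{it:01062019a2} with $n=N-1$. Since $Y_N=\frac12$ by definition, we have $E_{N-1}[Y_N]=\frac12$ a.s., so $\{E_{N-1}[Y_N]=\frac12\}=\Omega$ up to a $P$-null set, and the claim follows.
\end{proof}
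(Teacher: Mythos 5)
Your proof is correct and coincides with the paper's own argument: the paper also deduces the corollary from part~\ref{it:01062019a2} of Proposition~\ref{prop:01062019a1} using $Y_N=\frac12$. The alternative route via Lemma~\ref{lem:01062019a1} and Corollary~\ref{cor:30052019a1} that you sketch is also fine, but the main proof you give is exactly the paper's.
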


\begin{proof}
This follows from
part~\ref{it:01062019a2}
of Proposition~\ref{prop:01062019a1}
because $Y_N=\frac12$.
\end{proof}

We now provide more details for the case of processes
with independent multiplicative increments
of Section~\ref{sec:PIMI}.
We recall that in this case the process $Y$
is deterministic.
Notice, however, that the trades $\xi^*_n(x,d)$ are still,
in general, random because of the randomness
in~$\gamma_n$, see~\eqref{eq:13072019a1}.

\begin{propo}\label{rem:close_immed}
Let $n\in\bbZ\cap(-\infty,N-1]$.
Under the assumptions of Lemma~\ref{prop:mii}
it holds:
\begin{enumerate}
\item
$\{\xi^*_n(x,d)=-x\;\forall x,d\in\bbR\}$
is either $\Omega$ or $\emptyset$.

\item
The following statements are equivalent:

\smallskip
(i) $\{\xi^*_n(x,d)=-x\;\forall x,d\in\bbR\}=\Omega$.

\smallskip
(ii) There exist $x,d\in\R$ with
$P(\gamma_nx\neq d)>0$ such that
$\{\xi^*_n(x,d)=-x\}=\Omega$.

\smallskip
(iii) It holds that
\begin{equation}\label{eq:01062019a2}
E[\beta_{n+1}]=1+\frac{\left(1-E\left[\frac{\beta_{n+1}^2}{\eta_{n+1}}\right]\right)
\left(\frac{1}{2}-Y_{n+1}\right)}{Y_{n+1}}.
\end{equation}

\item\label{it:01062019a3}
Under~\eqref{eq:01062019a2}
we have that $E[\beta_{n+1}]\ge 1$
and, if $Y_{n+1}<\frac{1}{2}$,
even that $E[\beta_{n+1}]>1$.
\end{enumerate}
\end{propo}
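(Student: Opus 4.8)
The plan is to reduce everything to the deterministic coefficient appearing in the optimal trade formula~\eqref{eq:13072019a1}. Under the assumptions of Lemma~\ref{prop:mii} the process $Y$ is deterministic and $(0,\frac12]$-valued, and $E\left[\frac{\beta_{n+1}^2}{\eta_{n+1}}\right]<1$; hence the number
\[
\rho_n:=\frac{Y_{n+1}\bigl(E[\beta_{n+1}]-E[\eta_{n+1}]\bigr)}{Y_{n+1}E\left[\frac{(\beta_{n+1}-\eta_{n+1})^2}{\eta_{n+1}}\right]+\frac12\left(1-E\left[\frac{\beta_{n+1}^2}{\eta_{n+1}}\right]\right)}
\]
is a well-defined deterministic constant, the denominator being $\ge\frac12\left(1-E\left[\frac{\beta_{n+1}^2}{\eta_{n+1}}\right]\right)>0$, and~\eqref{eq:13072019a1} reads $\xi^*_n(x,d)=\rho_n\left(x-\frac d{\gamma_n}\right)-\frac d{\gamma_n}$. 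Adding $x=\left(x-\frac d{\gamma_n}\right)+\frac d{\gamma_n}$ gives the identity
\[
\xi^*_n(x,d)+x=(\rho_n+1)\left(x-\frac d{\gamma_n}\right),\qquad x,d\in\bbR,
\]
which drives the whole argument.

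First I would deduce parts~1 and~2. Since $\rho_n$ is a fixed real number, the event $\{\xi^*_n(x,d)=-x\;\forall x,d\in\bbR\}$ is $\Omega$ when $\rho_n=-1$ (the right-hand side of the identity then vanishes identically) and $\emptyset$ when $\rho_n\neq-1$ (take $x=1$, $d=0$, so $\xi^*_n(1,0)=\rho_n\neq-1$); this is part~1, and it shows that statement~(i) is equivalent to $\rho_n=-1$. The implication (i)$\Rightarrow$(ii) follows with the witness $x=1$, $d=0$, for which $P(\gamma_nx\neq d)=P(\gamma_n\neq0)=1>0$ and $\{\xi^*_n(1,0)=-1\}=\Omega$. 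Conversely, if (ii) holds with some $(x,d)$, then $(\rho_n+1)\left(x-\frac d{\gamma_n}\right)=0$ everywhere on $\Omega$, and evaluating this on the positive-probability event $\{\gamma_nx\neq d\}$ forces $\rho_n+1=0$, i.e.\ (i).

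Next I would identify $\{\rho_n=-1\}$ with~\eqref{eq:01062019a2}, which is the one computation needing a little care. Using $E\left[\frac{(\beta_{n+1}-\eta_{n+1})^2}{\eta_{n+1}}\right]=E\left[\frac{\beta_{n+1}^2}{\eta_{n+1}}\right]-2E[\beta_{n+1}]+E[\eta_{n+1}]$, the equation $\rho_n=-1$ reads
\[
Y_{n+1}\bigl(E[\beta_{n+1}]-E[\eta_{n+1}]\bigr)=-Y_{n+1}E\left[\tfrac{(\beta_{n+1}-\eta_{n+1})^2}{\eta_{n+1}}\right]-\tfrac12\left(1-E\left[\tfrac{\beta_{n+1}^2}{\eta_{n+1}}\right]\right);
\]
after substituting the above identity the $E[\eta_{n+1}]$-terms cancel, leaving $Y_{n+1}\left(E[\beta_{n+1}]-E\left[\frac{\beta_{n+1}^2}{\eta_{n+1}}\right]\right)=\frac12\left(1-E\left[\frac{\beta_{n+1}^2}{\eta_{n+1}}\right]\right)$. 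Dividing by $Y_{n+1}>0$ and rearranging produces exactly~\eqref{eq:01062019a2}, so (i)$\Leftrightarrow$(iii).

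Finally, part~3 is read off directly from~\eqref{eq:01062019a2}: because $E\left[\frac{\beta_{n+1}^2}{\eta_{n+1}}\right]<1$ and $Y_{n+1}\in(0,\frac12]$, the additive correction $\frac{\left(1-E\left[\frac{\beta_{n+1}^2}{\eta_{n+1}}\right]\right)\left(\frac12-Y_{n+1}\right)}{Y_{n+1}}$ is nonnegative, and strictly positive precisely when $Y_{n+1}<\frac12$; hence $E[\beta_{n+1}]\ge1$ in general and $E[\beta_{n+1}]>1$ when $Y_{n+1}<\frac12$. I do not anticipate any real obstacle: the only conceptual input is that, thanks to (PIMI) together with Lemma~\ref{prop:mii}, $\rho_n$ is a deterministic constant, so the event in question can only be $\Omega$ or $\emptyset$, and the sole nontrivial piece of arithmetic is the cancellation turning $\rho_n=-1$ into~\eqref{eq:01062019a2}.
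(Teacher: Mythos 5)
Your proof is correct and follows essentially the same route as the paper: the paper also reduces everything to the observation that under (PIMI) the relevant coefficient is deterministic (so the event is all-or-nothing), proves (i)$\Leftrightarrow$(ii) from the deterministic factor in front of $\left(x-\frac{d}{\gamma_n}\right)$ in~\eqref{eq:optstrat}, and obtains (i)$\Leftrightarrow$(iii) by the same algebraic cancellation. The only cosmetic difference is that the paper routes part~1 through Lemma~\ref{lem:01062019a1} (whose condition is exactly your $\rho_n=-1$ after expanding $E\left[\frac{(\beta_{n+1}-\eta_{n+1})^2}{\eta_{n+1}}\right]$), whereas you work directly from the factorization $\xi^*_n(x,d)+x=(\rho_n+1)\left(x-\frac{d}{\gamma_n}\right)$.
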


The meaning of part~\ref{it:01062019a3} in Proposition~\ref{rem:close_immed} is that,
in the case of (PIMI)
(special case: deterministic processes $\beta$ and $\gamma$),
closing the position in one go
is never optimal in the (usual) framework,
where the resilience process $\beta$ is assumed to be $(0,1)$-valued.

This raises the question of whether closing the position
in one go can be optimal in general
(that is, beyond (PIMI))
with the resilience process $\beta$
taking values in $(0,1)$.
In our setting the answer is affirmative
(see Example~\ref{ex:13072019a1} below).
It is worth noting that in the related setting,
where trading is constrained only in one direction
and the process $\beta$ is $(0,1)$-valued,
the answer is negative,
i.e., closing the position in one go
is never optimal
(see Proposition~A.3 in \cite{fruth2019optimal}
and Proposition~5.6 in \cite{fruth2014optimal}).

\begin{proof}
1. Since $Y$ is deterministic
and $\eta_{n+1}$ and $\beta_{n+1}$
are independent of $\cF_n$,
Lemma~\ref{lem:01062019a1} yields
\begin{equation}\label{eq:01062019a3}
\{\xi^*_n(x,d)=-x\;\forall x,d\in\bbR\}=
\left\{
\left(Y_{n+1}-\frac12\right)
E\left[\frac{\beta_{n+1}^2}{\eta_{n+1}}\right]
-Y_{n+1}E\left[\beta_{n+1}\right]+\frac12
=0\right\},
\end{equation}
which can be either $\Omega$ or $\emptyset$.

\smallskip
2. The equivalence between (i) and~(ii)
is a direct calculation using~\eqref{eq:optstrat}
and the fact that the factor in front of
$(x-\frac d{\gamma_n})$
on the right-hand side of~\eqref{eq:optstrat}
is deterministic under our assumptions.
The equivalence between (i) and~(iii)
follows from~\eqref{eq:01062019a3}
via a straightforward calculation.

\smallskip
3. The last statement is clear.
\end{proof}

We close the section with two examples announced above.

\begin{ex}\label{ex:01062019a1}
Consider the processes $\beta$ and $\gamma$
satisfying the assumptions of Lemma~\ref{prop:mii}
(in particular, (PIMI)) and, moreover,
$E[\beta_N]\neq 1$ and
\begin{equation}\label{eq:close_immed}
E[\beta_{N-1}]=1+\frac{\left(1-E\left[\frac{\beta_{N-1}^2}{\eta_{N-1}}\right]\right)
\left(\frac{1}{2}-Y_{N-1}\right)}{Y_{N-1}}.
\end{equation}
Below we present a specific choice of the parameters
such that~\eqref{eq:close_immed} is satisfied.

As we are in the framework of (PIMI),
the process $Y$ is deterministic.
Moreover, since $E[\beta_N]\ne1$,
we have $Y_{N-1}\in(0,\frac12)$
(see Corollary~\ref{cor:30052019a1}).
Recall that
on $\{Y_{N-1}<\frac12\}$ ($=\Omega$, up to a $P$-null set)
there exist profitable round trips when we start
at time $N-1$ with $d\ne0$. In particular,
\begin{equation}\label{eq:01062019a4}
P(\xi^*_{N-1}(0,d)\ne0)=1\quad
\text{whenever }d\ne0.
\end{equation}
That is, even without an open position we trade at time $N-1$ as soon as $d\neq 0$.\footnote{For completeness
we mention the explicit formula
$$
\xi^*_{N-1}(0,d)
=\frac{E[\beta_N]-1}{E[\eta_N-2\beta_N+1]}\frac{d}{\gamma_{N-1}},
$$
which can be obtained from~\eqref{eq:optstrat}
via a direct calculation
and yields an alternative proof of~\eqref{eq:01062019a4}.}

Moreover, notice that by Proposition~\ref{rem:close_immed}
condition~\eqref{eq:close_immed} is equivalent to
\begin{equation}\label{eq:01062019a5}
\{\xi^*_{N-2}(x,d)=-x\;\forall x,d\in\bbR\}=\Omega.
\end{equation}
To summarize, the optimal strategy in this example is to close the position at time $N-2$, to build up a new position at time $N-1$ (at least if $D_{(N-1)-}=(d-\gamma_{N-2}x)\beta_{N-1})\neq 0$) and to close this position at time $N$. Interestingly, such a phenomenon can only occur if $E[\beta_{N-1}]>1$,
and hence it cannot happen in the (usual) framework, where the resilience process $\beta$ is assumed to take values in $(0,1)$.

We, finally, remark that in this example
the inclusion in part~\ref{it:01062019a1}
of Proposition~\ref{prop:01062019a1}
for time $n=N-2$ is strict
(cf.\ \eqref{eq:01062019a5} with the fact that
$\{Y_{N-2}=\frac12\}=\emptyset$,
where the latter follows from $Y_{N-1}<\frac12$
and part~\ref{propYitem2}
of Corollary~\ref{propo:propY}).\footnote{More generally,
the inclusion in part~\ref{it:01062019a1}
of Proposition~\ref{prop:01062019a1} is strict
whenever on a set of positive probability
we have the phenomenon described in the previous paragraph.
Indeed, an event, where such a phenomenon happens,
is a subset of
$\{\xi^*_n(x,d)=-x\;\forall x,d\in\bbR\}
\setminus\{Y_n=\frac12\}$
because on $\{Y_n=\frac12\}$
we have $Y_n=Y_{n+1}=\ldots=Y_{N-1}=\frac12$
(part~\ref{propYitem2} of Corollary~\ref{propo:propY})
and hence
$\xi^*_k(x,d)=-x$ for all $x,d\in\bbR$
and $k\in\{n,n+1,\ldots,N-1\}$
(part~\ref{it:01062019a1} of Proposition~\ref{prop:01062019a1}),
in particular, $\xi^*_k(0,d)=0$ for all such $k$ and $d\in\bbR$.}

\smallskip
It remains to explain how we can satisfy~\eqref{eq:close_immed}. 
An easy specific example, where the requirements
on $\beta$ and $\gamma$
listed 
above are satisfied,
can be constructed with deterministic sequences
$\beta$ and $\gamma$.
For instance, choose arbitrary deterministic
$\gamma_N,\gamma_{N-1}>0$
and $\beta_N\in(0,\sqrt{\eta_N})\setminus\{1\}$.
These inputs yield a deterministic
$Y_{N-1}\in(0,\frac12)$
(see Corollary~\ref{cor:30052019a1}).
Take a sufficiently small $a>0$ such that
$$
\frac{aY_{N-1}}{\frac12-Y_{N-1}}\in(0,1).
$$
Finally, set $\beta_{N-1}=1+a$ and
choose $\gamma_{N-2}>0$ to satisfy
$$
\frac{aY_{N-1}}{\frac12-Y_{N-1}}
=1-\frac{(1+a)^2}{\eta_{N-1}}
$$
(recall that $\eta_{N-1}=\frac{\gamma_{N-1}}{\gamma_{N-2}}$).
This choice gives us~\eqref{eq:close_immed}
together with $\frac{\beta_{N-1}^2}{\eta_{N-1}}<1$.
\end{ex}

\begin{ex}\label{ex:13072019a1}
In this example we consider a version of our model with three trading periods $N-2$, $N-1$ and $N$, where the resilience process $\beta$ is $(0,1)$-valued and still it is optimal at time $N-2$ to close the position in one go. To this end assume that $\mathcal{F}_{N-2} = \{\emptyset, \Omega\}$ and $\mathcal{F}_{N-1} = \sigma\left( \gamma_{N-1} \right)$ and that we can specify the positive random variables
$\gamma_{N-1}$, $\gamma_N$ and the $(0,1)$-valued
random variable
$\beta_N$ in such a way that
$E_{N-1}\left[\frac{\beta_N^2}{\eta_N} \right] < 1$, 
$\left(1-E_{N-1}\left[\frac{\beta_N^2}{\eta_N} \right]\right)^{-1}
\in L^{\infty-}$ 
and that $Y_{N-1}$ and $\frac{1}{\gamma_{N-1}}$ are strictly negatively correlated, i.e.,
\begin{equation}\label{eq:strictlynegativelycorr}
E\left[ \frac{Y_{N-1}}{\gamma_{N-1}} \right] - E\left[ Y_{N-1} \right] E\left[ \frac{1}{\gamma_{N-1}} \right] < 0 .
\end{equation}
Below we present a specific choice such that these assumptions are satisfied.
By~\eqref{eq:strictlynegativelycorr} we can choose
a deterministic
\begin{equation}\label{eq:def_beta_ex}
\beta_{N-1} \in \left( \frac{E\left[ \frac{Y_{N-1}}{\gamma_{N-1}} \right]}{E\left[ Y_{N-1} \right] E\left[ \frac{1}{\gamma_{N-1}} \right]} , 1 \right)
\end{equation}
and then define
\begin{equation}\label{eq:def_gamma_ex}
\gamma_{N-2}=\frac{E\left[ \frac{1}{2} - Y_{N-1} \beta_{N-1} \right]}{E\left[ \left( \frac{1}{2} - Y_{N-1} \right) \frac{\beta_{N-1}^2}{\gamma_{N-1}} \right]}. 
\end{equation}
Note that, indeed, $\beta_{N-1} \in (0,1)$ and
$\gamma_{N-2}>0$.
Next, we verify that $E\left[ \frac{\beta_{N-1}^2}{\eta_{N-1}} \right] < 1$. By~\eqref{eq:def_beta_ex} it holds 
$
E\left[ \beta_{N-1} Y_{N-1} \right] E\left[ \frac{1}{\gamma_{N-1}} \right] > E\left[ \frac{Y_{N-1}}{\gamma_{N-1}} \right].
$
This implies 
\begin{equation*}
E\left[ \frac{1}{2} - \beta_{N-1} Y_{N-1} \right] E\left[ \frac{\beta_{N-1}^2}{\gamma_{N-1}} \right]
< E\left[ \left( \frac{1}{2} - Y_{N-1}\right) \frac{\beta_{N-1}^2}{\gamma_{N-1}}\right]
\end{equation*}
and hence
\begin{equation*}
\gamma_{N-2} = \frac{E\left[ \frac{1}{2} - \beta_{N-1} Y_{N-1} \right]}{ E\left[ \left( \frac{1}{2} - Y_{N-1}\right) \frac{\beta_{N-1}^2}{\gamma_{N-1}}\right]} 
< \frac{1}{ E\left[ \frac{\beta_{N-1}^2}{\gamma_{N-1}} \right]}.
\end{equation*}
Since $\gamma_{N-2}$ is deterministic and $\eta_{N-1}=\frac{\gamma_{N-1}}{\gamma_{N-2}}$, we get $E\left[ \frac{\beta_{N-1}^2}{\eta_{N-1}} \right] < 1$. 

From~\eqref{eq:def_gamma_ex} we obtain that 
\begin{equation*}
E\left[ \left(Y_{N-1}-\frac12\right)\frac{\beta_{N-1}^2}{\eta_{N-1}}
-Y_{N-1}\beta_{N-1}+\frac12  \right] = 0 .
\end{equation*}
Therefore, it follows from Lemma~\ref{lem:01062019a1} that for all $x,d\in\bbR$ it holds that
$\xi^*_{N-2}(x,d)=-x$, i.e., it is optimal to close the whole position at time $N-2$.

\smallskip
It remains to specify $\gamma_{N-1}$, $\gamma_N$ and $\beta_N$ such that
$E_{N-1}\left[\frac{\beta_N^2}{\eta_N} \right] < 1$, 
$\left(1-E_{N-1}\left[\frac{\beta_N^2}{\eta_N} \right]\right)^{-1}
\in L^{\infty-}$ 
and that \eqref{eq:strictlynegativelycorr} is satisfied. To this end let $\gamma_{N-1}$ be $\{ \frac{1}{2}, 1 \}$-valued with $P\left( \gamma_{N-1} = 1 \right) = p \in (0,1)$ and  $P\left( \gamma_{N-1} = \frac{1}{2} \right) = 1- p$. Define $\gamma_N = \gamma_{N-1}^2$ and 
$\beta_N=\frac{\gamma_{N-1}}2$.

Note that $\beta_N$ is $(0,1)$-valued,
$\gamma_{N-1},\gamma_N>0$ and $\eta_N = \gamma_{N-1}$. Observe further that $E_{N-1}\left[ \beta_{N}^2 \right] =
\frac{\gamma_{N-1}^2}4
< \gamma_{N-1}$ and hence 
$E_{N-1}\left[\frac{\beta_N^2}{\eta_N} \right]
=\frac{\gamma_{N-1}}4\le\frac14< 1$
and
$\left(1-E_{N-1}\left[\frac{\beta_N^2}{\eta_N} \right]\right)^{-1}
\in L^{\infty-}$. 
By definition of $\beta_{N-1}$, we have $E_{N-1}\left[ \beta_N \right] = \frac{\gamma_{N-1}}{2}$. 
It therefore holds 
\begin{equation*}
	Y_{N-1} = \frac{1}{2} \,\frac{ E_{N-1}\left[ \eta_N \right] - \left( E_{N-1}\left[ \beta_N \right]  \right)^2 }{ 1 - 2 E_{N-1}\left[ \beta_N \right] + E_{N-1}\left[ \eta_N \right] }
= \frac{1}{2} \left( \gamma_{N-1} - \frac{\gamma_{N-1}^2}{4} \right).
\end{equation*}
Since 
\begin{equation*}
E\left[ \gamma_{N-1}\right] = p + \frac{1}{2} (1-p), \, 
E\left[ \gamma_{N-1}^2 \right] = p + \frac{1}{4} (1-p) \text{ and }
E\left[ \frac{1}{\gamma_{N-1}} \right] = p + 2(1-p),
\end{equation*}
we obtain~\eqref{eq:strictlynegativelycorr}:
\begin{equation*}
\begin{split}
& E\left[ \frac{Y_{N-1}}{\gamma_{N-1}} \right] - E\left[ Y_{N-1} \right] E\left[ \frac{1}{\gamma_{N-1}} \right] \\
& = \frac{1}{2}\left( 1-\frac{1}{4} E\left[ \gamma_{N-1} \right] \right) - \frac{1}{2} \left( E\left[ \gamma_{N-1} \right] - \frac{1}{4} E\left[ \gamma_{N-1}^2 \right] \right) E\left[ \frac{1}{\gamma_{N-1}} \right] =  \frac{1}{2} \frac{5}{16} \left( p^2 - p\right) < 0.
\end{split}
\end{equation*}
For completeness, we notice
that the assumptions of Theorem~\ref{thm:mainres}
which were not explicitly discussed above
(e.g., $\beta_n,\gamma_n,\frac1{\gamma_n}\in L^{\infty-}$)
are trivially satisfied. 
\end{ex}

Finally, Table~\ref{tab:09062020a1} summarizes several mentioned qualitative effects and compares our findings with the literature.
In this table, the term ``one-directional trading'' refers to settings, where the trading is constrained in one direction,
and the term ``two-directional trading'' refers to settings, where, like in the present paper, trading in both directions is allowed.

\newlength{\myscale}
\newlength{\tinywidth}
\newlength{\smallwidth}
\newlength{\largewidth}
\setlength{\myscale}{0.92\textwidth}
\setlength{\tinywidth}{0.033\myscale}
\setlength{\largewidth}{0.2\myscale}
\setlength{\smallwidth}{0.333\myscale}
\addtolength{\smallwidth}{-\largewidth}

\begin{table}[!htb]
\caption{We compare different settings from the viewpoint of whether premature closure is possible. Columns 2--4 briefly describe the settings, while columns 5--6 provide the answers and references to the proofs. It is worth noting that setting~1 is studied in \cite{bank2014optimal} and \cite{fruth2014optimal}, setting~2 in \cite{fruth2019optimal} and setting~3 in \cite{alfonsi2014optimal} (although the question of closing the position in one go is not explicitly considered in \cite{alfonsi2014optimal}, hence the reference to our paper in the last column).}\label{tab:09062020a1}
\begin{tabular}{|p{\tinywidth}|p{\largewidth}|p{\largewidth}|p{\smallwidth}|p{\smallwidth}|p{\largewidth}|}
\hline
&
One- or two-directional trading?&
$\beta$ and $\gamma$ deterministic or stochastic?&
$\beta$ $(0,1)$-valued?&
Premature closure possible?&
Reason
\\
\hline
1 & one-directional & deterministic & yes & no & Proposition~5.6 in \cite{fruth2014optimal}
\\
\hline
2 & one-directional & stochastic & yes & no & Proposition~A.3 in \cite{fruth2019optimal}
\\
\hline
3 & two-directional & deterministic & yes & no & Proposition~\ref{rem:close_immed} in this paper
\\
\hline
4 & two-directional & deterministic & no & yes & Example~\ref{ex:01062019a1} in this paper
\\
\hline
5 & two-directional & stochastic & yes & yes & Example~\ref{ex:13072019a1} in this paper
\\
\hline
6 & two-directional & stochastic & no & yes & trivial (follows from 4 or~5)
\\
\hline
\end{tabular}
\end{table}

\appendix
\section{Proof of Theorem~\ref{thm:mainres}}
\label{a:proof_mainres}
\begin{proof}
We first prove \eqref{eq:sol_valfct} and~\eqref{eq:optstrat}
by backward induction on $n \in \Z \cap (-\infty,N]$. For the base case $n=N$ note that for all $x,d \in \R$ it holds that $V_N(x,d)=-(d-\frac{\gamma_N}{2}x)x=\frac{\gamma_N}{2}\left(\frac{d}{\gamma_N}-x\right)^2-\frac{d^2}{2\gamma_N}$. In particular, it holds that $Y_N=\frac{1}{2}>0$. Besides that, we have that for all $x,d\in \R$, $\xi_N^*(x,d)=-x$ is the unique element of $\mathcal A_N(x)$ and hence optimal.

Consider now the induction step $\Z \cap (-\infty,N] \ni n+1\to n \in \Z \cap (-\infty,N-1]$. For all $x,d \in \R$ let 
\begin{equation}\label{eq:abc}
\begin{split}
a_n&=\gamma_n E_n\left[\frac{Y_{n+1}}{\eta_{n+1}}\left(\beta_{n+1}-\eta_{n+1}\right)^2 +\frac{1}{2}\left(1-\frac{\beta_{n+1}^2}{\eta_{n+1}}\right)\right],\\
b_n(x,d)&=E_n\left[d\left(1-\frac{\beta_{n+1}^2}{\eta_{n+1}}\right)+2Y_{n+1}\left(\frac{\beta_{n+1}}{\eta_{n+1}}-1\right)\left(\beta_{n+1}d-\gamma_{n+1}x\right) \right],\\
c_n(x,d)&=E_n\left[\frac{Y_{n+1}}{\gamma_{n+1}}\left(\beta_{n+1}d-\gamma_{n+1}x\right)^2 -\frac{d^2\beta_{n+1}^2}{2\gamma_{n+1}}\right].
\end{split}
\end{equation}
Note that for all $x,d\in\R$ the random variables $a_n, b_n(x,d)$ and $c_n(x,d)$ are well-defined and finite because all factors and summands are in $\Linfm$ due to the assumption that for all $k\in \Z\cap(-\infty,N]$, it holds $\beta_k, \gamma_k, \frac{1}{\gamma_k} \in \Linfm$, and the induction hypothesis $0<Y_{n+1}\leq\frac{1}{2}$. Furthermore, the induction hypothesis that $Y_{n+1}>0$ and the assumption that $E_n\left[\frac{\beta^2_{n+1}}{\eta_{n+1}}\right]<1$ ensure that $a_n>0$. It follows from the Cauchy-Schwarz inequality and the assumption $E_n\left[\frac{\beta^2_{n+1}}{\eta_{n+1}}\right]<1$ that
\begin{equation}
\begin{split}
Y_n&=E_n[\eta_{n+1}Y_{n+1}]-\frac{\left(E_n\left[\sqrt{\eta_{n+1}Y_{n+1}}\sqrt{\frac{Y_{n+1}}{\eta_{n+1}}}\left(\beta_{n+1}-\eta_{n+1}\right) \right]\right)^2}{a_n/\gamma_n}\\
&\ge E_n[\eta_{n+1}Y_{n+1}]-\frac{E_n\left[\eta_{n+1}Y_{n+1}\right]E_n\left[\frac{Y_{n+1}}{\eta_{n+1}}\left(\beta_{n+1}-\eta_{n+1}\right)^2 \right]}{a_n/\gamma_n}\\
&=\frac{E_n[\eta_{n+1}Y_{n+1}]}{a_n/\gamma_n}\,
\frac{1}{2}
\left(1-E_n\left[\frac{\beta^2_{n+1}}{\eta_{n+1}}\right]\right)>0.
\end{split}
\end{equation}
To establish that $Y_n\leq\frac{1}{2}$, note that  
\begin{equation}
\begin{split}
\frac{1}{\gamma_n}c_n(1,0) & = \frac{1}{\gamma_n} E_n\left[ Y_{n+1}\gamma_{n+1} \right] = E_n\left[ \eta_{n+1} Y_{n+1} \right], \\
\frac{1}{\gamma_n} b_n(1,0) & = \frac{1}{\gamma_n} E_n\left[ -2 Y_{n+1}\left( \frac{\beta_{n+1}}{\eta_{n+1}} - 1\right) \gamma_{n+1} \right]  = -2  E_n\left[ Y_{n+1}\left( \beta_{n+1}- \eta_{n+1} \right) \right].
\end{split}
\end{equation}
This together with the induction hypothesis $Y_{n+1}\leq \frac{1}{2}$ implies that 
\begin{equation}\label{eq:Yupb}
\begin{split}
Y_n 
& = \frac{1}{\gamma_n} \left( c_n(1,0) - \frac{ b_n(1,0)^2 }{4 a_n} \right) 
 \leq \frac{1}{\gamma_n} \left( c_n(1,0) - \frac{ b_n(1,0)^2 }{4 a_n} + a_n \left( \frac{b_n(1,0)}{2 a_n} - 1 \right)^2 \right) \\
& = \frac{1}{\gamma_n} \left( a_n - b_n(1,0) + c_n(1,0) \right) 
= \frac{1}{2} + E_n\left[ \frac{\beta_{n+1}^2}{\eta_{n+1}} \left( Y_{n+1} - \frac{1}{2} \right) \right] 
 \leq \frac{1}{2}.
\end{split}
\end{equation}
Let $\mathcal S_n$ be the set of all real-valued $\mathcal F_n$-measurable random variables $\xi \in \Ltwop$. The dynamic programming principle
and the induction hypothesis ensure for all $x,d \in \R$ that\footnote{Note that our assumption that for all $k\in \Z\cap(-\infty,N]$ it holds $\beta_k, \gamma_k, \frac{1}{\gamma_k} \in \Linfm$  and the fact that $Y_{n+1}$ is bounded ensure that all conditional expectations in \eqref{eq:gen_dpp} are well-defined and that we can move any $\xi\in \cS_n$, $\gamma_n$ and $\frac{1}{\gamma_n}$ outside the conditional expectations. This reasoning also applies to other calculations in this proof.}
\begin{equation}\label{eq:gen_dpp}
\begin{split}
& V_n(x,d)\\
&=\essinf_{\xi \in \mathcal S_n}\left[  \left(d+\frac{\gamma_n}{2}\xi\right)\xi + E_n\left[V_{n+1}(x+\xi,(d+\gamma_{n}\xi)\beta_{n+1} )\right]\right]\\
&=\essinf_{\xi \in \mathcal S_n}\left[  \left(d+\frac{\gamma_n}{2}\xi\right)\xi + E_n\left[\frac{Y_{n+1}}{\gamma_{n+1}}\left((d+\gamma_{n}\xi)\beta_{n+1}-\gamma_{n+1}(x+\xi)\right)^2-\frac{(d+\gamma_{n}\xi)^2\beta_{n+1}^2}{2\gamma_{n+1}} \right]\right]\\
&=\essinf_{\xi \in \mathcal S_n}\left[  \left(d+\frac{\gamma_n}{2}\xi\right)\xi + E_n\left[\gamma_{n+1}Y_{n+1}\left(\left(\frac{\beta_{n+1}\gamma_{n}}{\gamma_{n+1}}-1\right)\xi+\frac{d\beta_{n+1}}{\gamma_{n+1}}-x\right)^2-\frac{(d+\gamma_{n}\xi)^2\beta_{n+1}^2}{2\gamma_{n+1}} \right]\right]\\
&=\essinf_{\xi \in \mathcal S_n}\left[a_n\xi^2+b_n(x,d)\xi+c_n(x,d)\right].
\end{split}
\end{equation}
For all $x,d \in \R$ we find $\xi^*_n(x,d)=-\frac{b_n(x,d)}{2a_n}$ to be the unique minimizer of $\xi \mapsto a_n\xi^2+b_n(x,d)\xi+c_n(x,d)$. Observe further that for all $x,d\in \R$ it holds that 
\begin{equation}
b_n(x,d)=\frac{2da_n}{\gamma_n}-2E_n\left[Y_{n+1}\left(\beta_{n+1}\gamma_n-\gamma_{n+1}\right) \right]\left(x-\frac{d}{\gamma_n}\right),
\end{equation}
which yields the representation of $\xi^*_n(x,d)$ in~\eqref{eq:optstrat}.
Clearly, for all $x,d\in\bbR$ the random variable
$\xi^*_n(x,d)$ is $\cF_n$-measurable.
It remains to verify that for all $x,d\in\bbR$
we have $\xi^*_n(x,d)\in\Linfm$.

To show this we verify first that 
\begin{equation}\label{eq:ratioinallLp}
\frac{ E_n\left[ Y_{n+1}\left(\beta_{n+1}-\eta_{n+1}\right) \right]  }{E_n\left[ \frac{Y_{n+1}}{\eta_{n+1}}\left(\beta_{n+1}-\eta_{n+1}\right)^2+\frac{1}{2}\left(1-\frac{\beta_{n+1}^2}{\eta_{n+1}}\right)  \right]} \in \Linfm.
\end{equation}
We have $\eta_{n+1} \in \Linfm$ as $\eta_{n+1}$ is the product of the two $\Linfm$-variables $\gamma_{n+1}$ and $\frac{1}{\gamma_{n}}$. Furthermore, we have that $\beta_{n+1} \in \Linfm$ by assumption and that $Y_{n+1}$ is bounded due to the induction hypothesis. Hence, by the Minkowski inequality, it holds that 
\begin{equation}
\left(E\left[ \lvert Y_{n+1}\left(\beta_{n+1}-\eta_{n+1}\right) \rvert^p \right] \right)^\frac{1}{p}\leq \left(E\left[ \lvert Y_{n+1}\beta_{n+1} \rvert^p \right] \right)^\frac{1}{p} + \left(E\left[ \lvert Y_{n+1}\eta_{n+1} \rvert^p \right] \right)^\frac{1}{p} < \infty
\end{equation} 
for every $p\in [1,\infty)$, so that 
\begin{equation}\label{eq:nominallLp}
E_n\left[ Y_{n+1}\left(\beta_{n+1}-\eta_{n+1}\right) \right] \in \Linfm.
\end{equation} 
Next we recall that $\frac1{\alpha_n}\in\Linfm$,
where $\alpha_n=1-E_n\left[\frac{\beta^2_{n+1}}{\eta_{n+1}}\right]$,
which implies
\begin{equation}\label{eq:afromY}
\frac{1}{E_n\left[ \frac{Y_{n+1}}{\eta_{n+1}}\left(\beta_{n+1}-\eta_{n+1}\right)^2+\frac{1}{2}\left(1-\frac{\beta_{n+1}^2}{\eta_{n+1}}\right)  \right]} \in \Linfm,
\end{equation}
as the random variable in~\eqref{eq:afromY}
is positive and smaller than $\frac2{\alpha_n}$.
Together with~\eqref{eq:nominallLp} this establishes~\eqref{eq:ratioinallLp}.
Now \eqref{eq:optstrat} and~\eqref{eq:ratioinallLp}
imply that $\xi^*_n(x,d)\in\Linfm$ for all $x,d\in\bbR$,
as $x$ and $d$ are deterministic and $\frac{1}{\gamma_n}\in\Linfm$.

By inserting the optimal trade size $\xi^*_n(x,d)=-\frac{b_n(x,d)}{2a_n}$ into \eqref{eq:gen_dpp}, we obtain for all $x,d \in \R$ that
\begin{equation}\label{eq:quadformV}
V_n(x,d)=-\frac{b_n(x,d)^2}{4a_n}+c_n(x,d).
\end{equation}
The dynamic programming principle ensures for all $x,d,h \in \R$ that
\begin{equation}\label{eq:prep_deriv_V}
\begin{split}
&V_n(x,d)-\left(d+\frac{\gamma_n}{2}h\right)h\\
&=\essinf_{\xi \in \mathcal S_n}\left[ \left(d+\frac{\gamma_n}{2}\xi\right)\xi-\left(d+\frac{\gamma_n}{2}h\right)h + E_n\left[V_{n+1}(x+\xi,(d+\gamma_{n}\xi)\beta_{n+1} )\right]\right]\\
&=\essinf_{ \xi \in \mathcal S_n}\left[  \left(d+\frac{\gamma_n}{2}(\xi+h)\right)( \xi-h) + E_n\left[V_{n+1}(x+ \xi,(d+\gamma_n\xi)\beta_{n+1} )\right]\right]\\
&=\essinf_{\tilde \xi \in \mathcal S_n}\left[  \left(d+\gamma_nh+\frac{\gamma_n}{2}\tilde \xi\right)\tilde \xi + E_n\left[V_{n+1}(x+h+\tilde \xi,(d+\gamma_n(h+\tilde \xi))\beta_{n+1} )\right]\right]\\
&=V_n(x+h,d+\gamma_nh).
\end{split}
\end{equation}
Note that by \eqref{eq:quadformV} and~\eqref{eq:abc} it holds that for almost all $\omega$, $V_n$ is a quadratic function in $(x,d)\in \R^2$ with $V_n(0,0)=0$. In particular, derivatives in the argumentation below exist.  Equation~\eqref{eq:prep_deriv_V} 
implies for all $x,d\in \R$ that
\begin{equation}
(\partial_x V_n)(x,d)+\gamma_n(\partial_d V_n)(x,d)\leftarrow \frac{V_n(x+h,d+\gamma_nh)-V_n(x,d)}{h}=-\left(d+\frac{\gamma_n}{2}h\right)\to -d
\end{equation}
as $h\to 0$. 
Consequently, 
we obtain that 
\begin{equation}\label{eq:secderV}
(\partial^2_{xx}V_n)(0,0)+\gamma_n (\partial^2_{dx}V_n)(0,0)=0\quad \text{and} \quad (\partial^2_{xd}V_n)(0,0)+\gamma_n(\partial^2_{dd}V_n)(0,0)=-1.
\end{equation} 
This, together with \eqref{eq:quadformV} and~\eqref{eq:abc}, proves that 
\begin{equation}\label{eq:quadfromV2}
\begin{split}
V_n(x,d)&=\frac{(\partial^2_{xx}V_n)(0,0)}{2}x^2+[(\partial^2_{dx}V_n)(0,0)]xd+\frac{(\partial^2_{dd}V_n)(0,0)}{2}d^2\\
&=\frac{(\partial^2_{xx}V_n)(0,0)}{2}\left(\frac{d}{\gamma_n}-x\right)^2-\frac{d^2}{2\gamma_n}.
\end{split}
\end{equation}
Moreover, it follows from \eqref{eq:quadformV} that
\begin{equation}
\frac{(\partial^2_{xx}V_n)(0,0)}{2}=E_n[\gamma_{n+1}Y_{n+1}]-\frac{\left(E_n\left[Y_{n+1}\left(\beta_{n+1}\gamma_n-\gamma_{n+1}\right) \right]\right)^2}{a_n}=\gamma_nY_n.
\end{equation}
This together with \eqref{eq:quadfromV2} proves that $V_n(x,d)=\frac{Y_n}{\gamma_n}\left(d-x\gamma_n\right)^2-\frac{d^2}{2\gamma_n}$ for all $x,d\in \R$.

In the remainder of the proof we show that for all $n\in \Z \cap (-\infty,N-1]$, $x,d\in\R$ the process $\xi^*=\left(\xi_k^* \right)_{k\in\{n,\ldots,N\}}$ recursively defined by \eqref{eq:optprocess} is in $\mathcal A_n(x)$. To this end we show by (forward) induction on $k \in \{n,\ldots,N\}$ that $\xi_k^*$ is $\mathcal{F}_k$-measurable and belongs to $\Ltwop$ for all $k \in \{n,\ldots,N\}$.

For the base case $k=n$ we have $\xi_n^*=\xi_n^*(x,d)$ which is already known to be in $\mathcal S_n$ for all $x,d \in \R$, i.e., $\xi_n^*$ is $\mathcal F_n$-measurable and $\xi_n^* \in \Ltwop$.

Continue with the induction step $\{n,\ldots,N-2\} \ni k-1 \to k \in \{n+1,\ldots,N-1\}$. Now, the optimal trade size $\xi_k^*$ at time $k$ depends on the current value of the position path $X^*_{k-1}=x+\sum_{i=n}^{k-1} \xi_i^*$ and the current deviation $D^*_{k-}$. By induction on $k$, it holds that $\xi_i^*$ is in $\Ltwop$ and $\mathcal F_i$-measurable for all $i \in \{n,\ldots,k-1\}$. This yields that $X^*_{k-1}$ belongs to $\Ltwop$ and is $\mathcal F_{k}$-measurable. Furthermore, the fact that $\xi_i^* \in \Ltwop$  for all $i \in \{n,\ldots,k-1\}$ allows us to use Remark \ref{rem:DinL2+} to obtain that $D^*_{k-} \in \Ltwop$ as well. Besides that, it can be seen from \eqref{eq:formD} that $D^*_{k-}$ is $\mathcal F_k$-measurable given that $\xi_i^*$ is $\mathcal F_i$-measurable for all $i \in \{n,\ldots,k-1\}$ and $\beta$ and $\gamma$ are adapted processes. Hence, 
\begin{equation}\label{eq:xijXD}
\xi^*_k(X^*_{k-1},D^*_{k-})=\frac{E_k\left[Y_{k+1}\left(\beta_{k+1}-\eta_{k+1}\right) \right]}{E_k\left[\frac{Y_{k+1}}{\eta_{k+1}}\left(\beta_{k+1}-\eta_{k+1}\right)^2+\frac{1}{2}\left(1-\frac{\beta_{k+1}^2}{\eta_{k+1}}\right) \right]}\left(X^*_{k-1}-\frac{D^*_{k-}}{\gamma_k}\right)-\frac{D^*_{k-}}{\gamma_k} 
\end{equation}
is $\mathcal F_k$-measurable. To prove that $\xi_k^*(X^*_{k-1},D^*_{k-}) \in \Ltwop$, note that by the Minkowski inequality, it suffices to show that each summand is in $\Ltwop$. To begin with, it holds that $\frac{D^*_{k-}}{\gamma_k} \in \Ltwop$ due to Lemma \ref{lem:inL2+} and $\frac{1}{\gamma_{k}}\in \Linfm$. It further follows with \eqref{eq:ratioinallLp} and Lemma \ref{lem:inL2+} that 
\begin{equation}
\frac{E_k\left[Y_{k+1}\left(\beta_{k+1}-\eta_{k+1}\right) \right]}{E_k\left[\frac{Y_{k+1}}{\eta_{k+1}}\left(\beta_{k+1}-\eta_{k+1}\right)^2+\frac{1}{2}\left(1-\frac{\beta_{k+1}^2}{\eta_{k+1}}\right) \right]} \frac{D^*_{k-}}{\gamma_k} \in \Ltwop.
\end{equation}
Similarly,
\begin{equation}
\frac{E_k\left[Y_{k+1}\left(\beta_{k+1}-\eta_{k+1}\right) \right]}{E_k\left[\frac{Y_{k+1}}{\eta_{k+1}}\left(\beta_{k+1}-\eta_{k+1}\right)^2+\frac{1}{2}\left(1-\frac{\beta_{k+1}^2}{\eta_{k+1}}\right) \right]} X^*_{k-1} \in \Ltwop.
\end{equation}
This finishes the induction step $\{n,\ldots,N-2\} \ni k-1 \to k \in \{n+1,\ldots,N-1\}$.

Finally, it follows that for all $x,d \in \R$ it also holds true that $\xi_N^*=-X^*_{N-1}=-x-\sum_{i=n}^{N-1} \xi_i^*$ is in $\Ltwop$ and $\mathcal F_N$-measurable. As a result, $\xi^* \in \mathcal A_n(x)$ for all $x,d\in \R$.

The proof of Theorem~\ref{thm:mainres} is thus completed.
\end{proof}

\section{Integrability}\label{b:integrability}

\begin{lemma}
Let $X,Y \in \Linfm$. Then, $XY$ also belongs to $\Linfm$.
\end{lemma}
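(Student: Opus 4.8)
The plan is to fix an arbitrary exponent $q\in[1,\infty)$ and show $XY\in L^q$; since $\Linfm=\bigcap_{p\in[1,\infty)}L^p(\Omega,\mathcal F,P)$, doing this for every $q$ gives the claim. The key move is to decouple the two factors via the Cauchy--Schwarz inequality: writing $|XY|^q=|X|^q|Y|^q$, we obtain
$$
E\big[|XY|^q\big]\le\Big(E\big[|X|^{2q}\big]\Big)^{1/2}\Big(E\big[|Y|^{2q}\big]\Big)^{1/2}.
$$
The only point to observe is that $2q$ again lies in $[1,\infty)$, so $X,Y\in\Linfm$ forces $E[|X|^{2q}]<\infty$ and $E[|Y|^{2q}]<\infty$; hence $E[|XY|^q]<\infty$. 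As $q\in[1,\infty)$ was arbitrary, $XY\in\bigcap_{q\in[1,\infty)}L^q=\Linfm$.

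Concretely, first I would recall the definition of $\Linfm$ and reduce to verifying finiteness of $E[|XY|^q]$ for a single fixed $q\ge1$. Then I would invoke Cauchy--Schwarz (equivalently, Hölder's inequality with conjugate exponents $2$ and $2$) applied to the pair $|X|^q$ and $|Y|^q$, and conclude by using that $\Linfm$, being an intersection over all of $[1,\infty)$, contains $X$ and $Y$ at the larger exponent $2q$ as well.

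There is essentially no obstacle: the whole argument is one application of Cauchy--Schwarz together with the trivial closure of $\Linfm$ under passing to a larger exponent. If one prefers to avoid Cauchy--Schwarz entirely, the elementary bound $|XY|^q\le\tfrac12\big(|X|^{2q}+|Y|^{2q}\big)$ yields the same conclusion by linearity of the expectation; I would use whichever formulation reads most cleanly in the appendix.
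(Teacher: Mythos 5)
Your proof is correct and follows the paper's own argument essentially verbatim: fix $p\in[1,\infty)$, apply the Cauchy--Schwarz inequality to $|X|^p$ and $|Y|^p$, and use that $X,Y\in L^{2p}$ by the definition of $\Linfm$. The alternative bound $|XY|^q\le\tfrac12\bigl(|X|^{2q}+|Y|^{2q}\bigr)$ you mention would also work, but it is only a cosmetic variation on the same idea.
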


\begin{proof}
Let $p\in[1,\infty)$. The Cauchy-Schwarz inequality yields 
\begin{equation}
E\left[ \lvert XY \rvert^p \right] = E\left[ \lvert X\rvert^p \lvert Y\rvert^p \right] \leq \left( E\left[ \lvert X\rvert^{2p} \right] \right)^\frac{1}{2} \cdot \left( E\left[ \lvert Y\rvert^{2p} \right] \right)^\frac{1}{2} < \infty
\end{equation}
since $X, Y \in L^{2p}$. Therefore, $XY \in L^p$. This is true for every $p \in [1,\infty)$, hence $XY \in \Linfm$.
\end{proof}

\begin{lemma}\label{lem:inL2+}
Let $X \in \Linfm$ and $Y\in \Ltwop$. Then, $X Y \in \Ltwop$.
\end{lemma}

\begin{proof}
Since $Y \in \Ltwop$, there exists $\varepsilon>0$ such that $Y \in L^{2+\varepsilon}$. 
Let $r:=2+\frac{\varepsilon}{2}$ and $q:= \frac{2+\varepsilon}{r}$. It holds that $q>1$ and $Y\in L^{rq}$. Define $p:=\frac{q}{q-1}$ and observe that $X \in L^{rp}$. By the H{\"o}lder inequality,
\begin{equation}
E\left[ \lvert XY \rvert^r \right] = E\left[ \lvert X\rvert^r \lvert Y\rvert^r \right] \leq 
	\left( E\left[ \lvert X\rvert^{rp} \right] \right)^\frac{1}{p} \cdot \left( E\left[ \lvert Y\rvert^{rq} \right] \right)^\frac{1}{q}<\infty.
\end{equation}
This proves that $XY \in \Ltwop$.
\end{proof}

\bigskip\noindent
\textbf{Acknowledgement:}
We thank the Associate Editor and two anonymous referees for their constructive comments and suggestions that helped us improve the manuscript.

\bibliographystyle{abbrv}
\bibliography{literature}
\end{document}